\newcommand\connsym[2]{\{ \substack{#1\\#2} \}} 
\newcommand{\Lie}{\mathcal{L}_V} 
\definecolor{codegreen}{rgb}{0,0.4,0}
\definecolor{codegray}{rgb}{0.3,0.3,0.3}
\definecolor{codepurple}{rgb}{0.38,0,0.42}
\lstdefinestyle{mystyle}{
	commentstyle=\color{codegreen},
	keywordstyle=\color{magenta},
	numberstyle=\tiny\color{codegray},
	stringstyle=\color{codepurple},
	basicstyle=\footnotesize,
	breakatwhitespace=false,
	breaklines=true,
	captionpos=b,
	keepspaces=true,
	numbers=left,
	numbersep=5pt,
	showspaces=false,
	showstringspaces=false,
	showtabs=false,
	tabsize=4
}
\newcounter{ThmCounter}
\newtheorem{thm}{Theorem}[section]
\newtheorem{lemma}[thm]{Lemma}
\newtheorem{remark}{Remark}[section]
\newtheorem{definition}[thm]{Definition}
\tikzset{
    boson/.style={draw,ultra thick},
    fermion/.style={draw,ultra thick,decorate,decoration=snake},
    vertex/.style={circle,thick,draw=black,fill=black!10,minimum size=18pt,inner sep=3pt},
    empty/.style={circle,thick,draw=black,minimum size=18pt,inner sep=3pt},
    outin/.style={rectangle,thick,minimum size=18pt,inner sep=3pt},
}
\newcommand{\oset}[3][0ex]{%
	\mathrel{\mathop{#3}\limits^{
			\vbox to#1{\kern-2\ex@
				\hbox{$\scriptstyle#2$}\vss}}}}
\newcommand{\restr}[2]{{
		\left.\kern-\nulldelimiterspace
		#1 
		\vphantom{\big|}\,
		\right|_{#2}
	}}
\begin{document}

\nomenclature[01]{$M$}{Manifold}
\nomenclature[02]{$g_{ab}$}{Metric field}
\nomenclature[03]{$\Gamma^{a}_{bc}$}{Connection field}
\nomenclature[04]{$\nabla$}{Covariant derivative associated with $\nabla$}
\nomenclature[05]{$\mathcal{D}$}{Levi-Civita covariant derivative associated with metric $g$}
\nomenclature[06]{$Q_{abc}$}{Non-metricity tensor $Q = \nabla g$}
\nomenclature[07]{$L^{a}{}_{bc}$}{Disformation tensor, $L^a{} _{bc}=\frac{1}{2} Q^a{}_{bc}-Q_{(b}{}^a{}_{c)}$}
\nomenclature[08]{$\connsym{a}{bc}$}{Levi-Civita Christoffel symbols}
\nomenclature[09]{$\mathcal{R}(g)$}{Curvature associated with Levi-Civita connection from metric $g$}
\nomenclature[10]{$R(\Gamma)$}{Curvature associated with connection $\Gamma$}
\nomenclature[11]{$\bm{L}$}{Lagrangian $n$-form}
\nomenclature[12]{$\varepsilon$}{Completely antisymmetric symbol (Levi-Civita symbol)}
\nomenclature[13]{$\bm{\epsilon}$}{Volume form on manifold, $\bm{\epsilon} = \sqrt{-g}\varepsilon$}
\nomenclature[14]{$\mathcal{L}$}{Lagrangian density, $\bm{L} = \varepsilon \mathcal{L}$}
\nomenclature[15]{$\bm{Q}_N$}{Noether charge}
\nomenclature[16]{$\bm{J}$}{Noether current}
\nomenclature[17]{$\bm{\omega}$}{Presymplectic form}
\nomenclature[18]{$\bm{\Omega}$}{Symplectic form}
\nomenclature[19]{$P^{a}{}_{bc}$}{Non-metricity conjugate, defined by $\sqrt{-g} P^a{}_{bc} = \frac{\partial \mathcal{L}}{\partial Q_{a}{}^{bc}}$}

\title{Wald's entropy in Coincident General Relativity}

\author{Lavinia Heisenberg}\email{lavinia.heisenberg@phys.ethz.ch}
\author{Simon Kuhn}\email{simkuhn@phys.ethz.ch}
\author{Laurens Walleghem}\email{laurensw@econ.uio.no}
\affiliation{\small
\mbox{Institute for Theoretical Physics, ETH Z\"{u}rich, 8093 Z\"{u}rich, Switzerland}}
\affiliation{\small
\mbox{Institute for Theoretical Physics, Heidelberg University, Philosophenweg 16, 69120 Heidelberg, Germany}}

\begin{abstract}
The equivalence principle and its universality enables the geometrical formulation of gravity. In the standard formulation of General Relativity \'a la Einstein, the gravitational interaction is geometrized in terms of the spacetime curvature. However, if we embrace the geometrical character of gravity, two alternative, though equivalent, formulations of General Relativity emerge in flat spacetimes, in which gravity is fully ascribed either to torsion or to non-metricity. The latter allows a much simpler formulation of General Relativity oblivious to the affine spacetime structure, the Coincident General Relativity. The entropy of a black hole can be computed using the Euclidean path integral approach, which strongly relies on the addition of boundary or regulating terms in the standard formulation of General Relativity. A more fundamental derivation can be performed using Wald's formula, in which the entropy is directly related to Noether charges and is applicable to general theories with diffeomorphism invariance. In this work we extend Wald's Noether charge method for calculating black hole entropy to spacetimes endowed with non-metricity. Using this method, we show that Coincident General Relativity with an improved action principle gives the same entropy as the well-known entropy in standard General Relativity. Furthermore the first law of black hole thermodynamics holds and an explicit expression for the energy appearing in the first law is obtained. 
\end{abstract}

\maketitle

\vspace{1cm}
\section{Introduction and overview}

A general spacetime is endowed with an arbitrary affine connection and a metric. Such spacetimes allow the formulation of gravity as a gauge theory of translations, in spirit with Yang-Mills theories. The field strengths of the connection give the curvature and the torsion of such a spacetime and the metric further allows the introduction of the non-metricity. Embracing fully the geometrical nature, the very same theory of General Relativity (GR) can be attributed either to curvature, torsion or non-metricity, giving rise to a geometrical trinity of gravity \cite{jimenez2019}.

One can derive the thermodynamical properties of a black hole from a grand partition function given by the path integral over the gravitational fields. In the standard formulation of General Relativity based on curvature, the computation of the Euclidean action of a Schwarzschild black hole strongly relies on the introduction of the Gibbons- Hawking-York (GHY) boundary term. In fact, the latter is also needed for a well-defined variational principle without having to impose extra-conditions on the normal derivatives of the metric on the boundary. The Euclidean path integral approach further needs the inclusion of a normalization term in order to obtain the right value for a Minkowski background. A more fundamental and direct derivation of the entropy of a black hole was introduced in \cite{Wald1993}, where the entropy is associated to the Noether charge of the horizon Killing field in a diffeomorphism invariant theory. The advantage is that no boundary terms nor normalization terms are needed and it is applicable to any theory with diffeomorphisms.

In this work we compute the entropy of a black hole using Wald's Noether charge method \cite{Wald1990,Wald1993,Wald1994,Wald1995} in a simpler formulation of GR based on non-metricity, the Symmetric Teleparallel Equivalent of GR (STEGR). For more details we refer the reader to the studies \cite{BeltranJimenez:2017tkd,Heisenberg18,Heisenberg_2019,jimenez2019,Heisenberg2020,HeisenbergL18}. In Wald's method for calculating black hole entropy one starts from diffeomorphism invariance, which leads to a current and a charge. Considering diffeomorphisms induced by a Killing vector field, the entropy is then defined as the integral of the charge at the bifurcation surface, or as argued after this definition of entropy, for a general cross section of the horizon. Furthermore the first law of black hole thermodynamics is derived. In the original papers, Wald's method was only applied to the standard formulation of GR equipped with the standard Levi-Civita connection. Here, we extend Wald's method for calculating the entropy to theories of gravity which involve an independent arbitrary connection. We mainly focus on the case of connections with zero curvature, zero torsion, but non-zero non-metricity, such as STEGR.

\vspace{0.5cm}

The remainder of the paper is organised as follows. In Section \ref{sec:preliminaries} we introduce some basic preliminary notions necessary for the rest of the work. In Section \ref{sec:wald} we extend Wald's method for calculating black hole entropy to theories with an independent torsion-free connection. In this section we will closely follow \cite{Wald1994}. In case ambiguities appear one can perform a reduction in phase space. Examples of reductions in phase space can be found in \cite{Wald1990}. An extension of the Noether charge calculation to theories with an independent connections which have non-zero torsion is presented in Appendix \ref{appendix:boundary_Palatini+torsion}. 
This can be applied to calculate entropy in Teleparallel Equivalent of GR (TEGR) for instance. In Section \ref{sec:STEGR} we calculate the entropy for STEGR. There we are free to add any exact form to the boundary, which vanishes exactly if the Lie derivative of the connection $\Gamma$ for the Killing vector field vanishes. This freedom is lost when performing a reduction of the naive phase space to the kinematically possible phase space. Taking into account this reduction of phase space, we end up with the same entropy for STEGR as the well-known classical entropy in GR from the Einstein-Hilbert action, as calculated in \cite{Wald1994}. Furthermore, we obtain a valid first law of black hole thermodynamics and an explicit expression for the energy appearing in the first law in STEGR. Fixing a gauge (namely choosing the coincident gauge), we see that the energy in STEGR in an asymptotically flat vacuum spacetime is the same as in GR, namely equal to the ADM mass. Finally, we summarize the main results in Section \ref{sec:summary}, where we also suggest topics for further research in this area.  

\vspace{0.5cm}

\section{Preliminaries: basic notions and conventions} \label{sec:preliminaries}
\subsection{Gauge theories of gravity: metric and independent connection}
In the literature the connection between Poincar\'e gauge theories and GR has been investigated since long. Choosing different gauge geometries, one can obtain different descriptions of GR. See for example the Geometrical Trinity in \cite{jimenez2019}. A standard way of formulating gauge descriptions of gravity uses a principal bundle setting, but the existence of a soldering form allows one to convert the bundle objects into objects on the spacetime. In this work we will not refer to bundle theory, but only consider all the objects as objects on the spacetime. We will use the abstract index notation introduced by Wald in \cite{Wald1990}; the tensor $T^{ab}{}_c$ is for example a $(2,1)$-tensor. Important to note is that one can always write all equations in abstract index notation, but this does not mean that there is coordinate invariance: for example the connection $\Gamma^a_{bc}$ is not a tensor in general. The notations and conventions used in this work are mainly the ones from \cite{Wald1994}, \cite{jimenez2019} and \cite{Heisenberg2020}. Following Wald's paper we put $k$-forms in bold: $\mathbf{\Theta}$. As we will be calculating entropy arising from a charge due to diffeomorphism invariance, let us briefly explain what diffeomorphism invariance means. A theory defined on a manifold $M$ is said to be diffeomorphic invariant if it is invariant under diffeomorphisms on $M$. If a theory is invariant under coordinate transformations of $M$, and does not have any fixed background fields, that is, all fields involved are dynamical, then a theory will be diffeomorphic invariant; the other way around can be more subtle, see for example \cite{pooley2015}.

\vspace{0.5cm}

On our spacetime $M$ we will consider the metric field $g$ and an independent connection $\Gamma^{a}_{bc}$ defining the covariant derivative $\nabla$. In the standard formulation of GR, the connection is uniquely determined by $g$, namely the Levi-Civita connection. Thus, the gravitational sector in our setting is described by a manifold $M$ and the independent fields $g$ and $\Gamma$, the metric and connection fields. If there is matter around, we should incorporate the coupling of this gravitational sector to matter, this is investigated for example in \cite{Heisenberg2020, Heisenberg_2019,jimenez2019}. However, here we will be interested in black hole entropy, and thus in solutions of gravity in vacuum, in the absence of matter fields. For the sake of clarity, let us distinguish the different covariant derivatives and derived quantities associated with the metric $g$ and connection $\Gamma$; the Levi-Civita connection associated with the metric will be denoted by $\mathcal{D}$, whereas the general connection associated to $\Gamma$ will be denoted by $\nabla$. The non-metricity tensor is defined by $Q_{abc} = \nabla_a g_{bc}$ and the torsion of the connection $\Gamma$ is defined by ${T^a}_{bc}=2{\Gamma^a}_{[bc]}$, see Tabel \ref{tabel:objectsgandgamma} for more notational conventions. We will use the convention where the covariant derivative acts on vectors and covectors through the connection symbols $\Gamma^a{}_{bc}$ by \begin{equation}
\begin{aligned}
\nabla_{b} V^{a} &=\partial_b V^{a}+\Gamma_{bc}^{a} V^{c}, \\
\nabla_{b} V_{a} &=\partial_b V_{a}-\Gamma_{b a}^{c} V_{c}.
\end{aligned}
\end{equation} A general connection can be decomposed into a Levi-Civita, a torsion and a non-metricity part: \begin{equation} \label{eq:connectiondecomp}
\Gamma_{b c}^{a}=\{ \substack{a\\bc} \}+K^a{}_{bc} + L^a{}_{bc},
\end{equation} where \begin{equation} \label{eq:relateLQ}
K^{a}{}_{b c}=\frac{1}{2} T^a{}_{bc}+T_{(b}{}^a{}_{c)}, \quad L^a{} _{bc}=\frac{1}{2} Q^a{}_{bc}-Q_{(b}{}^a{}_{c)}
\end{equation} are called the contortion and disformation tensors, arising respectively from the torsion and non-metricity. Here $\{ \substack{a\\bc} \}$ denotes the Levi-Civita Christoffel symbols. The curvature of the connection is then given by the Riemann tensor
\begin{equation}
    {R^a}_{bcd}=2\partial_{[c}{\Gamma^a}_{d]b}+2{\Gamma^a}_{[c|e|}{\Gamma^e}_{d]b}\ .
\end{equation}
GR, teleparallel and symmetric teleparallel theories are then obtained by requiring the vanishing of torsion and non-metricity, vanishing curvature and non-metricity, or vanishing curvature and torsion, respectively. The corresponding actions are then given by Lagrangians of either only Riemann, torsion, or non-metric tensors, while requiring the other two objects to vanish \cite{jimenez2019}. 
\\

\begin{table}[h]
\centering
\caption{Notation for objects related to the covariant derivatives associated to the metric $g$ and general connection $\Gamma$.}
\label{tabel:objectsgandgamma}
\begin{tabular}{lll}
  & $g$ & $\Gamma$ \\
  \toprule
 Covariant derivative  & $\mathcal{D}$ & $\nabla$ \\ 
 Curvature & $\mathcal{R}(g)$ & $R(\Gamma)$ (or simply $R$) \\ 
 torsion  & none & $T$ \\ non-metricity & none & $Q = \nabla g$ \\
\end{tabular}
\end{table}

\vspace{0.5cm}

\subsection{STEGR formulation of gravity}

The main theory which we will be interested in is STEGR, an equivalent of GR, where the connection is restricted to have zero curvature and zero torsion. From the decomposition of the connection, as given in \eqref{eq:connectiondecomp}, we see that the objects of interest are then the metric $g$ and the non-metricity tensor $Q$. Further details about such a construction for GR and beyond can be found for example in \cite{jimenez2019}. The most general even-parity second-order quadratic form of the non-metricity is given by \cite{BeltranJimenez:2017tkd,jimenez2019} \begin{equation}
\mathbb{Q}=\frac{c_{1}}{4} Q_{a b c} Q^{a b c}-\frac{c_{2}}{2} Q_{a b c} Q^{b a c}-\frac{c_{3}}{4} Q_{a} Q^{a}+\left(c_{4}-1\right) \tilde{Q}_{a} \tilde{Q}^{a}+\frac{c_{5}}{2} Q_{a} \tilde{Q}^{a},
\end{equation} where $Q^a = Q^{ab}{}_{b}$ and $\widetilde{Q}^a = Q_{b}{ }^{ b a}$. In STEGR we consider the quadratic form which fixes $c_i = 1$ for all $i$. For this quadratic form we have the following relation for a torsion-free connection: \begin{equation}
R=\mathcal{R}(g)+\mathbb{Q}+\mathcal{D}_{a}\left(Q^{a}-\widetilde{Q}^{a}\right),
\end{equation} so that we can relate actions with $\mathbb{Q}$ to the Einstein-Hilbert action which has $\mathcal{R}(g)$ as Lagrangian. The STEGR action is given by \begin{equation} \label{eq:STEGRpalatini}
\mathcal{S}_{\mathbb{Q}}=\int \mathrm{d}^{4} x\left[ -\frac{1}{2} \sqrt{-g} \mathbb{Q}+\lambda_{Ra}{}^{bcd} R^{a}{}_{bcd}+\lambda_{Ta}{}^{bc} T^{a}{}_{bc}\right],
\end{equation} where $\lambda_R$ and $\lambda_T$ are Lagrange multipliers setting the zero curvature and zero torsion constraints on the connection. Requiring the connection to be torsion- and curvature-free yields a connection of the following form \cite{jimenez2019}:
\begin{equation}
    \Gamma_{\mu \nu}^{\alpha}=\frac{\partial x^{\alpha}}{\partial \xi^{\lambda}} \partial_{\mu} \partial_{\nu} \xi^{\lambda}.
\end{equation}
The STEGR action can then be formulated in terms of the fields $g$ and $\xi$ by \begin{equation} \label{eq:STEGRnonpalatini}
    S = -\int \mathrm{d}^n x \frac{1}{2} \sqrt{-g} \text{ } \mathbb{Q}(g,\xi),
\end{equation} which is then equal to the Einstein-Hilbert action $S_{EH} = \int \mathrm{d}^n x \frac{1}{2}\sqrt{-g} \mathcal{R}(g)$ up to boundary terms. Making the choice $\xi^{\alpha} = x^{\alpha}$, where $x$ denote the coordinates, makes the connection $\Gamma$ vanish; this gauge choice is called the coincident gauge \cite{BeltranJimenez:2017tkd,jimenez2019}. The choice of STEGR in this gauge is also called Coincident General Relativity (CGR), in which case the action from STEGR (as given in \eqref{eq:STEGRnonpalatini}) is given by \begin{equation}
    \label{eq:CGR} S = \int \mathrm{d}^n x \frac{1}{2} \sqrt{-g} g^{cd} \big{(} \{ \substack{a \\bc} \} \{ \substack{b\\ad} \} - \{ \substack{a \\ba} \} \{ \substack{b \\cd} \}\big{)}.
\end{equation} An important point is that the latter action is only diffeomorphism invariant up to a boundary term \cite{jimenez2019}. As the goal of this report is calculating entropies using Wald's method where entropy arises from diffeomorphism invariance, we will not consider the gauged fixed version as in CGR but work in STEGR.

\vspace{0.5cm}

\section{Black hole entropy from Noether charge in Palatini formalism with torsion-free connection} \label{sec:wald} 
\subsection{Form of the symplectic potential and symplectic currents} \label{sec:Palatini_BT_sympl}
We denote by $\mathbf{L}$ the Lagrangian $n$-form which we integrate over in the action: \begin{equation} S = \int \mathbf{L}, \end{equation} thus $\mathbf{L} = \mathbf{L}(\phi)$ depends smoothly on the fields $\phi$ on $M$ which we consider, and $\mathbf{L}(\phi(x))$ is a $n$-form at $T_x M$ for every $x \in M$ \footnotetext{In standard formulation of GR for example the field which is being considered is the metric $g$ on $M$.}. We will consider variations $\delta \phi$ of our fields, where a variation is defined in the following way. Let $\phi(\lambda)$ be a one parameter family of fields on $M$, then the variation $\delta \phi$ is defined by \begin{equation}
    \delta \phi = \frac{\partial \phi(\lambda;x)}{\lambda} |_{\lambda = 0}.
\end{equation}  In the same way the variation of $L$ is defined as \begin{equation}
     \delta \mathbf{L} = \frac{\partial}{\partial \lambda} \mathbf{L} |_{\lambda = 0}. 
\end{equation}  Starting off in the same way as in Wald's paper, we vary the Lagrangian $\mathbf{L}$ in the action and separate the variation into the equations of motion $\mathbf{E}$ and a boundary term, leading to the definition of the boundary term $\mathbf{\Theta}$. 
\begin{definition}
Let $\mathbf{L}[\phi]$ denote the Lagrangian $n$-form, then the variation $\delta \mathbf{L}$ can be separated into equations of motion and a boundary:
\begin{equation} \label{eq:defTheta} \delta \mathbf{L} = \mathbf{E} \delta \phi + \mathrm{d} \mathbf{\Theta}(\phi, \delta \phi). \end{equation}  This defines the boundary term $\mathbf{\Theta}(\phi,\delta \phi)$ and the equations of motion $\mathbf{E}(\phi)$. A proof of this separation for theories with only the Levi-Civita connection can be found in \cite{Wald1994}, for the case of theories of gravity with independent connections see Lemma \ref{lemma:palatiniboundary} below. 
\end{definition}
\begin{remark} \label{remark:Y_ambiguity}
Notice that ambiguities in the definition of $\mathbf{\Theta}$ might arise, \begin{equation} \label{eq:waldambigtheta}
    \mathbf{\Theta} \rightarrow \mathbf{\Theta} + \delta \bm{\mu} + \mathrm{d}\mathbf{Y}(\phi,\delta \phi),
\end{equation}  obtained from the addition of an exact $n$-form $\mathrm{d}\bm{\mu}$ to $\mathbf{L}$, $\mathbf{L} \rightarrow \mathbf{L}+\mathrm{d}\bm{\mu}$, which leaves the equation of motion unaffected (and thus should not affect the dynamical content of the theory), and the freedom to add $\mathrm{d}\mathbf{Y}(\phi,\delta \phi)$ arises directly from the defining equation of $\mathbf{\Theta}$, namely \eqref{eq:defTheta}, see also p. 9 of \cite{Wald1990}. The form $\bm{Y}$ must be linear in the variations $\delta \phi$. In a theory involving only the Levi-Civita covariant derivative the entropy will be invariant under these ambiguities in the boundary term $\bm{\Theta}$, as proved in Proposition 4.1 and Theorem 6.1 in \cite{Wald1994}. However, as we will see, for theories with an independent connection one must carefully investigate these ambiguities; in the case of STEGR for example, the $\bm{Y}$-ambiguity only vanishes due to a reduction of the phase space. 
\end{remark}

Let us now prove \eqref{eq:defTheta}, from which we will also obtain a useful expression for the boundary term. If we want to refrain from fixing the connection the most appropriate approach is the Palatini formalism, where we work with an independent connection and a dynamical metric. We will only consider the case where the independent connection is torsion-free, with gravity action given by
\begin{equation} \label{eq:Palatiniaction1}
 \begin{split}
    S=\int d^4x \mathcal{L}\bigg( g_{ab}, &{R^a}_{bcd}, \nabla_{a_1}R^{a}_{bcd},\text{ } \ldots, \nabla_{(a_1}\ldots \nabla_{a_k)}R^{a}_{bcd},  Q_{abc}, \\ & \nabla_{a_1} Q_{abc},\text{ } \ldots, \nabla_{(a_1}\ldots \nabla_{a_l)}Q_{abc}, \Psi,\lambda\bigg) ,
\end{split}
\end{equation}
with the Lagrangian $\mathcal{L}$ an arbitrary density of weight $-1$, depending on the metric, connection, Riemann tensor and non-metric tensor, as well as extra dynamical fields $\Psi$ (indices of $\Psi$ are suppressed, but kept arbitrary), for example matter fields, and Lagrange multipliers $\lambda$ (this means that $\mathcal{L}$ depends only linearly on the $\lambda$'s, and does not contain derivatives of these). We also allow the Lagrangian density to depend on symmetrized $\nabla$-derivatives of the matter fields $\Psi$, but for notational simplicity we just denote this by $\Psi$ in the density $\mathcal{L}$ above. Note that derivatives of the metric are given by the non-metricity, and that in general for any linear connection terms as $\nabla_{a_1} \ldots \nabla_{a_k}$ can expressed in terms of the symmetrized derivative $\nabla_{(a_1} \ldots \nabla_{a_k)}$ plus curvature terms, torsion terms and lower order derivatives \cite{Wald1990,carroll_2019}, which is why we only incorporate symmetrized covariant derivatives in $\mathcal{L}$.

\vspace{0.5cm}

Now let us vary this action to obtain the equations of motion and the boundary term. We proceed analogously as in Wald's approach \cite{Wald1994}. The Lagrangian is given by an $n$-form (with $n$ the spacetime dimension), written as \textbf{L}, such that for a given Lagrangian density $\mathcal{L}$ we have
\begin{equation}
    \textbf{L}=\varepsilon\mathcal{L}
\end{equation}
where $\varepsilon$ is the totally antisymmetric symbol. In Wald's original paper \cite{Wald1994} the factor $\sqrt{-g}$ is also included in the totally antisymmetric symbol, i.e. $\bm{\epsilon}_{a_1...a_n}=\sqrt{-g}\varepsilon_{a_1...a_n}$, but since we work now with a possibly not metric-compatible connection we rather include the determinant in the Lagrangian and work with the Lagrangian density. As shown in the following lemma, we can split the variation of $\mathbf{L}$ into equations of motion and boundary terms. 

\begin{lemma} \label{lemma:palatiniboundary}
We can write a general variation of the Lagrangian density $\mathcal{L}$, defined in \eqref{eq:Palatiniaction1}, as
\begin{equation} \begin{split}
    {\delta\mathcal{L}}&=(A^{ab}_g-\nabla_c E_Q^{cab})\delta g_{ab}\\ &+(B_{\Gamma a}^{bc}-2\nabla_d {E_{Ra}}^{cdb}-2E_Q^{bc}{}_a)\delta {\Gamma^a}_{bc}\\
 &+E_\Psi\delta\Psi + C_\lambda \delta\lambda\\
&+\nabla_d(2{E_{Ra}}^{cdb}\delta{\Gamma^a}_{bc}+\sqrt{-g}{E_Q}^{dab}\delta g_{ab})+\nabla_d{\tilde\Theta}^d\  .
\end{split} \end{equation}
The first two lines give the equations of motion for the metric and the connection, while the third line are the equations of motion for the extra fields $\Psi$ and the constraints. The fourth line gives the desired boundary term. Since the $E_Q^{dab}\delta g_{ab}$ part of the boundary term will only contain variations of the metric, it will ultimately not contribute to the entropy, and we thus absorb it in $\tilde\Theta$. The boundary term thus is
\begin{equation} 
    \Theta^d=2{E_{Ra}}^{cdb}\delta{\Gamma^a}_{bc}+\tilde\Theta^d\ ,
\end{equation}
where the contribution $\tilde{\boldsymbol{\Theta}}$ contains only terms like $\delta\nabla_{a_1}...\nabla_{a_i}R_{abcd}$, $\delta \nabla_{a_1} Q_{abc}$ and $\delta \Psi$ etc, e.g. it contains only variations but not derivatives of variations. The functions $E_i$ are the equations of motion for the curvature, non-metricity tensor, and extra fields $\Psi$, as if they were viewed as independent fields. For example
\begin{equation}
    E_R^{abcd}=\frac{\partial\mathcal{L}}{\partial R_{abcd}}-\nabla_{a_1}\frac{\partial\mathcal{L}}{\partial \nabla_{a_1}R_{abcd}}+ \ldots+(-1)^{m} \nabla_{(a_{1}} \ldots \nabla_{a_{m})} \frac{\partial L}{\partial \nabla_{(a_{1}} \ldots \nabla_{a_{m})} R_{a b c d}} .
\end{equation}
\end{lemma}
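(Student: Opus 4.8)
\emph{Proof strategy.} The plan is to repeat the variational manipulation of \cite{Wald1994}, the one genuinely new feature being that the covariant derivative $\nabla$ now depends on the dynamical field $\Gamma$, so that $\delta$ and $\nabla$ no longer commute and every integration by parts has to be done covariantly. Throughout I write $\nabla^{(i)}$ for a string of $i$ symmetrized covariant derivatives.

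First I would apply the chain rule to $\delta\mathcal{L}$ with respect to all arguments of \eqref{eq:Palatiniaction1}: this produces $\frac{\partial\mathcal{L}}{\partial g_{ab}}\delta g_{ab}$, the towers $\sum_{i}\frac{\partial\mathcal{L}}{\partial\nabla^{(i)}R_{abcd}}\,\delta(\nabla^{(i)}R_{abcd})$ and the analogous one for $Q$, and $\frac{\partial\mathcal{L}}{\partial\Psi}\delta\Psi+\frac{\partial\mathcal{L}}{\partial\lambda}\delta\lambda$. Next I would commute every $\delta$ through the covariant derivatives using $\delta\nabla_e X_{\cdots}=\nabla_e\delta X_{\cdots}-(\delta\Gamma\ast X)_{e\cdots}$ (one $\delta\Gamma$-contraction per index of $X$), together with the torsion-free identity $\delta R^a{}_{bcd}=2\nabla_{[c}\delta\Gamma^a{}_{d]b}$ and $\delta Q_{abc}=\nabla_a\delta g_{bc}-2\,g_{d(b}\delta\Gamma^d{}_{|a|c)}$. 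After this, $\delta\mathcal{L}$ is a sum of terms (coefficient)$\times\nabla^{(p)}(\text{undifferentiated variation})$, the undifferentiated variations being $\delta g_{ab}$, $\delta\Gamma^a{}_{bc}$, $\delta\Psi$, $\delta\lambda$ (or, where one chooses not to expand them, $\delta R$, $\delta\nabla^{(i)}R$, $\delta\nabla^{(j)}Q$).

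The core step is then repeated covariant integration by parts, $A\,\nabla_e B=\nabla_e(AB)-(\nabla_e A)B$, using that for the vector densities appearing here $\nabla_d X^d=\partial_d X^d$ is an ordinary divergence, so $\varepsilon\,\nabla_d X^d$ is an exact $n$-form $\mathrm{d}\boldsymbol\Theta$; this is the analogue of $\sqrt{-g}\,\nabla_a V^a=\partial_a(\sqrt{-g}\,V^a)$ and is what makes the boundary contributions genuine boundary terms for a non-metric-compatible $\nabla$. Stripping the $p$ derivatives off the $R$- and $Q$-variations generates exactly the Euler–Lagrange combinations $E_R^{abcd}$, $E_Q^{abc}$, $E_\Psi$ stated in the lemma (the alternating sum of symmetrized $\nabla$-derivatives of $\partial\mathcal{L}/\partial\nabla^{(m)}(\cdot)$), plus total divergences whose variational factors can be kept as the undifferentiated blocks $\delta\nabla^{(i)}R$, $\delta\nabla^{(j)}Q$, $\delta\Psi$; these divergences are collected into $\tilde\Theta^d$. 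One further integration by parts of $E_R^{abcd}\delta R_{abcd}=2E_R^{abcd}\nabla_{[c}\delta\Gamma^a{}_{d]b}$ and of the $E_Q^{abc}\nabla_a\delta g_{bc}$ piece peels off $-2\nabla_d E_R{}_a{}^{cdb}$ into the connection equation and $-\nabla_c E_Q^{cab}$ into the metric equation, leaving the boundary pieces $2E_R{}_a{}^{cdb}\delta\Gamma^a{}_{bc}$ and $\sqrt{-g}\,E_Q^{dab}\delta g_{ab}$. Finally one sorts by independent variation: the coefficient of $\delta g_{ab}$ (explicit metric dependence, the $\delta g$ hidden in $\delta R_{abcd}$, and $-\nabla E_Q$) is $A_g^{ab}-\nabla_c E_Q^{cab}$; the coefficient of $\delta\Gamma^a{}_{bc}$ (the $[\delta,\nabla]$-commutator terms assembled into $B_{\Gamma a}^{bc}$, plus $-2\nabla_d E_R{}_a{}^{cdb}-2E_Q^{bc}{}_a$) is as claimed; $\delta\Psi$ and $\delta\lambda$ contribute $E_\Psi$ and $C_\lambda$; and the remaining total divergence is the boundary, with $\sqrt{-g}\,E_Q^{dab}\delta g_{ab}$ absorbed into $\tilde\Theta^d$ since it involves only $\delta g_{ab}$.

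The algebra is routine; the real work — and what makes this harder than the Levi-Civita computation of \cite{Wald1994} — is the bookkeeping of the $\delta\Gamma$ contributions, which enter from three separate sources: the $[\delta,\nabla]$ commutators inside every $\delta\nabla^{(i)}R$ and $\delta\nabla^{(j)}Q$, the identity $\delta R^a{}_{bcd}=2\nabla_{[c}\delta\Gamma^a{}_{d]b}$, and the $\delta\Gamma$-term of $\delta Q_{abc}$. All of these must be integrated by parts and repackaged so that the total coefficient of $\delta\Gamma^a{}_{bc}$ comes out precisely as $B_{\Gamma a}^{bc}-2\nabla_d E_R{}_a{}^{cdb}-2E_Q^{bc}{}_a$. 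A secondary subtlety is tracking density weights and $\sqrt{-g}$ factors: since $\mathcal{L}$ is a density whereas the $E_i$ (and $P^a{}_{bc}$) are taken to be tensors, both the $\sqrt{-g}$ in the $\delta g$ boundary term and the statement that $\nabla$-divergences integrate to boundaries rely on this weight, and it is the torsion-free assumption that puts $\delta R$ and the $\nabla$-commutators hidden in the symmetrized derivatives into their simplest form.
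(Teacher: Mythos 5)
Your proposal is correct and follows essentially the same route as the paper's proof: chain rule over all arguments, iterated covariant integration by parts on the derivative towers (collecting the $\delta\Gamma$ commutator terms into the connection equation of motion and using the weight $-1$ density property $\nabla_d X^d=\partial_d X^d$ of the torsion-free connection to make the divergences genuine boundary terms), then a final integration by parts on $E_R\,\delta R$ and the $E_Q\,\nabla\delta g$ piece to extract $-2\nabla_d E_{Ra}{}^{cdb}$ and $-\nabla_c E_Q^{cab}$. The only difference is presentational — you commute all the $\delta$'s through first and then integrate by parts, whereas the paper peels off one derivative per step and folds the remainder into the next lower order — which yields identical results.
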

\begin{proof}
Varying $\textbf{L} = \varepsilon \mathcal{L}$ gives
\begin{align}
\nonumber    \delta \textbf{L}&=\varepsilon\left(\frac{\partial \mathcal{L}}{\partial g_{ab}}\delta g_{ab}+\frac{\partial\mathcal{L}}{\partial R_{abcd}}\delta R_{abcd}+\frac{\partial\mathcal{L}}{\partial\nabla_{a_1}R_{abcd}}\delta\nabla_{a_1}R_{abcd}+...\right.\\
\nonumber\\
\nonumber &+\frac{\partial\mathcal{L}}{\partial Q_{abc}}\delta Q_{abc}+\frac{\partial\mathcal{L}}{\partial\nabla_{a_1}Q_{abc}}\delta\nabla_{a_1}Q_{abc}+...\\
\nonumber\\
 &\left.+\frac{\partial\mathcal{L}}{\partial \Psi}\delta \Psi+\frac{\partial\mathcal{L}}{\partial\nabla_{a_1}\Psi}\delta\nabla_{a_1}\Psi+...+\frac{\partial\mathcal{L}}{\partial \lambda}\delta\lambda\right)\ ,
\end{align}
where $...$ stands for variations with respect to higher-order derivatives of $R,Q$ or $\Psi$. The variation of $\sqrt{-g}$ is included in the first term. In order to treat the terms containing derivatives of Riemann and non-metricity tensors, we compute for any tensor $A$
\begin{align}
\nonumber    \frac{\partial\mathcal{L}}{\partial \nabla_{(a_1}...\nabla_{a_i)}A}\delta \nabla_{(a_1}...\nabla_{a_i)} A&=\frac{\partial\mathcal{L}}{\partial \nabla_{(a_1}...\nabla_{a_i)}A} \nabla_{a_1}\delta \nabla_{a_2}...\nabla_{a_i}A+\\
\nonumber    &+ \text{terms with }\delta \Gamma\\
\nonumber    &=\nabla_{a_1}\left(\frac{\partial\mathcal{L}}{\partial \nabla_{(a_1}...\nabla_{a_i)}A}\delta\nabla_{a_2}...\nabla_{a_i}A\right)+\\
\nonumber    &+ \text{terms with }\delta \Gamma\\
    &-\nabla_{a_1}\left(\frac{\partial\mathcal{L}}{\partial \nabla_{(a_1}...\nabla_{a_i)}A}\right)\delta \nabla_{a_2}...\nabla_{a_i}A\ ,
\end{align}
where the terms with $\delta\Gamma$ arise from the variation of the first covariant derivative $\nabla_{a_1}$ that we pulled out \footnotetext{In Wald's original procedure the Levi-Cevita connection is used, such that the variation of $\nabla$ gives a term containing $\nabla\delta g_{ab}$ \cite{Wald1994}. Integrating this by parts gives then a contribution to the metric equations of motion, and a boundary term containing $\delta g_{ab}$.}. Each of the terms containing $\delta{\Gamma^a}_{bc}$ will contribute to the connection equation of motion. The first term in the last expression will contribute to the boundary term. The term in the last line is then of lower differential order, and can be added to the contribution $\frac{\partial\mathcal{L}}{\partial \nabla_{(a_1}...\nabla_{a_{i-1})}A}\delta \nabla_{(a_1}...\nabla_{a_{i-1})} A$ coming from $\delta\textbf{L}$. Performing this procedure for all derivative terms in $\delta\textbf{L}$ and collecting all terms of each of the variations one finds
\begin{equation}
\label{eq:Palatinivariation1}
    \delta\textbf{L}=A^{ab}_g \delta g_{ab}+B_{\Gamma a}^{bc} \delta \Gamma^a_{bc}+E_R^{abcd}\delta R_{abcd}+E_Q^{abc}\delta Q_{abc}+ E_\Psi\delta\Psi + C_\lambda \delta\lambda+d\tilde{\boldsymbol{\Theta}} ,
\end{equation} where the contribution $\tilde{\boldsymbol{\Theta}}$ contains only terms like $\delta\nabla_{a_1}...\nabla_{a_i}R_{abcd}$, $\delta \nabla_{a_1} Q_{abc}$ and $\delta \Psi$ etc, e.g. it contains only variations but not derivatives of variations. The functions $E_i$ are then the equations of motion for the curvature, non-metricity tensor, and extra fields $\Psi$, as if they were viewed as independent fields. For example
\begin{equation}
    E_R^{abcd}=\frac{\partial\mathcal{L}}{\partial R_{abcd}}-\nabla_{a_1}\frac{\partial\mathcal{L}}{\partial \nabla_{a_1}R_{abcd}}+ \ldots+(-1)^{m} \nabla_{(a_{1}} \ldots \nabla_{a_{m})} \frac{\partial L}{\partial \nabla_{(a_{1}} \ldots \nabla_{a_{m})} R_{a b c d}} .
\end{equation}
The functions $C_\lambda$ are the constraints coming from the Lagrange multipliers. Note that $A_g$ also contains the terms coming from variations of factors containing the determinant of the metric $\sqrt{-g}$, which we have not made explicit in \eqref{eq:Palatiniaction1}. \\
Finally, we have that the variations appearing in \eqref{eq:Palatinivariation1} can be written as variations of the metric and connection as (recall that we work with a torsion-free connection here)
\begin{equation} \label{eq:variation_of_R_Q} \begin{split}
    \delta {R^a}_{bcd}&=2\nabla_{[c}\delta{\Gamma^a}_{d]b}\\
    \delta Q_{abc}&= \nabla_a\delta g_{bc}-2\delta {\Gamma^{d}}_{a(b}g_{c)d}\ .
\end{split} \end{equation}  The $\delta\Gamma$ terms from $\delta Q$ will only contribute to the connection equation of motion. The contributions to the boundary term come from $\delta R$ and the metric variation in $\delta Q$. Hence
\begin{equation} \label{eq:varL5.11} \begin{split}
    {\delta\mathcal{L}} &=(A^{ab}_g-\nabla_c E_Q^{cab})\delta g_{ab}\\ &+(B_{\Gamma a}^{bc}-2\nabla_d {E_{Ra}}^{cdb}-2E_Q^{bc}{}_a)\delta {\Gamma^a}_{bc}\\
 &+E_\Psi\delta\Psi + C_\lambda \delta\lambda\\
&+\nabla_d(2{E_{Ra}}^{cdb}\delta{\Gamma^a}_{bc}+E_Q^{dab}\delta g_{ab})+\nabla_a{\tilde\Theta}^a\ ,
\end{split} \end{equation}
where we used the symmetries of the $E_i$'s, inherited from the objects through which they are defined. The first and second line give the equations of motion for the metric and connection, while the third line are the equations of motion for the extra fields $\Psi$ and the constraints. The fourth line gives the desired boundary term. Since the $E_Q$ part of the boundary term only contains a variation of the metric, it will ultimately not contribute to the entropy (see Remark \ref{remark:importantboundary}), and we thus absorb it in $\tilde\Theta$. The boundary term can thus be written as
\begin{equation}
    \Theta^a=2{E_{Rd}}^{cab}\delta{\Gamma^d}_{bc}+\tilde\Theta^a\ ,
\end{equation}
or in form language
\begin{equation}
    (\boldsymbol{\Theta})_{a_2 \ldots a_n}=\varepsilon_{a a_2 \ldots a_n} 2{E_{Rd}}^{cab}\delta{\Gamma^d}_{bc}+(\tilde{\boldsymbol{\Theta}})_{a_2 \ldots a_n}\ .
\end{equation}
\end{proof}
\begin{remark}
In \cite{Wald1994}, other connections than the Levi-Civita connections are initially also considered, but they are then chosen to be fixed background connections. Here we are doing something different: we are allowing for an arbitrary independent connection to exist, different from the Levi-Civita connection.
\end{remark}
\begin{remark}
The covariant derivative $\left(\nabla_{c} A\right)^{a_{1} \ldots a_{r}}{}_{ b_{1} \ldots b_{s}}$ of a tensor density $A$ is equal to the covariant derivative of the density as if it was a tensor, plus a term $+W\Gamma_{d c}^{d} T^{a_{1} \cdots a_{r}}{}_{b_{1} \ldots b_{s}}$ with $W$ the weight of the density. Thus for a vector density $A^a$ of weight $-1$ and a torsion-free connection we have \begin{equation}
    \nabla_a A^a = \partial_a A^a + \Gamma^a_{ab} A^b - \Gamma^b_{ab} A^a = \partial_a A^a.
\end{equation} This is why the terms $\nabla_d ( \ldots )$ in \eqref{eq:varL5.11} are pure boundary terms for a torsion-free connection: the variation of the action $S$ can be written here as \begin{equation}
    \delta S \sim \int \mathrm{d}^4 x \nabla_d \Theta^d = \int \mathrm{d}^4 x \partial_d(\Theta^d) = \int \mathrm{d}\bm{\Theta},
\end{equation} where $\bm{\Theta}$ is the Hodge dual of the tensor $A^a = \sqrt{-g}\Theta^a$. If the connection is not torsion-free, the above calculations for splitting the variation into the equations of motion and boundary terms have to be altered, see also Appendix \ref{appendix:boundary_Palatini+torsion}.  
\end{remark}

\vspace{0.5cm}

\begin{definition} \label{def:omega}
Let $\phi(\lambda_1,\lambda_2)$ be a two-parameter family of fields, yielding variations $\delta_1 \phi$ and $\delta_2 \phi$, and $(n-1)$-forms $\mathbf{\Theta}(\phi,\delta_1 \phi)$ and $\mathbf{\Theta}(\phi,\delta_2 \phi)$. The $(n-1)$-form $\bm{\omega}(\phi,\delta_1 \phi,\delta_2 \phi)$, defined by \begin{equation}
    \bm{\omega}(\phi,\delta_1 \phi,\delta_2 \phi) = \delta_1 \mathbf{\Theta}(\phi,\delta_2 \phi) - \delta_2 \mathbf{\Theta}(\phi, \delta_1 \phi),
\end{equation} is called the symplectic current. The presymplectic form relative to a Cauchy surface $C$ is defined by \begin{equation} \label{eq:presympl}\Omega(\phi,\delta_1 \phi, \delta_2 \phi) = \int_C \bm{\omega}(\phi,\delta_1 \phi, \delta_2 \phi). \end{equation} 
\end{definition}

\vspace{0.5cm}

\begin{remark}

If we first vary $\mathbf{L}$ with respect to $\delta_2 \phi$, and then with respect to $\delta_1 \phi$, we get \begin{equation}
    \delta_1 \delta_2 \mathbf{L} = \frac{\partial^2 \mathbf{L}}{\partial \lambda_1 \partial \lambda_2} = (\delta_1 \mathbf{E} )(\delta_2 \phi) + \mathbf{E} (\delta_1 \delta_2 \phi) + \mathrm{d}\delta_1 \bm{\Theta}(\phi,\delta_2 \phi).
\end{equation} We can do the same for $\delta_2 \delta_1 \mathbf{L}$, and by subtracting these two expressions, using the equality of mixed partial derivatives, we get the equation \begin{equation}
    (\delta_1 \mathbf{E})(\delta_2 \phi) - (\delta_2 \mathbf{E}) (\delta_1 \phi) + \mathrm{d} \bm{\omega} = 0,
\end{equation} which means that if $\delta_i \mathbf{E} = 0$ for $i=1,2$, then $\mathrm{d}\bm{\omega} = 0$. If the condition $\delta_i \mathbf{E} = 0$ is satisfied, the variation $\delta_i \phi$ is said to satisfy the linear equations of motion. For an infinitesimal local symmetry $\hat{\delta} \phi$ at $\phi$, where $\phi$ satisfies the equation of motion, the linearized equations of motion are satisfied, as proved in \cite{Wald1990}. So if the linearized equations of motion are satisfied by the infinitesimal variations we have by Stokes theorem that for two Cauchy surfaces $C,C'$ the difference $\int_C \bm{\omega} - \int_{C'} \bm{\omega}$ depends on an integral $\int_S \bm{\omega}$, where $S$ is a timelike surface near spatial infinity, which will typically vanish when the integral of $\bm{\omega}$ over a Cauchy surface $C$ is required to converge in order to have a well-defined presymplectic form in \eqref{eq:presympl}, as described in \cite{Wald1994}. If not so, one could strengthen the asymptotic conditions imposed upon the dynamical fields, so that the symplectic form is independent of the chosen Cauchy surface. 
\end{remark}
\begin{remark}
Note that the presymplectic form is independent under the $\bm{\mu}$-ambiguity in $\bm{\Theta}$, as defined in \eqref{eq:waldambigtheta}, that is under $\mathbf{\Theta} \rightarrow \mathbf{\Theta} + \delta \bm{\mu}$ the presymplectic form $\omega$ is unchanged. The $\bm{Y}$-ambiguity can be trickier, as mentioned also in the following remark.
\end{remark}
\begin{remark} \label{remark:Y_ambig_symplecticform_Palatini}
For terms in $\bm{Y}$ linear in $\delta g$ we can use the same argument as given on p. 9-10 in \cite{Wald1994} to argue that such terms will not contribute to $\Omega$. A symplectic form can be obtained from the presymplectic form $\Omega$ by taking an appropriate quotient of phase space, as described in Wald's paper \cite{Wald1990}. It is hoped for that because of this reduction the $\delta \Gamma$-linear terms in $\bm{Y}$-ambiguity in \eqref{eq:waldambigtheta} do not lead to ambiguities in the symplectic form or other quantities such as the entropy which we will define below. This will be the case in STEGR.
\end{remark}

\vspace{0.5cm}

\subsection{Form of the Noether charge} \label{sec:Form_Noethercharge}

In looking at variations of fields, we will be interested in variations generated by a vector field $V$. 
\begin{definition}
Let $\mathbf{L}$ be a Lagrangian with boundary term $\bm{\Theta}$ and let $V$ be a vector field. The Noether current $\mathbf{J}$ is then defined by \begin{equation} \label{eq:defcurrentJ} \mathbf{J} = \bm{\Theta}(\phi,\mathcal{L}_V \phi) - i_{V} \mathbf{L} - \bm{\alpha}_{V}, \end{equation} where the last term $\bm{\alpha}_{V}$ arises if a theory is diffeomorphism invariant up to a boundary term $\mathrm{d} \bm{\alpha}_{V}$, that is, if $\hat{\delta} \mathbf{L} =  \mathbf{L}(\mathcal{L}_{V} \phi) = \mathcal{L}_{V} \mathbf{L}(\phi) + \mathrm{d}\bm{\alpha}_{V}$.
\end{definition}
\begin{remark}
In the above definition we have denoted variations induced by infinitesimal local symmetries by $\hat{\delta}$, from now on we will often omit the $\hat{\text{hat}}$-symbol when it is clear that variations are induced by infinitesimal local symmetries. The symbol $i_{V}$ acting on a tensor $A(\cdot,\cdot,\ldots,\cdot)$, where $\cdot$ denotes the interior product, defined by \[ i_{V} A(\cdot, \cdot, \ldots, \cdot) =  A(V, \cdot, \ldots, \cdot). \] 
\end{remark}
\begin{remark}
In the original approach diffeomorphism invariant theories were only considered, hence the term $\bm{\alpha}_{V}$ was not present. In the remaining of this work we will work with diffeomorphism invariant Lagrangians, but for completeness we will still allow for Lagrangians with diffeomorphism invariance up to a boundary term, as it is the case for CGR (gauge fixed STEGR).
\end{remark}

\vspace{0.5cm}

\begin{lemma} \label{lemma:J_closed}
The current $\mathbf{J}$ is closed when evaluated on the equations of motion, that is $\mathrm{d}\mathbf{J} = 0$ when evaluated on fields which obey the equations of motion. \end{lemma}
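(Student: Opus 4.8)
The plan is to compute $\mathrm{d}\mathbf{J}$ directly from the definition \eqref{eq:defcurrentJ}, $\mathbf{J} = \bm{\Theta}(\phi,\mathcal{L}_V\phi) - i_V\mathbf{L} - \bm{\alpha}_V$, and show every piece is exact or vanishes on shell. First I would apply $\mathrm{d}$ term by term. For the first term I would use the defining relation \eqref{eq:defTheta} with the variation taken to be the one induced by $V$, i.e. $\delta\phi = \mathcal{L}_V\phi$; this gives $\mathrm{d}\bm{\Theta}(\phi,\mathcal{L}_V\phi) = \mathcal{L}_V\mathbf{L} \cdot(\text{up to }\bm{E}\,\mathcal{L}_V\phi) = \delta\mathbf{L} - \mathbf{E}\,\mathcal{L}_V\phi$ evaluated on $\delta\phi=\mathcal{L}_V\phi$, i.e. $\mathrm{d}\bm{\Theta}(\phi,\mathcal{L}_V\phi) = (\mathcal{L}_V\mathbf{L} + \mathrm{d}\bm{\alpha}_V) - \mathbf{E}\,\mathcal{L}_V\phi$, where the $\mathcal{L}_V\mathbf{L} + \mathrm{d}\bm{\alpha}_V$ is exactly the diffeomorphism-invariance-up-to-boundary statement $\hat\delta\mathbf{L} = \mathcal{L}_V\mathbf{L}(\phi) + \mathrm{d}\bm{\alpha}_V$ recorded in the definition of $\mathbf{J}$. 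For the second term I would use Cartan's magic formula $\mathcal{L}_V = \mathrm{d}\,i_V + i_V\,\mathrm{d}$ acting on the top-form $\mathbf{L}$: since $\mathbf{L}$ is an $n$-form on an $n$-manifold, $\mathrm{d}\mathbf{L} = 0$, hence $\mathrm{d}(i_V\mathbf{L}) = \mathcal{L}_V\mathbf{L}$. Finally $\mathrm{d}(\mathrm{d}\text{-exact}) = 0$ kills $\mathrm{d}\bm{\alpha}_V$ after it appears. Assembling,
\begin{equation}
\mathrm{d}\mathbf{J} = \big(\mathcal{L}_V\mathbf{L} + \mathrm{d}\bm{\alpha}_V - \mathbf{E}\,\mathcal{L}_V\phi\big) - \mathcal{L}_V\mathbf{L} - \mathrm{d}\bm{\alpha}_V = -\,\mathbf{E}\,\mathcal{L}_V\phi,
\end{equation}
which vanishes precisely when $\phi$ obeys the equations of motion $\mathbf{E}(\phi) = 0$.

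A subtlety worth flagging, since we are in the Palatini-with-independent-connection setting rather than Wald's Levi-Civita one, is that the "fields $\phi$" now collectively denote the pair $(g_{ab}, \Gamma^a{}_{bc})$ together with any extra fields $\Psi$ and Lagrange multipliers $\lambda$, so $\mathbf{E}\,\mathcal{L}_V\phi$ is shorthand for the full contraction $A^{ab}_g\,\mathcal{L}_V g_{ab} + (\text{connection EOM})\,\mathcal{L}_V\Gamma^a{}_{bc} + E_\Psi\,\mathcal{L}_V\Psi + C_\lambda\,\mathcal{L}_V\lambda$ with the combined equation-of-motion operators read off from Lemma \ref{lemma:palatiniboundary}. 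I would also need to note that $\mathcal{L}_V\phi$ is a legitimate variation to feed into \eqref{eq:defTheta} — it is just the variation generated by the one-parameter family of pullbacks along the flow of $V$ — and that $\mathcal{L}_V\Gamma$, while not the Lie derivative of a tensor, is still a well-defined variation $\delta\Gamma$ of the connection field, so the lemma applies verbatim.

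The main obstacle, and the one place care is genuinely required, is the bookkeeping of the $\bm{\alpha}_V$ term together with the distinction between the "naive" Lie derivative of the Lagrangian form and the variation it induces: one must verify that $\hat\delta\mathbf{L}$ as used in \eqref{eq:defTheta} (i.e. plugging $\delta\phi=\mathcal{L}_V\phi$ into the right-hand side $\mathbf{E}\delta\phi + \mathrm{d}\bm\Theta$) really equals $\mathcal{L}_V\mathbf{L} + \mathrm{d}\bm\alpha_V$, not just $\mathcal{L}_V\mathbf{L}$. This is exactly the content of the clause "$\hat\delta\mathbf{L} = \mathbf{L}(\mathcal{L}_V\phi) = \mathcal{L}_V\mathbf{L}(\phi) + \mathrm{d}\bm\alpha_V$" in the definition of $\mathbf{J}$, so once that is invoked the cancellation is immediate; the only real work is to present it cleanly so the two appearances of $\mathrm{d}\bm\alpha_V$ (one from $\mathrm{d}\bm\Theta$, one subtracted explicitly in $\mathbf{J}$) visibly cancel and no stray boundary term survives. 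Everything else — Cartan's formula on a top-form, $\mathrm{d}^2=0$ — is routine.
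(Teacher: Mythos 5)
Your proposal is correct and is essentially the paper's own argument: apply $\mathrm{d}$ to each term of $\mathbf{J}$, use the defining relation \eqref{eq:defTheta} with $\delta\phi=\mathcal{L}_V\phi$, Cartan's formula together with $\mathrm{d}\mathbf{L}=0$ to get $\mathrm{d}(i_V\mathbf{L})=\mathcal{L}_V\mathbf{L}$, and conclude $\mathrm{d}\mathbf{J}=-\mathbf{E}\,\hat{\delta}\phi$, which vanishes on shell. One minor wording slip: the $\mathrm{d}\bm{\alpha}_V$ terms disappear by mutual cancellation (as your displayed equation and final paragraph correctly show), not because of $\mathrm{d}^2=0$.
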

\begin{proof}
From the definition of $J$ above we have the following: \begin{equation*}
    \begin{split}
        \mathrm{d} \mathbf{J} &= \mathrm{d}\bm{\Theta} - \mathbf{d}(i_{V} \mathbf{L}) - \mathrm{d}\bm{\alpha}_{V} \\ &= \mathrm{d}\bm{\Theta} -\mathcal{L}_{V}  \mathbf{L} - \mathrm{d}\bm{\alpha}_{V} \\ &= -\hat{\delta} \mathbf{L} + \mathrm{d}\bm{\Theta} = - \mathbf{E} \hat{\delta} \phi,
    \end{split}
\end{equation*} where we used the general identity  \begin{equation}
    \label{eq:generalidentity} \mathcal{L}_{V} \bm{\Lambda} = i_{V} \mathrm{d}\bm{\Lambda} + \mathrm{d}(i_{V} \bm{\Lambda}),
\end{equation}  which holds for any differential form $\bm{\Lambda}$ and any vector field $V$, together with the fact that $\mathrm{d} \mathbf{L} = 0$ as $\mathbf{L}$ is an $n$-form on an $n$-dimensional manifold. 
\end{proof}

\begin{lemma} \label{lemma:noetherchargewald}
There exists an $(n-2)$-form $\mathbf{Q}_N$, also called the Noether charge, such that $\mathbf{J} = \mathrm{d}\mathbf{Q}_N$. 
\end{lemma}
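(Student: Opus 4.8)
The plan is to follow the strategy of Wald \cite{Wald1990} and Iyer--Wald \cite{Wald1994} for the corresponding statement in the purely Levi-Civita setting, adapting it to the present case of a torsion-free but not necessarily metric-compatible connection $\nabla$. The input is Lemma \ref{lemma:J_closed}: on field configurations obeying the equations of motion one has $\mathrm{d}\mathbf{J} = -\mathbf{E}\,\hat\delta\phi = 0$, so $\mathbf{J}$ is a closed $(n-1)$-form. The point of the lemma is that closedness of $\mathbf{J}$ is not merely topological: because $\mathbf{J}$ is \emph{locally} and covariantly constructed from the fields and from $V$, closedness forces exactness with a potential $\mathbf{Q}_N$ that is again such a local expression.

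First I would record the structural form of $\mathbf{J}$. From Lemma \ref{lemma:palatiniboundary}, $\mathbf{\Theta}(\phi,\delta\phi)$ is a local function of the fields $\phi$ (metric, connection, curvature, non-metricity, the $\Psi$'s and their $\nabla$-derivatives) and of the variations $\delta g_{ab}$, $\delta\Gamma^a{}_{bc}$, $\delta\Psi$ together with finitely many $\nabla$-derivatives of them. Substituting $\delta\phi = \mathcal{L}_V\phi$, and using that $\mathcal{L}_V g_{ab}$, $\mathcal{L}_V\Gamma^a{}_{bc}$ and $\mathcal{L}_V$ of the remaining tensor fields are each linear in $V^a$ and in a finite number of its symmetrized covariant derivatives (commutators of $\nabla$ generating only extra curvature, torsion --- here zero --- and non-metricity factors, all of which are themselves local in $\phi$), while $i_V\mathbf{L}$ and $\boldsymbol\alpha_V$ are linear in $V^a$ with no derivatives, one obtains a finite expansion
\[
\mathbf{J} = \mathbf{W}^{b}V_b + \mathbf{W}^{bc}\nabla_{(c}V_{b)} + \cdots + \mathbf{W}^{b c_1\cdots c_m}\nabla_{(c_1}\cdots\nabla_{c_m)}V_b,
\]
where the $\mathbf{W}$'s are $(n-1)$-forms built locally and covariantly from $\phi$; in particular $\mathbf{J}$ vanishes when $V\equiv 0$. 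Then I would invoke the algebraic Poincar\'e lemma for such locally constructed forms: expand $\mathrm{d}\mathbf{J}$ in the independent quantities $V_b,\ \nabla_{(c}V_{b)},\dots$; imposing $\mathrm{d}\mathbf{J}=0$ after the equations of motion have been used to eliminate $\mathbf{E}$ and its derivatives, one peels off $\mathrm{d}$ of a local $(n-2)$-form carrying the top-derivative term $\nabla_{(c_1}\cdots\nabla_{c_m)}V_b$, which lowers the order in derivatives of $V$; iterating down to the term linear in $V_b$ alone, that residual term is $V^a$ times a combination of the equations of motion and hence vanishes on shell. Collecting the exact pieces yields an $(n-2)$-form $\mathbf{Q}_N$ with $\mathbf{J} = \mathrm{d}\mathbf{Q}_N$. (Off shell the same manipulation gives $\mathbf{J} = \mathrm{d}\mathbf{Q}_N + V^a\mathbf{C}_a$ with $\mathbf{C}_a$ a local combination of $\mathbf{E}$, the ``constraints''; only the on-shell statement is needed here, and the residual ambiguity $\mathbf{Q}_N\to\mathbf{Q}_N+\mathrm{d}\mathbf{Z}+\cdots$ is addressed in the later discussion.)

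The only point that needs genuine checking --- as opposed to transcription from \cite{Wald1990,Wald1994} --- is that the ``peeling''/integration-by-parts step survives the replacement $\mathcal{D}\to\nabla$. When one extracts $\mathrm{d}(\cdots)$ from a term $\nabla_d(\cdots)^d$, the failure of $\nabla$ to annihilate the volume form (equivalently $\nabla_a\sqrt{-g}\neq 0$) would generically spoil the identification of $\nabla_d(\cdots)^d$ with an honest total derivative; but this is exactly controlled by working with the Lagrangian density of weight $-1$, for which $\nabla_a V^a = \partial_a V^a$ on weight-$(-1)$ vector densities and a torsion-free connection, as noted in the remark following Lemma \ref{lemma:palatiniboundary}. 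With that in hand, each step produces only further local covariant $Q$- and $R$-terms that are absorbed at the next lower order, and the standard Wald bookkeeping goes through unchanged; the substantive subtleties flagged in Remark \ref{remark:Y_ambiguity} and Remark \ref{remark:Y_ambig_symplecticform_Palatini} (the $\mathbf{Y}$-ambiguity and the phase-space reduction) do not affect the mere \emph{existence} of $\mathbf{Q}_N$ asserted here.
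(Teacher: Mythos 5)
Your proposal is correct and follows essentially the same route as the paper: the paper's proof simply invokes the closedness of $\mathbf{J}$ from Lemma \ref{lemma:J_closed} together with the explicit peeling algorithm of Lemma 1 of \cite{Wald1990b} (illustrated in Appendix \ref{appendix:Noethercharge_STEGR}), which is exactly the argument you reconstruct. Your extra check that the integration-by-parts step survives the replacement $\mathcal{D}\to\nabla$ because $\nabla_a$ of a weight-$(-1)$ vector density reduces to $\partial_a$ for a torsion-free connection matches the remark the paper makes after Lemma \ref{lemma:palatiniboundary}, so nothing is missing.
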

\begin{proof}
As $J$ is closed on all fields which obey the equations of motion, we conclude by the same argument as in the original paper that there exists a $(n-2)$-form $Q$ such that $\mathbf{J} = \mathrm{d} \mathbf{Q}_N$ \cite{Wald1994}. An explicit algorithm can be found in Lemma 1 of \cite{Wald1990b}; this algorithm is also explained together with an example (namely the STEGR Noether charge) in Appendix \ref{appendix:Noethercharge_STEGR}.
\end{proof} 
\begin{remark} \label{remark:importantboundary}
From the explicit algorithm in \cite{Wald1990b} of finding the Noether charge $\mathbf{Q}_N$ from the current $\mathbf{J}$, one sees that when going from $\mathbf{J}$ to $\mathbf{Q}_N$, every term having derivatives $\nabla$ of $V$ in $\mathbf{J}$ goes to one order of derivative lower in $\mathbf{Q}_N$. We will calculate the entropy from the integral of the Noether charge over the bifurcation surface, and we will choose the vector field $V$ to be the Killing vector field, which vanishes on the bifurcation surface. Thus we see that at least second-order derivatives of $V$ in the boundary term $\bm{\Theta}$ are needed to have a contribution to the entropy. Such contributions will be called `important' contributions to the boundary term.
\end{remark}

Let us now look at the form of the Noether charge for theories with an arbitrary independent connection.
\begin{lemma} \label{lemma:formNoerthercharge_dynconnection}
For a theory with an arbitrary independent connection, considering Lie derivative variations induced by a vector field $V$ the Noether charge has the form \begin{equation}
    \mathbf{Q}_N=\mathbf{X}^{cd}\nabla_{c}V_{d}+ \text{terms linear in $V$}
\end{equation}
with
\begin{equation}
    (\mathbf{X}^{cd})_{a_3...a_n}=\varepsilon_{aba_3...a_n}{E_{R}}^{dcab}\ ,
\end{equation} up to the $\bm{Y}$-ambiguity mentioned in Remark \ref{remark:Y_ambiguity}.
\end{lemma}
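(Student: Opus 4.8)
The plan is to compute the Noether current $\mathbf{J}$ explicitly for the variation $\delta\phi = \mathcal{L}_V\phi$, isolate the unique term containing two derivatives of $V$, and then invoke the algorithm of \cite{Wald1990b} (reviewed in Appendix \ref{appendix:Noethercharge_STEGR}) which, by Remark \ref{remark:importantboundary}, lowers by one the order of every derivative of $V$ in passing from $\mathbf{J}$ to $\mathbf{Q}_N$. Since the Lagrangians considered here are diffeomorphism invariant we have $\bm{\alpha}_V = 0$, so $\mathbf{J} = \bm{\Theta}(\phi,\mathcal{L}_V\phi) - i_V\mathbf{L}$. Inserting the explicit boundary term from Lemma \ref{lemma:palatiniboundary}, $\Theta^a = 2E_{Rd}{}^{cab}\,\delta\Gamma^d{}_{bc} + \tilde\Theta^a$, and noting that $i_V\mathbf{L}$ carries no derivatives of $V$ at all, the whole question reduces to which derivatives of $V$ survive in $\bm{\Theta}(\phi,\mathcal{L}_V\phi)$.

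For the first piece I would use the standard expression for the Lie derivative of a torsion-free connection, $\mathcal{L}_V\Gamma^a{}_{bc} = \nabla_{(b}\nabla_{c)}V^a + (\text{terms linear in }V)$, so that $2E_{Rd}{}^{cab}\mathcal{L}_V\Gamma^d{}_{bc} = 2E_{Rd}{}^{cab}\nabla_b\nabla_c V^d + (\text{linear in }V)$. For the second piece I would use that $\tilde{\boldsymbol{\Theta}}$ was constructed in Lemma \ref{lemma:palatiniboundary} to contain only \emph{variations} of the composite tensors $\nabla_{a_1}\cdots\nabla_{a_i}R_{abcd}$, $\nabla_{a_1}Q_{abc}$, $\Psi$ and the bare metric variation $E_Q^{dab}\delta g_{ab}$, and never derivatives of variations; replacing $\delta$ by $\mathcal{L}_V$ therefore turns each factor into a Lie derivative of a tensor — which is at most first order in $V$ — together with $\mathcal{L}_V g_{ab} = 2\nabla_{(a}V_{b)}$ for the metric term. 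Hence $\mathbf{J} = \big[\,2E_{Rd}{}^{cab}\nabla_b\nabla_c V^d\,\big] + (\text{terms at most first order in }\nabla V)$, with the bracket understood as the corresponding $(n-1)$-form.

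A single integration by parts, $2E_{Rd}{}^{cab}\nabla_b\nabla_c V^d = \nabla_b\big(2E_{Rd}{}^{cab}\nabla_c V^d\big) - 2\big(\nabla_b E_{Rd}{}^{cab}\big)\nabla_c V^d$, exhibits the top term as an exact form, and dualizing the density $2E_{Rd}{}^{cab}\nabla_c V^d$ with $\varepsilon_{aba_3\ldots a_n}$ (which antisymmetrizes its $a,b$ indices) and raising $V^d$ with the metric gives $\mathbf{X}^{cd}\nabla_c V_d$ with $(\mathbf{X}^{cd})_{a_3\ldots a_n} = \varepsilon_{aba_3\ldots a_n}E_R^{dcab}$. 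Everything left over — the remainder $-2(\nabla_b E_{Rd}{}^{cab})\nabla_c V^d$, the curvature term in $\mathcal{L}_V\Gamma$, the $\tilde{\boldsymbol{\Theta}}$ contributions and $-i_V\mathbf{L}$ — is an $(n-1)$-form at most first order in $\nabla V$, so feeding it into the algorithm of \cite{Wald1990b} produces, by Remark \ref{remark:importantboundary}, an $(n-2)$-form linear in $V$ with no derivatives. This gives $\mathbf{Q}_N = \mathbf{X}^{cd}\nabla_c V_d + (\text{terms linear in }V)$; the ambiguity $\bm{\Theta}\to\bm{\Theta} + \diff\mathbf{Y}$ of Remark \ref{remark:Y_ambiguity} shifts $\mathbf{Q}_N$ by a term built from $\mathbf{Y}(\phi,\mathcal{L}_V\phi)$, which is the meaning of ``up to the $\mathbf{Y}$-ambiguity''.

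The step I expect to be the main obstacle is making airtight the claim that no further two-derivative-of-$V$ term is hidden in $\tilde{\boldsymbol{\Theta}}(\phi,\mathcal{L}_V\phi)$: at face value $\delta(\nabla_{a_1}Q_{abc})$ contains $\nabla_{a_1}\delta g_{bc}$, and with $\delta g_{bc} = \mathcal{L}_V g_{bc}$ this looks like it could produce $\nabla\nabla\nabla V$. The resolution is that $\delta$ and $\nabla$ do not commute when the connection is varied, and the $\delta\Gamma$-terms generated by $\delta(\nabla Q)$ precisely cancel those spurious higher-derivative pieces, leaving $\mathcal{L}_V(\nabla Q)$ as a genuine Lie derivative of a tensor — which is exactly why the boundary term of Lemma \ref{lemma:palatiniboundary} was organised in terms of $\delta$ of composite tensors rather than derivatives of variations, and is the fact I would have to state carefully. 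A lesser, purely notational difficulty is carrying the index positions and combinatorial factors through the $\varepsilon$-dualization so that $\mathbf{X}^{cd}$ emerges with precisely the contraction pattern $\varepsilon_{aba_3\ldots a_n}E_R^{dcab}$ and not, say, a transposition $c\leftrightarrow d$ or an extra numerical factor.
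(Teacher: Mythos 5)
Your proposal is correct and takes essentially the same route as the paper's proof: you isolate $2E_{Rd}{}^{cab}\,\mathcal{L}_V\Gamma^d{}_{bc}$ as the unique source of $\nabla\nabla V$ in $\mathbf{J}$, argue (as the paper does) that $\tilde{\bm{\Theta}}$ and $i_V\mathbf{L}$ contribute only terms of lower derivative order because they are Lie derivatives of genuine tensors, and then invert $\mathrm{d}$ on the top term. The only cosmetic difference is in that last step --- the paper runs the algorithm of Lemma 1 of \cite{Wald1990b} with its $\tfrac{m}{n-p+m}=\tfrac{2}{3}$ prefactor and then simplifies using the first Bianchi identity of $E_R$ (Appendix \ref{appendix:Noethercharge_STEGR}), whereas you integrate by parts directly and let the factor of $2$ be absorbed by the duality identity relating $\mathrm{d}$ of $\varepsilon_{ab a_3\ldots a_n}B^{ab}$ to $\nabla_b B^{ab}$ --- and both give the stated $\mathbf{X}^{cd}$.
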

\begin{proof}
From the boundary term $\boldsymbol\Theta$ (see Lemma \ref{lemma:palatiniboundary} for an explicit expression) we obtain the Noether current associated to a diffeomorphism generated by a vector field $V$ by
\begin{equation}
    \mathbf{J}=\bm{\Theta}(\delta g_{ab}=\Lie g_{ab},\delta{\Gamma^a}_{bc}=\Lie {\Gamma^a}_{bc})-i_V \mathbf{L}\ ,
\end{equation}
with the Lie derivatives given by (note that \cite{Heisenberg2020} uses a different sign notation)
\begin{align}
    \Lie g_{ab}&=2g_{c(a}\nabla_{b)}V^c-(2T_{(ab)c}-Q_{cab})V^c\ ,\\
    \Lie {\Gamma^a}_{bc}&=\nabla_b\nabla_cV^a-{T^a}_{cd}\nabla_bV^d-({R^a}_{cbd}+\nabla_b {T^a}_{cd})V^d\ .
\end{align}
One can then see that all the terms in \textbf{J} will contain only single or no derivatives in $V$, except for the term containing $E_R\delta\Gamma$ (supressing indices). As the part of the boundary denoted by $\tilde{\bm{\Theta}}$ (see Lemma \ref{lemma:palatiniboundary}) only contains terms like $R_{abcd}, \delta \nabla_{a_1} R^{abcd}$, that is, variations but no derivatives of variations, this term will only yield terms linear in $V$ and $\nabla V$. This gives the current as
\begin{equation} \label{eq:J_symmetricPalatini}
    (\textbf{J})_{a_2 a_3 \ldots a_n}=\varepsilon_{a a_2 a_3 \ldots a_n}2{E_{Rd}}^{cab}\nabla_b\nabla_c V^d+\text{terms linear in $V$ and $\nabla V$}
\end{equation}
Using the same algorithm as in Lemma 1 of \cite{Wald1990b} one then obtains the Noether charge $\mathbf{Q}_N$ as (for a calculation see Appendix \ref{appendix:Noethercharge_STEGR})
\begin{equation}
    \mathbf{Q}_N=\mathbf{X}^{cd}\nabla_{c}V_{d}+ \text{terms linear in $V$}
\end{equation}
with
\begin{equation}
    (\mathbf{X}^{cd})_{a_3...a_n}=\varepsilon_{aba_3...a_n}{E_{R}}^{dcab}\ .
\end{equation} 
\end{proof}
\begin{remark}
\label{remark:Y_ambiguity_noethercharge_Palatini}
Note that the above boundary term is only determined up to the $\bm{Y}$-ambiguity (see Remark \ref{remark:Y_ambiguity}), that is at this point we are free to add any $(n-2)$-form $\bm{Y}$ linear in $\mathcal{L}_V g$ or $\mathcal{L}_V \Gamma$ to the Noether charge $\bm{Q}_N$. For the entropy and first law the terms linear in $\mathcal{L}_V g$ will not contribute as for the entropy $V$ will be Killing vector field so that $\mathcal{L}_V g = 0$ (see also p. 17 in \cite{Wald1994}). However, terms linear in $\mathcal{L}_V \Gamma$ might contribute to the entropy and to the first law, but in STEGR for example, by a reduction of phase space such terms can be set equal to zero. This reduction is described in Section \ref{sec:reduction_phasespace}. Furthermore, from the definition of the Noether charge there is freedom to add any exact form $\mathrm{d}\bm{Z}$ to the Noether charge. Following argumentation as in the proof of Theorem 6.1 in \cite{Wald1994}, to calculate the entropy we will integrate over a surface without boundary so that such an exact form does not contribute to the entropy; the same holds for the energy and angular momentum appearing in the first law, as defined in the next section.
\end{remark}

\vspace{0.5cm}

\subsection{Definition of energy and angular momentum from the symplectic form} 

Now let us look at variations of  the current $\mathbf{J}$, as defined in \eqref{eq:defcurrentJ}, induced by an arbitary variation $\delta \phi$ of fields $\phi$ satisfying the equation of motion, and let $V$ be some vector field on $M$:
\begin{equation}
\begin{aligned}
\delta \mathbf{J} &=\delta \boldsymbol{\Theta}\left(\phi, \mathcal{L}_{V} \phi\right)-i_V  \delta \mathbf{L} - \delta \bm{\alpha}_V \\
&=\delta \boldsymbol{\Theta}\left(\phi, \mathcal{L}_{V} \phi\right)-i_V  \mathrm{d} \boldsymbol{\Theta}(\phi, \delta \phi)  - \delta \bm{\alpha}_V\\
&=\delta \boldsymbol{\Theta}\left(\phi, \mathcal{L}_{V} \phi\right)-\mathcal{L}_{V} \boldsymbol{\Theta}(\phi, \delta \phi)+\mathrm{d}(i_V \boldsymbol{\Theta}(\phi, \delta \phi)) - \delta \bm{\alpha}_V
\end{aligned}
\end{equation}
where in line 1 we used the fact that $\mathbf{E} = 0$ on $\phi$ and in line 2 we used the general identity \eqref{eq:generalidentity}. Recalling the definition of the symplectic current $\bm{\omega}$ we see from the above equation that we have \begin{equation}  \label{eq:75}
    \bm{\omega}(\phi,\delta \phi, \mathcal{L}_{V} \phi) = \delta \mathbf{J} - \mathrm{d}(i_{V} \bm{\Theta}(\phi, \delta \phi) + \bm{\alpha}_{V}). 
\end{equation} Integrating this equation over a Cauchy surface, and choosing $V$ to be a timelike vector field we obtain Hamilton's equations of motion by \begin{equation}
   \begin{split}
        \label{eq:Hameq} \delta H &= \Omega(\phi,\delta \phi, \mathcal{L}_{V} \phi) \\ &= \int_C \bm{\omega}(\phi,\delta \phi, \mathcal{L}_{V} \phi) = \delta \int_C \mathbf{J} - \int_C \mathrm{d}(i_{V} \bm{\Theta} + \bm{\alpha}_{V}) \\ &= \delta \int_C \mathbf{J} - \int_{\infty} (i_{V} \bm{\Theta} + \bm{\alpha}_{V}), 
   \end{split}  
\end{equation} and thus if we assume that there exists a $(n-1)$-form $\mathbf{B}$ and an $n-2$-form $\mathbf{C}_{V}$ such that \begin{equation}
    \delta \int_{\infty}( i_{V} \mathbf{B} + \mathbf{C}_{V}) = \int_{\infty} (i_{V} \bm{\Theta} + \bm{\alpha}_{V}), 
\end{equation} we get that $H$ is given by \begin{equation}
    H = \int_C\mathbf{J} - \int_{\infty} (i_{V} \bm{B} + \bm{C}_{V}).
\end{equation} Evaluating $H$ on solutions we have, using $\mathbf{J} = \mathrm{d}\mathbf{Q}_N$, that \begin{equation}
    H = \int_{\infty} (\mathbf{Q}_N - i_{V} \mathbf{B} + \mathbf{C}_{V}). 
\end{equation}
The canonical energy $E$ can then be defined as
 \begin{equation} \label{eq:canenergy}
    E = \int_{\infty} (\mathbf{Q}_N[t] - i_{t} \mathbf{B} + \mathbf{C}_{t}), 
\end{equation} where $t^a$ is an aymptotic time translation vector field. Similarly, we may define a canonical angular momentum $\mathcal{J}$ in the following way. We let $\varphi^a$ be an asymptotic rotation, and we choose the surface at infinity such that $\varphi^a$ is everywhere tangent to the surface, in that case the pullback of $i_{\varphi} \bm{\Theta}$ has input vectors everywhere tangent to the surface we have chosen, which is $(n-2)$-dimensional, and as $\bm{\Theta}$ is an $(n-1)$-form, this pullback of $i_{\varphi} \bm{\Theta}$ to the surface must be zero. This allows us to define the canonical angular momentum $\mathcal{J}$ by \begin{equation}
    \label{eq:canangular} \mathcal{J} = - \int_{\infty} (\mathbf{Q}_N[\varphi]+ \mathbf{C}_{\varphi}).
\end{equation} Again we should note that these definitions might be ambiguous up to terms in the $\mathbf{Y}$-ambiguity linear in $\delta \Gamma$ (see \eqref{eq:waldambigtheta}). As we will see this ambiguity will vanish for STEGR when going to the reduced phase space, leading to a well-defined energy and angular momentum from \eqref{eq:canenergy} and \eqref{eq:canangular}.

\vspace{0.5cm} 

\subsection{Entropy and the first law of black hole thermodynamics} \label{sec:firstlawBH}
Using \eqref{eq:75} and the definitions of canonical energy and canonical angular momentum as defined in \eqref{eq:canenergy} and \eqref{eq:canangular} we can try to get a version of the first law of black hole thermodynamics. Let us consider the case where the vector field $V$ is a symmetry of the dynamical fields, that is $\mathcal{L}_{V} \phi = 0$, and $\delta \phi$ satisfies the linearised equations of motion (that is $\delta \bm{E} = 0$ where the variation is induced by $\delta \phi$). Then $\omega(\phi,\delta \phi, \mathcal{L}_{V} \phi) =0$, and thus we have, using $\bm{J} = \mathrm{d}\mathbf{Q}_N$ that \[  0 = \delta \mathrm{d}\mathbf{Q}_N - \mathrm{d}(i_{V} \bm{\Theta} + \bm{\alpha}_{V}).  \] Integrating this identity over a hypersurface $\Xi$ with boundary $\partial \Xi$, we get that \begin{equation} \label{eq:precursor1stlaw}
    \int_{\partial \Xi} \big{(} \delta \mathbf{Q}_N - i_{V} \bm{\Theta}(\phi,\delta \phi)-\bm{\alpha}_{V}\big{)} = 0.
\end{equation}  Now by choosing a specific form of hypersurface $\Xi$ and a Killing vector field $V$, we get a precursor of the first law. Suppose we have a stationary black hole with bifurcate Killing horizon \cite{Wald1994, Racz_1992}. Let us choose $\Xi$ to be an asymptotically flat hypersurface, which has two boundaries: on the one hand a surface at infinity, and on the other hand the bifurcation $(n-2)$-surface $\Sigma$ of the black hole as its interior boundary. For the boundary parts at infinity we can use the definitions of canonical energy and canonical angular momentum, see \eqref{eq:canenergy} and \eqref{eq:canangular}, whereas the boundary surface $\Sigma$ can lead to the definition of entropy. For the Killing vector field we choose the Killing vector field which vanishes on the bifurcation surface, with a normalization such that \begin{equation}
    V^a = t^a + \Omega_H^{(\mu)} \varphi^a_{(\mu)},
\end{equation} where $t^a$ is the stationary Killing vector field which has norm $1$ at infinity, and $\varphi^a_{(\mu)}$ are the asymptotic rotations in orthogonal planes, and $\Omega_H^{(\mu)}$ are the angular velocities of the horizon. In this way the equation above, \eqref{eq:precursor1stlaw}, becomes \begin{equation}
    \label{eq:1stlawQ} \delta \int_{\Sigma} (\mathbf{Q}_N[V]+\mathbf{C}_{V}) = \delta E -   \Omega_H^{(\mu)} \delta \mathcal{J}_{(\mu)}.
\end{equation} Thus, if for the left hand side of this equation we have an equality of the form \begin{equation} \label{eq:Qkentropy_notdiff}
    \delta \int_{\Sigma} (\mathbf{Q}_N[V]+\mathbf{C}_{V}) = \frac{\kappa}{2\pi} \delta S ,
\end{equation} for some $S$, we obtain the first law of black hole thermodynamics from \eqref{eq:1stlawQ}, that is, \begin{equation} \label{eq:firstlaw_stated}
\frac{\kappa}{2 \pi} \delta S=\delta {E}-\Omega_{H}^{(\mu)} \delta \mathcal{J}_{(\mu)}
\end{equation} for variations $\delta \phi$ satisfying the linearised equations of motion, and might define $S$ as the entropy, as discussed below. Here $\kappa$ denotes the surface gravity of the black hole. Recall that $\mathbf{C}_{V}$ arises if the Lagrangian $n$-form $\mathbf{L}$ was diffeomorphism invariant only up to a boundary term $\bm{\alpha}_{V}$. As we will only consider diffeomorphism invariant theories here, our definition of the entropy will be \begin{equation} \label{eq:Qkentropy}
    \delta \int_{\Sigma} \mathbf{Q}_N[V] = \frac{\kappa}{2\pi} \delta S ,
\end{equation} 

Now we turn to the case of a stationary black hole solution with bifurcate Killing horizon. Recall the expression for the Noether charge as given in Lemma \ref{lemma:formNoerthercharge_dynconnection}. Let $V$ denote the Killing vector field (that is, $\mathcal{L}_V g = 0$) which vanishes on the bifurcation $(n-2)$-surface $\Sigma$. Integrating the Noether charge over this surface $\Sigma$, we get (for the calculation see p. 17-18 in \cite{Wald1994})
\begin{equation}
    \begin{split}
        \int_\Sigma \mathbf{Q}_N[V] &= \int_\Sigma \mathbf{X}^{cd} \nabla_{c} V_{d} + \ldots = \int_\Sigma \mathbf{X}^{cd} \mathcal{D}_{c} V_{d} \\ &= \int_\Sigma \mathbf{X}^{cd} \mathcal{D}_{[c} V_{d]} =  \int_\Sigma \mathbf{X}^{cd} \kappa \mathbf{b}_{cd} = \frac{\kappa}{2\pi} S, \\ &\text{ with } S = {2\pi} \int_\Sigma \mathbf{X}^{cd} \mathbf{b}_{cd},
    \end{split}
\end{equation} where $\ldots$ denote terms not contributing to the Noether charge at the bifurcation surface. Here we also used the fact that $2\mathcal{D}_{(c}V_{d)} = \mathcal{L}_V g_{cd} = 0$ for the Killing vector field $V$. We use the notation $\mathbf{b}_{cd}$ here for the binormal to $\Sigma$.
We thus get the following for the entropy and the first law.
\begin{thm} \label{th:palatinientropy}
The entropy $S$ of a black hole, defined by the integral of the Noether charge $\mathbf{Q}_N$ over the bifurcation surface $\Sigma$ \begin{equation}
        \int_\Sigma \mathbf{Q}_N[V] = \frac{\kappa}{2\pi} S,
\end{equation} can be expressed as \begin{equation} \label{eq:entropygeneralpalatini}
    \begin{split}
        S &= {2\pi} \int_\Sigma \mathbf{X}^{cd} \mathbf{b}_{cd}, \\ & \text{ where } (\mathbf{X}^{cd})_{a_3...a_n}=\varepsilon_{aba_3...a_n}{E_{R}}^{dcab}\ .
    \end{split}
\end{equation} For the definition of $E_R$ see Lemma \ref{lemma:palatiniboundary}.  
\end{thm}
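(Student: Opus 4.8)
The plan is to make rigorous the chain of equalities displayed just before the theorem statement, which already contains the skeleton of the argument; the real content is (i) invoking Lemma \ref{lemma:formNoerthercharge_dynconnection} for the form of the Noether charge, (ii) using the geometry of a bifurcate Killing horizon, and (iii) checking that none of the ambiguities flagged earlier spoil the result. Concretely, I would start from Lemma \ref{lemma:formNoerthercharge_dynconnection}, which gives $\mathbf{Q}_N[V] = \mathbf{X}^{cd}\nabla_c V_d + (\text{terms linear in }V)$ with $(\mathbf{X}^{cd})_{a_3\dots a_n} = \varepsilon_{ab a_3\dots a_n}{E_R}^{dcab}$, and then specialize $V$ to the horizon Killing field normalized (as in Section \ref{sec:firstlawBH}) so that it vanishes on the bifurcation $(n-2)$-surface $\Sigma$.

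Next I would integrate over $\Sigma$. Since $V|_\Sigma = 0$, every contribution that is merely linear in $V$ — including the $\tilde{\boldsymbol{\Theta}}$ and $E_Q\,\delta g$ pieces of the boundary term, cf. Remark \ref{remark:importantboundary} — drops out, leaving only $\int_\Sigma \mathbf{X}^{cd}\nabla_c V_d$. I would then trade $\nabla$ for the Levi-Civita derivative $\mathcal{D}$: for a torsion-free connection the difference $\nabla_c V_d - \mathcal{D}_c V_d$ is (up to sign and index placement) the disformation tensor contracted with $V_e$, hence vanishes on $\Sigma$. The Killing equation $2\mathcal{D}_{(c}V_{d)} = \mathcal{L}_V g_{cd} = 0$ makes $\mathcal{D}_c V_d$ antisymmetric, so $\mathcal{D}_c V_d = \mathcal{D}_{[c}V_{d]}$, and on the bifurcation surface of a bifurcate Killing horizon this equals $\kappa\,\mathbf{b}_{cd}$ with $\mathbf{b}_{cd}$ the binormal and $\kappa$ the (constant) surface gravity — the standard fact used in Wald's treatment, relying on the existence of the bifurcate horizon \cite{Wald1994,Racz_1992}. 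Pulling the constant $\kappa$ outside the integral yields $\int_\Sigma \mathbf{Q}_N[V] = \kappa\int_\Sigma \mathbf{X}^{cd}\mathbf{b}_{cd}$, which is exactly the defining relation $\int_\Sigma \mathbf{Q}_N[V] = \tfrac{\kappa}{2\pi}S$ with $S = 2\pi\int_\Sigma \mathbf{X}^{cd}\mathbf{b}_{cd}$, as claimed.

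Finally I would dispose of the ambiguities. The freedom to add an exact form $\mathrm{d}\mathbf{Z}$ to $\mathbf{Q}_N$ does not affect $\int_\Sigma$ because $\Sigma$ is closed (Stokes). The $\bm{Y}$-ambiguity terms linear in $\mathcal{L}_V g$ vanish on $\Sigma$ since $V$ is Killing, by the same argument as on p.\ 17 of \cite{Wald1994}. The only genuinely new term is the $\bm{Y}$-ambiguity linear in $\mathcal{L}_V\Gamma$ (Remark \ref{remark:Y_ambiguity_noethercharge_Palatini}), which for the general torsion-free Palatini theory one can only note is present, leaving $S$ defined up to it; in STEGR it is removed by the phase-space reduction of Section \ref{sec:reduction_phasespace}. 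I expect that this last bookkeeping — not any of the geometric manipulations — is the only place requiring genuine care, since everything else is a direct specialization of results already established in the preceding lemmas together with the classical bifurcate-Killing-horizon identity.
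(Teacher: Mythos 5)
Your proposal is correct and follows essentially the same route as the paper: it invokes Lemma \ref{lemma:formNoerthercharge_dynconnection} for the form of $\mathbf{Q}_N$, drops the terms linear in $V$ on $\Sigma$, replaces $\nabla_c V_d$ by $\mathcal{D}_c V_d$ (the difference being the disformation contracted with $V$, which vanishes where $V$ does), antisymmetrizes via the Killing equation, and uses $\mathcal{D}_{[c}V_{d]}=\kappa\,\mathbf{b}_{cd}$ on the bifurcation surface, exactly as in the displayed chain of equalities preceding the theorem. Your treatment of the $\bm{\mu}$-, $\mathrm{d}\mathbf{Z}$- and $\bm{Y}$-ambiguities likewise matches the paper's Remarks \ref{remark:Y_ambiguity_noethercharge_Palatini} and \ref{remark:summary_Yambiguity_Palatini}, including the caveat that the $\mathcal{L}_V\Gamma$-linear part of $\bm{Y}$ survives in general and is only removed in STEGR by the phase-space reduction.
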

\begin{thm} \label{th:firstlawBH}
Defining the energy, angular momentum and entropy as stated in the previous section and the above theorem, we get the first law of black hole thermodynamics \begin{equation}
     \frac{\kappa}{2\pi} \delta S = \delta E -   \Omega_H^{(\mu)} \delta \mathcal{J}_{(\mu)},
\end{equation} where the variation of the fields $\delta \phi$ satisfies the linearised equation of motion around an asymptotically flat stationary black hole solution having a bifurcate Killing horizon.
\end{thm}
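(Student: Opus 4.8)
The plan is to assemble the identities already established above, following the structure of Wald's argument \cite{Wald1994} but carrying along the connection sector. First I would take $V^a$ to be the horizon Killing field $V^a=t^a+\Omega_H^{(\mu)}\varphi^a_{(\mu)}$ that vanishes on the bifurcation surface $\Sigma$, so that $\mathcal{L}_V\phi=0$ for every dynamical field $\phi$ (both $g$ and $\Gamma$), and let $\delta\phi$ be a perturbation obeying the linearised equations of motion around the given asymptotically flat stationary black hole with bifurcate Killing horizon. Since $\mathcal{L}_V\phi=0$, the symplectic current $\bm{\omega}(\phi,\delta\phi,\mathcal{L}_V\phi)$ vanishes, so \eqref{eq:75} together with $\mathbf{J}=\mathrm{d}\mathbf{Q}_N$ and $\bm{\alpha}_V=0$ (we restrict to genuinely diffeomorphism-invariant Lagrangians) gives $\mathrm{d}\big(\delta\mathbf{Q}_N[V]-i_V\bm{\Theta}(\phi,\delta\phi)\big)=0$, which is \eqref{eq:precursor1stlaw}.

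Next I would integrate this closed $(n-1)$-form over the asymptotically flat hypersurface $\Xi$ with $\partial\Xi$ the disjoint union of a sphere at infinity $S_\infty$ and the bifurcation surface $\Sigma$, and apply Stokes' theorem, i.e.\ use \eqref{eq:precursor1stlaw} directly. On $\Sigma$ one has $V=0$, so $i_V\bm{\Theta}$ drops out and, by Lemma \ref{lemma:formNoerthercharge_dynconnection}, only the $\mathbf{X}^{cd}\nabla_cV_d$ part of $\delta\mathbf{Q}_N[V]$ survives; any exact-form ambiguity $\mathrm{d}\mathbf{Z}$ in $\mathbf{Q}_N$ and the $\delta g$-linear part of the $\mathbf{Y}$-ambiguity integrate to zero over the boundaryless $\Sigma$ (the latter because $\mathcal{L}_Vg=0$). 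Reducing $\nabla_cV_d$ on $\Sigma$ to $\kappa\,\mathbf{b}_{cd}$ with $\kappa$ constant on $\Sigma$ (zeroth law) and using the identification \eqref{eq:Qkentropy} — equivalently the computation on p.\ 17--18 of \cite{Wald1994}, now with $\mathbf{X}^{cd}$ as in Lemma \ref{lemma:palatiniboundary} — yields $\int_\Sigma\delta\mathbf{Q}_N[V]=\frac{\kappa}{2\pi}\delta S$ with $S$ as in Theorem \ref{th:palatinientropy}. On $S_\infty$ I would substitute the definitions \eqref{eq:canenergy}--\eqref{eq:canangular}: expanding $V=t+\Omega_H^{(\mu)}\varphi_{(\mu)}$ by linearity, the $t$-piece produces $\delta E$ (using $\int_{S_\infty}i_V\bm{\Theta}=\delta\int_{S_\infty}i_V\mathbf{B}$ for the assumed asymptotic form $\mathbf{B}$), while each $\varphi_{(\mu)}$-piece produces $-\Omega_H^{(\mu)}\delta\mathcal{J}_{(\mu)}$ because $i_{\varphi_{(\mu)}}\bm{\Theta}$ and $i_{\varphi_{(\mu)}}\mathbf{B}$ pull back to zero on the $(n-2)$-surface to which $\varphi_{(\mu)}$ is tangent. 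Combining the two boundary contributions via \eqref{eq:precursor1stlaw} reproduces \eqref{eq:1stlawQ}, which together with the entropy identification is exactly $\frac{\kappa}{2\pi}\delta S=\delta E-\Omega_H^{(\mu)}\delta\mathcal{J}_{(\mu)}$.

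The hard part is not the form manipulations, which are formally identical to Wald's, but the control of the connection-dependent ambiguities. Concretely, the $\mathbf{Y}$-ambiguity terms in $\mathbf{Q}_N$ linear in $\mathcal{L}_V\Gamma$ need not vanish for a Killing field of the metric alone, so the identity above should strictly be read modulo such terms; showing that they vanish requires the reduction to the kinematically accessible phase space, carried out for STEGR in Section \ref{sec:reduction_phasespace}, and is the content underlying Remark \ref{remark:Y_ambiguity_noethercharge_Palatini}. The remaining hypotheses — convergence of the presymplectic form on $\Xi$, existence of $\mathbf{B}$ at infinity, vanishing of the flux of $\bm{\omega}$ through the timelike surface near spatial infinity, and the $\delta\kappa$-free form of $\delta\int_\Sigma\mathbf{Q}_N[V]$ — I would import directly from the asymptotic analysis of \cite{Wald1994}, since they involve only the metric sector and the asymptotic structure, not the non-metricity.
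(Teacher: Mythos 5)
Your proposal is correct and follows essentially the same route as the paper: Section \ref{sec:firstlawBH} derives \eqref{eq:precursor1stlaw} from \eqref{eq:75} with $\mathcal{L}_V\phi=0$ and $\mathrm{d}\mathbf{J}=\mathrm{d}\delta\mathbf{Q}_N$, splits $\partial\Xi$ into the sphere at infinity (yielding $\delta E-\Omega_H^{(\mu)}\delta\mathcal{J}_{(\mu)}$ via \eqref{eq:canenergy}--\eqref{eq:canangular}) and the bifurcation surface (yielding $\tfrac{\kappa}{2\pi}\delta S$ via the $\mathbf{X}^{cd}\mathbf{b}_{cd}$ computation of Theorem \ref{th:palatinientropy}), exactly as you do. Your closing caveat about the $\mathbf{Y}$-ambiguity terms linear in $\mathcal{L}_V\Gamma$ and their removal by the phase-space reduction matches the paper's own qualification in Remark \ref{remark:summary_Yambiguity_Palatini}.
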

\begin{remark}
In the boundary term $\bm{\Theta}$ there was an ambiguity  \begin{equation} 
    \mathbf{\Theta} \rightarrow \mathbf{\Theta} + \delta \bm{\mu},
\end{equation} leading to a contribution $\bm{Q}_N \rightarrow \bm{Q}_N + i_V \bm{\mu}$ and thus not contributing to the entropy, as argued in Theorem 6.1 in \cite{Wald1994} (namely due to the Killing vector field $V$ vanishing on the bifurcation surface). Thus the entropy is invariant under this ambiguity. 
\end{remark}
\begin{remark} \label{remark:summary_Yambiguity_Palatini}
As explained in Remark \ref{remark:Y_ambiguity} we have another ambiguity, namely the freedom the add an exact $(n-1)$-form to $\bm{\Theta}$: \begin{equation}
    \bm{\Theta} \rightarrow \bm{\Theta} + \mathrm{d}\mathbf{Y}(g,\Gamma,\delta g, \delta \Gamma),
\end{equation} with $\bm{Y}(g,\Gamma, \delta g, \delta \Gamma)$ linear in the variations $\delta g$ and $\delta \Gamma$, leading to an ambiguity in the Noether charge: \begin{equation}
    \mathbf{Q}_N \rightarrow \mathbf{Q}_N + \mathbf{Y}(g,\Gamma, \mathcal{L}_V g, \mathcal{L}_V \Gamma).
\end{equation} Terms in $\bm{Y}$ linear in $\mathcal{L}_V g$ will not contribute to the entropy $S$ or to $\delta S$, as $\mathcal{L}_V g = 0$ for the Killing vector field $V$, as argued on p. 17 of \cite{Wald1994}. However, terms in $\bm{Y}$ linear in $\mathcal{L}_V \Gamma$ might persist, and the above energy, angular momentum and entropy are defined up to this ambiguity. In STEGR however, as we shall see this ambiguity vanishes once the reduction of phase space to kinematically possible phase space is performed in Section \ref{sec:reduction_phasespace}.  The same conclusion holds for $\bm{Y}$-ambiguities concerning the right hand side of the first law, see also Remark \ref{remark:Y_ambig_symplecticform_Palatini}. 
\end{remark}

\begin{remark}
When one restricts the action to be `GR-type', that is, only depending on the Riemann tensor with zero torsion and non-metricity, one finds Wald's result for the entropy \cite{Wald1994}, as for the curvature of a metric-compatible torsion-free connection (the Levi-Civita connection) we have $R^{dcab} = R^{abdc}=-R^{abcd}$.
\end{remark}
\begin{remark}
In the above theorem the entropy is defined as the integral of the Noether charge over the bifurcation surface $\Sigma$ (up to some proportionality constants): \begin{equation}
    S = \frac{2\pi}{\kappa} \int_{\Sigma} \mathbf{Q}_N[V],
\end{equation} but for this actually any cross-section of a stationary horizon could have been chosen. For this we follow the argument as presented in \cite{Jacobson_1994}. Let $S_1$ and $S_2$ be two cross-sections. Considering integration over the piece of the horizon bounded by $S_1$ and $S_2$, we get that the difference between the integrals over $S_1$ and $S_2$ is equal to the integral of $\mathbf{J} = \mathrm{d}\mathbf{Q}_N$ over that piece of the horizon. But, for a stationary spacetime we have $\mathcal{L}_V g = 0$ and if the phase space reduction procedure yields $\mathcal{L}_V \Gamma = 0$ (which will be the case for STEGR) we have $\mathbf{J} = \bm{\Theta}(\mathcal{L}_V g, \mathcal{L}_V \Gamma)- i_V L = -i_V L$ and the pullback of $i_V L$ to the horizon vanishes as $V$ is tangent to the horizon. Thus we could as well define the entropy by \begin{equation}
    S = \frac{2\pi}{\kappa} \int_{\Xi} \mathbf{Q}_N[V],
\end{equation} for an arbitrary cross section $\Xi$ of the horizon.
\end{remark}
\begin{remark} \label{remark:notationintegralQJacobson}
Using a notation introduced in \cite{Jacobson_1994} we can write the expression of the entropy in \eqref{eq:entropygeneralpalatini} in another way which may be useful for some calculations. Here we will denote the binormal by $\mathbf{b}_{cd}$, whereas in the paper \cite{Jacobson_1994} the binormal is denoted by $\hat{\epsilon}_{cd}$. For the Levi-Civita tensor $\bm{\epsilon}_{a_1 \ldots a_n}$ we define the $(n-m)$-tensor $\bm{\epsilon}_{a_1 \ldots a_m}$ by \begin{equation}
    (\bm{\epsilon}_{a_1 \ldots a_m})_{a_{m+1} \ldots a_n} = \bm{\epsilon}_{a_1 \ldots a_n}.
\end{equation} Analogously we define the $(n-m)$-density $\varepsilon_{a_1 \ldots a_m}$ from the completely antisymmetric density $\varepsilon$. Using this notation, the entropy as in \eqref{eq:entropygeneralpalatini} becomes \begin{equation} \begin{split}
        S &= {2\pi} \int_\Sigma \mathbf{X}^{cd} \mathbf{b}_{cd} = {2\pi} \int_\Sigma E_R^{dcab} \varepsilon_{ab} \mathbf{b}_{cd} \\ &= {2\pi} \int_\Sigma \big{(}\frac{E_R^{dcab}}{\sqrt{-g}}\big{)} \mathbf{b}_{cd} \bm{\epsilon}_{ab} = {2\pi} \int_\Sigma \big{(}\frac{E_R^{dcab}}{\sqrt{-g}}\big{)} \mathbf{b}_{cd} \mathbf{b}_{ab} \overline{\epsilon},
        \end{split}
\end{equation} with $\overline{\epsilon}$ is the induced volume form on the bifurcation surface $\Sigma$ and where we used the equality $\bm{\epsilon}_{a b}=\mathbf{b}_{a b} \bar{\epsilon}$, which holds when pulled back to the bifurcation surface we integrate over \cite{Jacobson_1994}. Recall that $E_R^{abcd}$ is a density of weight $-1$, so that $\frac{1}{\sqrt{-g}}E_R^{abcd}$ is a tensor.
\end{remark}
\begin{remark}
If we consider theories with vanishing curvature ${R^a}_{bcd}=0$, the entropy will have a simple form. We split the Lagrangian into a curvature independent part $\mathcal{L}^{(0)}$ - which then depends on the remaining quantities in \eqref{eq:Palatiniaction1} - plus a constraint setting the curvature tensor to zero:
\begin{equation}
    \mathcal{L}=\mathcal{L}^{(0)}+{\lambda_a}^{bcd}{R^a}_{bcd}\ .
\end{equation}
As we have just shown, the only contribution to the entropy comes from varying $\mathcal{L}$ with respect to the curvature tensor $R_{abcd}$, which thus in this case leads to
\begin{equation}
(\mathbf{X}^{cd})_{a_3...a_n}=\varepsilon_{aba_3...a_n} \lambda^{dcab} \ .
\end{equation}
The entropy is thus given \textit{entirely} by the Lagrange multiplier! Its value is determined by the connection equation of motion; if we call ${E^{(0)}_{\Gamma a}}^{bc}$ the connection equation of motion coming from $\mathcal{L}^{(0)}$ (including hypermomentum), then we have the complete connection equation of motion
\begin{equation}
\label{eq:PalatiniLM}
    2\nabla_d {\lambda_a}^{cdb}={E^{(0)}_{\Gamma a}}^{bc}\ .
\end{equation}
Since ${\lambda_a}^{cdb}$ inherits the antisymmetry of the Riemann tensor in the last two indices, differentiating \eqref{eq:PalatiniLM} w.r.t. $\nabla_b$ will make the left side vanish, leading to $\nabla_b {E^{(0)}_{\Gamma a}}^{bc}=0$. Note that $\lambda_R$ might not be determined uniquely by the connection equation of motion \eqref{eq:PalatiniLM}; there might be additional symmetries involving $\lambda^{abcd}$ in the action. These should not change the entropy to have a well defined entropy; the entropy should be uniquely defined and independent of the additional symmetries of the Lagrange multipliers. This is investigated for example in the case of STEGR below.
\end{remark}

\vspace{0.5cm}

\section{Calculating entropy in STEGR using the extended Wald Noether charge method} \label{sec:STEGR}

\subsection{Entropy in Symmetric Teleparallelism Gravity a la Palatini}
As a first non-trivial example, let us take a look at non-metric theories, i.e. we set the curvature $R(\Gamma)$ and torsion $T$ to zero, and let the Lagrangian be an arbitrary function of $Q_{abc}$ (and possibly its derivatives). This is also called STG or Symmetric Teleparallel Gravity. The action in the Palatini formalism has the form
\begin{equation}
    S_{STG}=\int d^4x \left[ \sqrt{-g} f(g_{ab},Q_{abc})+\lambda_{Ra}{}^{bcd} R^{a}{}_{bcd} + \lambda_{Ta}{}^{bc}T^{a}{}_{bc} \right] +S_M\ ,
\end{equation}
where $S_M$ is the action of extra matter fields, $f$ is an arbitrary function of weight 0, and $\lambda_{Ra}{}^{bcd}$ and $\lambda_{Ta}^{bc}$ are Lagrange multipliers, tensors of weight $-1$ with the same symmetries as the Riemann and torsion tensors, respectively, namely antisymmetry in their last two indices. These Lagrange multipliers are restricted by the connection equation of motion \begin{equation}
    \nabla_d{\lambda_{Ra}}^{cbd}+{\lambda_{Ta}}^{bc}=\sqrt{-g}{P^{bc}{}_a}+{\mathcal{H}_a}^{bc}\ ,
\end{equation} with the non-metricity conjugate defined by $\sqrt{-g} P^{a}{}_{bc}=\frac{\partial\mathcal{L}}{\partial Q_{a}{}^{bc}}$ and the hypermomentum current defined by \begin{equation}
\mathfrak{H}_{c}{}^{a b}=-\frac{1}{2} \frac{\delta \mathcal{S}_{M}}{\delta \Gamma_{a b}^{c}}.
\end{equation} As pointed out in \cite{Heisenberg18}, the gauge symmetries of the Lagrange multipliers are given by \begin{equation}
\begin{aligned}
\lambda_{Ra}{}^{bcd} & \rightarrow \lambda_{Ra}{}^{bcd}+\left(\nabla_{e}+T_{e}\right) \kappa_a{}^{bcde}+T^c{}_{ef} \kappa_a{}^{bfde}+\kappa_a{}^{bcd} \\
\lambda_{Ta}{}^{bc} & \rightarrow \lambda_{Ta}{}^{bc}+\left(\nabla_{d}+T_{d}\right) \kappa_a{}^{bcd}+T^{b}{}_{ed} \kappa_a{}^{ecd},
\end{aligned}
\end{equation}  where $\kappa_{a}{}^{bcde}$ and $\kappa_{a}{}^{bcd}$ are tensor densities, totally antisymmetric in their last three indices. To have a uniquely defined entropy we must make sure the entropy does not change under these transformations. We can restrict to the transformations when the equations of motion are obeyed (on-shell), with thus also the constraints of zero curvature and zero torsion obeyed, as the entropy is evaluated on the fields obeying the equations of motion. 

\begin{thm} \label{th:entropySTG}
In STG the boundary term contributing to the entropy is equal to \begin{equation} \label{eq:boundarySTG}
    \Theta^a = 2{\lambda_{Rd}}^{c a b}\delta \Gamma^d_{bc},
\end{equation} leading to an entropy $S$ \begin{equation}
    \begin{split}
        S &= \frac{1}{2\pi} \int_\Sigma \mathbf{X}^{cd} \mathbf{b}_{cd}, \text{ where } (\mathbf{X}^{cd})_{a_3...a_n}=\varepsilon_{aba_3...a_n}\lambda_R^{dcab}\ .
    \end{split}
\end{equation}  This entropy is uniquely defined, that is, it is invariant under the on-shell symmetry transformations \begin{equation} \label{eq:symmlambdaR}
\begin{split}
\lambda_{Ra}{}^{bcd} & \rightarrow \lambda_{Ra}{}^{bcd}+\nabla_{e} \kappa_a{}^{bcde}+\kappa_a{}^{bcd},
\end{split}
\end{equation} where $\kappa_{a}{}^{bcde}$ and $\kappa_{a}{}^{bcd}$ are tensor densities, totally antisymmetric in their last three indices.
\end{thm}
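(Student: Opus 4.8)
\noindent\emph{Proof proposal.} The plan is to proceed in two parts: first read off the entropy-relevant boundary term by specialising Lemma~\ref{lemma:palatiniboundary} to $S_{STG}$, and then show that the resulting entropy is unchanged by \eqref{eq:symmlambdaR}. For the first part, note that in $\mathcal{L}_{STG}$ the curvature $R^a{}_{bcd}$ appears only linearly, through the multiplier term $\lambda_{Ra}{}^{bcd}R^a{}_{bcd}$, and with no covariant derivatives acting on it, so in the notation of Lemma~\ref{lemma:palatiniboundary} one has $E_R^{abcd}=\partial\mathcal{L}/\partial R_{abcd}=\lambda_R^{abcd}$. Since $f(g_{ab},Q_{abc})$ carries no derivatives of $Q$, it contributes nothing to $\tilde{\bm{\Theta}}$; the $E_Q{}^{dab}\delta g_{ab}$ piece of the boundary and any matter piece of $\tilde{\bm{\Theta}}$ contain only variations (not derivatives of variations) of $g$ and $\Psi$, hence carry at most one derivative of $V$ once $\delta=\mathcal{L}_V$, so by Remark~\ref{remark:importantboundary} they do not reach the bifurcation surface. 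This leaves $\Theta^a=2\lambda_{Rd}{}^{cab}\delta\Gamma^d{}_{bc}$ as the only important contribution, i.e.\ \eqref{eq:boundarySTG}, and feeding $E_R=\lambda_R$ into Theorem~\ref{th:palatinientropy} yields the stated entropy formula, $S\propto\int_\Sigma\mathbf{X}^{cd}\mathbf{b}_{cd}$ with $(\mathbf{X}^{cd})_{a_3\dots a_n}=\varepsilon_{aba_3\dots a_n}\lambda_R^{dcab}$.

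For the uniqueness part I would work with the Jacobson form of the entropy from Remark~\ref{remark:notationintegralQJacobson}, so that $S$ depends on $\lambda_R$ only through $\int_\Sigma(\lambda_R^{dcab}/\sqrt{-g})\,\mathbf{b}_{cd}\mathbf{b}_{ab}\,\overline{\epsilon}$, i.e.\ only through the contraction of all four indices of $\lambda_R$ with binormals. Since the binormal $\mathbf{b}_{cd}$ takes values in the $2$-dimensional normal bundle of $\Sigma$, any tensor totally antisymmetric in three indices all of which are contracted with binormals vanishes pointwise. For the algebraic part of \eqref{eq:symmlambdaR}, tracking the index placement shows that the three antisymmetric indices of $\kappa_a{}^{bcd}$ sit on three of the four slots of $\lambda_R^{dcab}$ that get contracted with binormals, so this shift changes the integrand by zero pointwise and does not affect $S$. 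For the derivative part I would instead show that $\mathbf{X}^{cd}$ changes by an \emph{exact} $(n-2)$-form: dualising $\kappa_a{}^{bcde}$ on its three antisymmetric indices, $\tilde\kappa^{dc}{}_{a_4\dots a_n}:=\varepsilon_{abe\,a_4\dots a_n}\kappa^{dcabe}$, and using $\nabla\varepsilon=0$, a short computation gives $\delta_\kappa\mathbf{X}^{cd}\propto\mathrm{d}\tilde{\bm{\kappa}}^{dc}$ with $\tilde{\bm{\kappa}}^{dc}$ the $(n-3)$-form of components $\tilde\kappa^{dc}{}_{a_4\dots a_n}$, so the induced change of the entropy is $\propto\int_\Sigma(\mathrm{d}\tilde{\bm{\kappa}}^{dc})\,\mathbf{b}_{cd}$. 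Because $\mathcal{D}_a\mathbf{b}_{cd}=0$ on $\Sigma$ for $a$ tangent to $\Sigma$ --- which follows from $\mathcal{D}_{[c}V_{d]}|_\Sigma=\kappa\,\mathbf{b}_{cd}$, $V|_\Sigma=0$, the constancy of $\kappa$ along the horizon (the zeroth law), and the Killing identity making $\mathcal{D}_a\mathcal{D}_cV_d$ algebraic in $V$ --- one may pull $\mathbf{b}_{cd}$ inside the exterior derivative, and the integral over the closed surface $\Sigma$ then vanishes by Stokes' theorem.

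I expect the derivative part to be the main obstacle. The exactness $\delta_\kappa\mathbf{X}^{cd}\propto\mathrm{d}\tilde{\bm{\kappa}}^{dc}$ must be verified with care --- it is here that the Bianchi-type origin of the gauge symmetry \eqref{eq:symmlambdaR} reappears --- and one must also check that the terms produced when $\mathrm{d}$ falls on the binormal really do vanish on the bifurcation surface; by contrast the algebraic part reduces to the triviality that one cannot totally antisymmetrise three indices in a two-dimensional space. A technical point to keep track of throughout is the density weights: $\varepsilon$ and $\lambda_R$ are densities whereas $\mathbf{X}^{cd}$ is an honest $(n-2)$-form, and working at the level of forms (so that $\mathrm{d}$ is the ordinary exterior derivative and $\nabla\varepsilon=0$) keeps the non-metricity of $\nabla$ from contaminating the integrations by parts. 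The $\bm{\mu}$- and $\bm{Y}$-ambiguities of $\bm{\Theta}$ are disposed of exactly as for the general Theorem~\ref{th:palatinientropy}, so only the $\lambda_R$-gauge freedom requires the extra work above.
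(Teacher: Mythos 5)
Your first half coincides with the paper's: since $R^a{}_{bcd}$ enters $\mathcal{L}_{STG}$ only linearly through the multiplier term and undifferentiated, $E_R^{abcd}=\lambda_R^{abcd}$, and Lemma~\ref{lemma:palatiniboundary} together with Theorem~\ref{th:palatinientropy} gives \eqref{eq:boundarySTG} and the entropy formula. Your treatment of the algebraic shift $\lambda_{Ra}{}^{bcd}\to\lambda_{Ra}{}^{bcd}+\kappa_a{}^{bcd}$ is also sound, but it is a genuinely different (and weaker) argument than the paper's: the paper observes that the extra boundary contribution $2\kappa_d{}^{cab}\delta\Gamma^d{}_{bc}$ vanishes \emph{identically}, because $\delta\Gamma^d{}_{bc}$ is symmetric in $b,c$ for a torsion-free connection while $\kappa_d{}^{cab}$ is antisymmetric there; your two-dimensionality-of-the-normal-bundle argument only shows the entropy integrand is unchanged pointwise on $\Sigma$. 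Both suffice for the entropy, but the paper's version also controls the Noether charge itself.

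The genuine gap is in the derivative part. The claimed exactness $\delta_\kappa\mathbf{X}^{cd}\propto\mathrm{d}\tilde{\bm{\kappa}}^{dc}$ is not an identity: $\tilde{\bm{\kappa}}^{dc}$ is an $(n-3)$-form carrying two free contravariant indices, so its ordinary exterior derivative is not covariant, and the duality between the divergence $\nabla_e\kappa^{dcabe}$ and the exterior derivative of the dual form picks up connection terms acting on the free slots $d$ and $c$ (plus further terms from raising the first index of $\lambda_R$ with a metric that is not $\nabla$-covariantly constant). These leftover pieces are algebraic in $\kappa^{dcabe}$, but only two of its three antisymmetric indices end up contracted with binormals, so your two-dimensionality argument does not dispose of them, and you offer no other reason for them to vanish --- you flag this step yourself as the main obstacle, and it is precisely where the proof lives. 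The paper's route avoids all of this: write $\nabla_e\kappa_a{}^{bcde}R^a{}_{bcd}=\nabla_e(\kappa_a{}^{bcde}R^a{}_{bcd})-\kappa_a{}^{bcde}\nabla_eR^a{}_{bcd}$; the first piece is a total derivative, hence a $\bm{\mu}$-ambiguity of $\bm{\Theta}$ that cannot affect the entropy; varying the second piece, the total antisymmetry of $\kappa$ in its last three indices leaves only the commutator $\nabla_{[e}\nabla_{c]}\delta\Gamma^a{}_{d]b}$, whose derivative-of-$\delta\Gamma$ part is proportional to the torsion, giving an extra boundary term $\Theta^f=\kappa_a{}^{bcde}T^f{}_{e[c}\delta\Gamma^a{}_{d]b}$ that vanishes on-shell where $T=0$ (the curvature terms being algebraic in $\delta\Gamma$ and hence feeding only the connection equation of motion). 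That on-shell vanishing of torsion is the key idea your sketch is missing; without it, or a careful verification that your leftover connection terms cancel, the uniqueness claim is not established.
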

\begin{proof}
The first part of the proof, the form of the boundary term and of the entropy, directly follow from Lemma \ref{lemma:palatiniboundary} and Theorem \ref{th:palatinientropy}. What is left to prove is the invariance of the entropy under the transformations in \eqref{eq:symmlambdaR}. 

\vspace{0.5cm}

Under the symmetry \begin{equation}
    \lambda_{Ra}{}^{bcd} \rightarrow \lambda_{Ra}{}^{bcd} + \kappa_{a}{}^{bcd},
\end{equation} with $k_{a}{}^{bcd}$ totally antisymmetric in the last three indices, the entropy will not change, as we would get an extra term in the boundary of the form \begin{equation} 2{\kappa_{Rc}}^{b a d}\delta\Gamma^c_{db} = 0, \end{equation} where the equality is due to the fact that the variation of the connection is torsion-free, that is $\delta \Gamma^c_{db}$ is symmetric in $d$ and $b$ while $\kappa_c^{bad}$ is antisymmetric in $b$ and $d$.

\vspace{0.5cm}

Under the symmetry \begin{equation} \label{eq:addnablakappaSTG}
    \lambda_{Ra}{}^{bcd} \rightarrow \lambda_{Ra}{}^{bcd} + \nabla_e \kappa_{a}{}^{bcde},
\end{equation}  the Lagrangian changes by a term \begin{equation}
    \nabla_e \kappa_{a}{}^{bcde} R^a{}_{bcd} = \nabla_e( \kappa_{a}{}^{bcde} R^a{}_{bcd}) - \kappa_{a}{}^{bcde}\nabla_e R^a{}_{bcd}.
\end{equation} The first term just adds a total derivative to the Lagrangian which would lead to a term $\delta \bm{\mu}$ in the boundary under a variation; this will not change the entropy, see for example Proposition 4.1 in \cite{Wald1994}. Focusing on the second term, varying with respect to $\Gamma$ gives terms like \begin{equation}
    \kappa_{a}{}^{bcde}\delta \Gamma^a_{ef} R^f{}_{bcd} + \ldots,
\end{equation} contributing only to the connection equation of motion, and the term \begin{equation}
    \kappa_{a}{}^{bcde}\nabla_e \delta R^a{}_{bcd} = \kappa_{a}{}^{bcde} \nabla_e \nabla_{[c} \delta \Gamma^a_{d]b}.
\end{equation} 

Let us calculate the term appearing at the right hand side of the above equation: 
\begin{equation}
    \begin{split}
        & \nabla_e \nabla_{c} \delta \Gamma^a_{db} = \nabla_e (\partial_c \delta \Gamma^a_{db} + \Gamma^a_{c f} \delta \Gamma^f_{db} - \Gamma^f_{cd} \delta \Gamma^a_{fb}-\Gamma^f_{cb} \delta \Gamma^a_{df}) - ( e \leftrightarrow c) \\ &= \big{(} \partial_e \partial_c \delta \Gamma^a_{db}-\Gamma^f_{ec} \partial_f \delta \Gamma^a_{db}-\Gamma^f_{ed}\partial_c \delta \Gamma^a_{fb}- \Gamma^f_{eb} \partial_c \delta \Gamma^a_{df}+\Gamma^a_{ef} \partial_c \delta \Gamma^f_{db} \big{)} \\ & + \big{(} \partial_e \Gamma^a_{cf} \delta \Gamma^f_{db} + \Gamma^a_{cf} \partial_e \delta \Gamma^f_{db} + \Gamma^a_{eg} \Gamma^g_{cf} \delta \Gamma^f_{db}  - \Gamma^g_{ec} \Gamma^a_{gf} \delta \Gamma^f_{db} - \Gamma^g_{ed} \Gamma^a_{cf} \delta \Gamma^f_{gb} - \Gamma^g_{eb} \Gamma^a_{ef} \delta \Gamma^f_{dg} \big{)}  \\ &+  \big{(} -\partial_e \Gamma^f_{cd} \delta \Gamma^a_{fb} - \Gamma^f_{cd} \partial_e \delta \Gamma^a_{fb} - \Gamma^a_{eg} \Gamma^f_{cd} \delta \Gamma^g_{fb}  + \Gamma^g_{ec} \Gamma^f_{gd} \delta \Gamma^a_{fb} + \Gamma^g_{ed} \Gamma^f_{cg} \delta \Gamma^a_{fb} + \Gamma^g_{eb} \Gamma^f_{cd} \delta \Gamma^a_{fg} \big{)} \\ &+ \big{(} - \partial_e \Gamma^f_{cb} \delta \Gamma^a_{df} - \Gamma^f_{cb} \partial_e \delta \Gamma^a_{df} - \Gamma^a_{eg} \Gamma^f_{cb} \delta \Gamma^g_{df}  + \Gamma^g_{ec} \Gamma^f_{gb} \delta \Gamma^a_{df} + \Gamma^g_{ed} \Gamma^f_{cb} \delta \Gamma^a_{gf} + \Gamma^g_{eb} \Gamma^f_{cg} \delta \Gamma^a_{df} \big{)} \\ &- ( e \leftrightarrow c)      \\ &= -T^g{}_{ec} \nabla_g \delta \Gamma^a_{db} + R^a{}_{fec} \delta \Gamma^f_{db} - R^f{}_{dec} \delta \Gamma^a_{fb} - R^f{}_{bec} \delta \Gamma^a_{df},
    \end{split}
\end{equation} and thus under the transformation in \eqref{eq:addnablakappaSTG} we obtain an extra boundary term \begin{equation}
    \Theta^f = \kappa_a{}^{bcde} T^f{}_{e[c} \delta \Gamma^a_{d]b},
\end{equation} which will be equal to zero on the equations of motion as the torsion $T^f{}_{ec}$ is zero there, and thus not contributing to the entropy. We conclude that the entropy is unchanged under the transformation in \eqref{eq:addnablakappaSTG}.
\end{proof}

\begin{remark}
Since the boundary term is
\begin{equation}
     \Theta^a = 2{\lambda_{Rc}}^{b a d}\delta \Gamma^c_{db},
\end{equation}
only the part of $\lambda_R$ symmetric in $b$ and $d$ will enter. Symmetrizing the connection equation of motion removes $\lambda_T$,
\begin{equation}
\label{eq:STboundary1}
    \nabla_d{\lambda_{Ra}}^{(cb)d}=\sqrt{-g}{P^{(bc)}}_a ,
\end{equation}
where we have also set hypermomentum to zero, and used the antisymmetry of $\lambda_R$. 
\end{remark}

\begin{remark}
We could also prove the above invariance of the entropy under the symmetries in $\lambda_R$ by directly looking at what changes in the boundary term from the newly obtained action and how the on-shell current $\mathbf{J}$ changes. If the on-shell current is unchanged, by definition the Noether charge $\mathbf{Q}_N$ will be unchanged as well. Thus if symmetries in $\lambda_R$ keep $\mathbf{J}$ unchanged, the entropy will be unchanged. 
\end{remark}

\begin{remark}
We should note that up to this point we still have the terms linear in $\delta \Gamma$ in the $\bm{Y}$-ambiguity which might contribute to the entropy, see also Remark \ref{remark:summary_Yambiguity_Palatini}. However, in STEGR for example this ambiguity will vanish due to a reduction in phase space. 
\end{remark}

\vspace{0.5cm}

\subsection{Entropy in STEGR a la Palatini}

Now let us return to the special case of the STEGR action (conventions of \cite{Heisenberg18}), \begin{equation} \label{eq:STEGRpalatini2}
\mathcal{S}_{\mathbb{Q}}=\int \mathrm{d}^{4} x\left[- \frac{1}{2}\sqrt{-g} \mathbb{Q}+\lambda_{Ra}{}^{bcd} R^{a}{}_{bcd}+\lambda_{Ta}{}^{bc} T^{a}{}_{bc}\right], \end{equation} where the quadratic form $\mathbb{Q}$ of the non-metricity is given by \begin{equation}
\mathbb{Q}=\frac{1}{4} Q_{a b c} Q^{a b c}-\frac{1}{2} Q_{a b c} Q^{b a c}-\frac{1}{4} Q_{a} Q^{a}+\frac{1}{2} Q_{a} \tilde{Q}^{a}.
\end{equation} Recall that for this quadratic form we had the relation: \begin{equation}
R=\mathcal{R}(g)+\mathbb{Q}+\mathcal{D}_{a}\left(Q^{a}-\widetilde{Q}^{a}\right),
\end{equation} so that we have the same equations of motion as in GR from the Einstein-Hilbert action, and thus we would expect the same expression for the entropy. As described in the previous section, for STEGR we will have an entropy \begin{equation}   \label{eq:entropySTEGR} \begin{split} S &= \frac{1}{2\pi} \int_\Sigma \mathbf{X}^{cd} \mathbf{b}_{cd},  \\ & \text{with } (\mathbf{X}^{cd})_{fg} = \lambda_R^{dcab} \bm{\epsilon}_{abfg},   \end{split}
\end{equation}
where $\lambda_R$ is the Lagrange multiplier requiring $R_a{}^{bcd} = 0$ as described in previous sections, and has to be evaluated on the equations of motion. Note that $\lambda_R$ and $\lambda_T$ are tensor densities of weight -1 \cite{HeisenbergL18}, so that $\frac{\lambda_R}{\sqrt{-g}} = \Lambda_R$ and $\frac{\lambda_T}{\sqrt{-g}} = \Lambda_T$ are tensors. Recall also that $ \lambda_{Ra}{}^{cbd}$ and $ \lambda_{Ta}{}^{bd}$ are antisymmetric in their last two indices $b$ and $d$. The connection equation of motion is given by (from \cite{Heisenberg18}) \begin{equation} \label{eq:lambdaRlambdaTP}
\nabla_{d} \lambda_{Ra}{}^{cbd}+\lambda_{Ta}{}^{b c}= \sqrt{-g} P^{b c}{}_a,
\end{equation} where $P^{bc}_a$ is defined by $\sqrt{-g}P^a{}_{bc} = \frac{\partial \mathcal{L}}{\partial Q_a{}^{bc}}$, which takes the following form here:  \begin{equation}
    \label{eq:form_nonmetrconj_P}  P^{bc}{}_a = -\frac{1}{4} Q^{bc}{}_a +\frac{1}{4}Q^{cb}{}_a{} + \frac{1}{4}Q_a{}^{bc} + \frac{1}{4}\delta^c_a(Q^b-\widetilde{Q}^b) - \frac{1}{8}g^{bc}Q_a - \frac{1}{8}\delta^b_a Q^c.
\end{equation} For this connection equation of motion the torsion-free constraint has already been used \cite{Heisenberg18}. Taking $\nabla_b$ of the connection equation of motion and using $\nabla_b \nabla_d  \lambda_{Ra}{}^{cbd} = \nabla_{(b} \nabla_{d)}  \lambda_{Ra}{}^{cbd} = 0$, where the first equality holds as the curvature of $\nabla$ is zero and we obtain the last equality from the antisymmetry of $ \lambda_{Ra}{}^{cbd}$ in $b$ and $d$, we get the equation \begin{equation} \label{eq:lambdaTP}
\nabla_{b} \lambda_{Ta}{}^{bc}=\nabla_b ( \sqrt{-g} P^{b c}{}_a).
\end{equation} Thus, to find $\lambda_R$ and $\lambda_T$, we have to solve \eqref{eq:lambdaRlambdaTP}, keeping in mind that $\lambda_R$ and $\lambda_T$ have to be antisymmetric in their last two indices. To show that the entropy we will obtain here will be the same as in GR from the Einstein-Hilbert action, we will find one solution of the above equations, and as we have argued in Theorem \ref{th:entropySTG}, the entropy will not change under symmetry transformations of the Lagrange multipliers, showing the uniqueness of the entropy. 

\vspace{0.5cm}

\begin{thm}
The connection equation of motion \eqref{eq:lambdaRlambdaTP} has a solution $\lambda_R,\lambda_T$ where $\Lambda_R = \frac{\lambda_R}{\sqrt{-g}}$ is given by \begin{equation}
    \label{eq:LambdaRentropyform} 
    \Lambda_{Ra}{}^{cbd} = \frac{1}{4}(-g^{bc}\delta^d_a + g^{cd}\delta^b_a). 
\end{equation} Note that this $\Lambda_R$ is indeed antisymmetric in the last two indices $b$ and $d$. From the entropy expression above (see \eqref{eq:entropySTEGR}) or the form of the boundary terms (see \eqref{eq:boundarySTG}) we see that indeed this leads to the same entropy as in GR from the Einstein-Hilbert action. 
\end{thm}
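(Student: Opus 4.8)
The plan is to verify the stated $\lambda_R$ by direct substitution into the connection equation of motion \eqref{eq:lambdaRlambdaTP}, solving for $\lambda_T$ along the way, and then to identify the resulting $\lambda_R$ with the tensor $E_R$ that controls Wald's entropy for the Einstein--Hilbert Lagrangian. First I would record the kinematic facts. Writing $\lambda_{Ra}{}^{cbd} = \sqrt{-g}\,\Lambda_{Ra}{}^{cbd}$ with $\Lambda_R$ as in \eqref{eq:LambdaRentropyform}, the factor $\sqrt{-g}$ carries the weight, so $\lambda_R$ is a tensor density of weight $-1$, and swapping the last two indices sends $-g^{bc}\delta^d_a + g^{cd}\delta^b_a$ to minus itself, so $\lambda_R$ has the required antisymmetry. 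Since the curvature and torsion constraints hold on shell, $\nabla$ is torsion-free with $Q_{abc}=\nabla_a g_{bc}$; hence $\nabla_d g^{bc} = -Q_d{}^{bc}$ and $\nabla_d\delta^b_a = 0$, while the density weight gives $\nabla_d\sqrt{-g}= \tfrac12\sqrt{-g}\,Q_d$ with $Q_d = g^{bc}Q_{dbc}$.

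The core step is the computation of $\nabla_d\lambda_{Ra}{}^{cbd}$. Only the inverse metrics and $\sqrt{-g}$ get differentiated (the Kronecker deltas pass through), which produces an expression linear in $Q_{abc}$ and its two traces $Q^a$ and $\widetilde Q^a$---exactly the structures appearing in the non-metricity conjugate $P^{bc}{}_a$ of \eqref{eq:form_nonmetrconj_P}. I would then symmetrize over the two free upper indices $b\leftrightarrow c$: because $\lambda_{Ta}{}^{bc}$ is antisymmetric in its last two indices, the $(bc)$-symmetric part of \eqref{eq:lambdaRlambdaTP} is exactly \eqref{eq:STboundary1}, namely $\nabla_d\lambda_{Ra}{}^{(cb)d}=\sqrt{-g}\,P^{(bc)}{}_a$, whereas the antisymmetric part merely \emph{defines} a legitimate torsion-type multiplier $\lambda_{Ta}{}^{bc}\ce \sqrt{-g}\,P^{bc}{}_a - \nabla_d\lambda_{Ra}{}^{cbd}$ (weight $-1$ and, by the symmetric identity, antisymmetric in $b,c$). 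What remains is therefore the symmetric identity, which I would check term by term: upon symmetrization the non-trace pieces of $P^{bc}{}_a$ (the $Q^{bc}{}_a$ and $Q^{cb}{}_a$ terms) cancel, both sides carry the same $\tfrac14\sqrt{-g}\,Q_a{}^{bc}$ and $-\tfrac18\sqrt{-g}\,g^{bc}Q_a$ pieces, and the residual $\delta$-trace terms combine to $\tfrac{1}{16}\sqrt{-g}\,(Q^b\delta^c_a + Q^c\delta^b_a) - \tfrac18\sqrt{-g}\,(\widetilde Q^b\delta^c_a + \widetilde Q^c\delta^b_a)$ on both sides. This shows that $(\lambda_R,\lambda_T)$ solves \eqref{eq:lambdaRlambdaTP}.

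Finally I would feed this $\lambda_R$ into the entropy \eqref{eq:entropySTEGR} (equivalently into the boundary term \eqref{eq:boundarySTG}) and compare with Wald's Einstein--Hilbert result. Since only the constraint term of the STEGR action depends on the independent curvature, the relevant Wald object is $E_R = \lambda_R$, and the on-shell value just found is numerically equal to $\partial/\partial \mathcal R^a{}_{bcd}$ of the Einstein--Hilbert density: writing $\tfrac12\sqrt{-g}\,\mathcal R = \tfrac12\sqrt{-g}\,\delta^c_a g^{bd}\,\mathcal R^a{}_{bcd}$ and projecting onto the Riemann symmetries gives $\tfrac12\sqrt{-g}\,\delta^{[c}_a g^{d]b}=\tfrac14\sqrt{-g}\,(g^{bd}\delta^c_a - g^{bc}\delta^d_a)$, which coincides with $\lambda_{Ra}{}^{bcd}$ after an obvious relabelling of the index names in \eqref{eq:LambdaRentropyform}. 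Hence $\mathbf X^{cd}$, and with it $S=2\pi\int_\Sigma \mathbf X^{cd}\mathbf b_{cd}$, is identical to the one computed in \cite{Wald1994} for the Einstein--Hilbert action, i.e.\ $S$ is the Bekenstein--Hawking entropy; by Theorem \ref{th:entropySTG} this value does not depend on the residual gauge freedom in $\lambda_R$, so nothing further is needed.

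The main obstacle is the bookkeeping in the core step: tracking the weight-$(-1)$ density factor $\sqrt{-g}$ and all the non-metricity terms generated by $\nabla_d g^{bc}$ and $\nabla_d\sqrt{-g}$, and then verifying, after symmetrization in $b\leftrightarrow c$, that the outcome matches the six-term expression \eqref{eq:form_nonmetrconj_P} for $P^{bc}{}_a$. Once that algebraic identity is in hand, the antisymmetry checks and the identification with the Einstein--Hilbert $E_R$ are routine.
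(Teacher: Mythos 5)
Your proposal is correct and follows essentially the same route as the paper: substitute the given $\Lambda_R$ into the connection equation of motion, compute $\nabla_d\lambda_{Ra}{}^{cbd}=\sqrt{-g}\,(\nabla_d+\tfrac12 Q_d)\Lambda_{Ra}{}^{cbd}$, compare with $P^{bc}{}_a$, and read off an antisymmetric $\lambda_T$ — your symmetric/antisymmetric split in $(b,c)$ is just a reorganization of the paper's step of defining $\Lambda_{Ta}{}^{bc}=P^{bc}{}_a-(\nabla_d+\tfrac12 Q_d)\Lambda_{Ra}{}^{cbd}$ and then checking its antisymmetry, and your trace bookkeeping matches the paper's result $\tfrac14 Q_a{}^{bc}-\tfrac18 Q_a g^{bc}+(\tfrac18 Q^c-\tfrac14\widetilde Q^c)\delta^b_a$. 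The only cosmetic differences are that the paper evaluates $\nabla_d\Lambda_R$ via the disformation decomposition $\nabla=\mathcal D+L$ rather than directly through $\nabla_d g^{bc}=-Q_d{}^{bc}$, and it establishes the agreement with the Einstein--Hilbert entropy through the boundary-term computation of Appendix \ref{appendix:boundarytermSTEGR_same_entropy_GR} rather than by identifying $\lambda_R$ numerically with the Einstein--Hilbert $E_R$; both variants are sound.
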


\begin{proof} The idea of the proof goes as follows. We take $\Lambda_R$ as in \eqref{eq:LambdaRentropyform}, and then we define $\lambda_T$ from the connection equation of motion \eqref{eq:lambdaRlambdaTP}. It will turn out that indeed this $\lambda_T$ will be antisymmetric in its last two arguments, as required. Thus from Theorem \ref{th:entropySTG} we would have a boundary term (up to terms not contributing to the entropy) equal to, \begin{equation} \begin{split}
    \Theta^a = 2 \lambda_{Rc}{}^{bad} \delta \Gamma^c_{db} &= \frac{1}{2}(-g^{ab}\delta^d_c + g^{bd} \delta^a_c ) \delta \Gamma_{db}^c  \\ &= \frac{1}{2}(-g^{ab} \delta \Gamma^d_{db} + g^{bd} \delta \Gamma^a_{db}),
\end{split}
\end{equation} leading thus to the same entropy as in GR from the EH action; see for example Appendix \ref{appendix:boundarytermSTEGR_same_entropy_GR} for this. For an expression of the exact boundary term (up to the $\bm{Y}$-ambiguity, see Remark \ref{remark:boundary_terms_STEGR}. 

\vspace{0.5cm}

Now let us thus calculate the term $\nabla_{d} \lambda_{Ra}{}^{cbd}$: \begin{equation} \label{eq:nablalambdaR}
    \begin{split}
       \frac{1}{\sqrt{-g}} \nabla_{d} \lambda_{Ra}{}^{cbd} &= \frac{1}{\sqrt{-g}}  \nabla_{d} (\sqrt{-g}\Lambda_{Ra}{}^{cbd}) = (\nabla_d + \frac{Q_d}{2}) \Lambda_{Ra}{}^{cbd} \\ &= (\mathcal{D}_d + \frac{Q_d}{2})\Lambda_{Ra}{}^{cbd} - L^e{}_{da} \Lambda_{Re}{}^{cbd} + L^c{}_{de} \Lambda_{Ra}{}^{ebd} + L^b{}_{de} \Lambda_{Ra}{}^{ced} + L^d{}_{de} \Lambda_{Ra}{}^{cbe} \\ &=  \mathcal{D}_d \Lambda_{Ra}{}^{cbd} - L^e{}_{da} \Lambda_{Re}{}^{cbd} + L^c{}_{de} \Lambda_{Ra}{}^{ebd} + L^b{}_{de} \Lambda_{Ra}{}^{ced},
    \end{split}
\end{equation} where in the first and third line we used the identities $\nabla_d \sqrt{-g} = \sqrt{-g}\frac{Q_d}{2}$ and $L^d{}_{de} = -\frac{Q_e}{2}$ respectively. Taking now for $\Lambda_R$ the expression from \eqref{eq:LambdaRentropyform}, we get \begin{equation} \label{eq:nablalambdaR2}
    \begin{split}
      4 \frac{1}{\sqrt{-g}} \nabla_{d} \lambda_{Ra}{}^{cbd} &= - L^e{}_{da} (-g^{bc}\delta^d_e + g^{cd}\delta^b_e) + L^c{}_{de} (-g^{be}\delta^d_a + g^{ed}\delta^b_a) + L^b{}_{de}(-g^{ec}\delta^d_a + g^{cd}\delta^e_a) \\ &= L^d{}_{da}g^{bc}-L^{bc}{}_a - L^{cb}{}_a + L^{cd}{}_d \delta^b_a - L^{bc}{}_a + L^{bc}{}_a \\ &= -\frac{Q_a}{2}g^{bc}-L^{bc}{}_a - L^{cb}{}_a + (\frac{Q^c}{2}-\widetilde{Q}^c) \delta^b_a  \\ &= Q_{a}{}^{bc}-\frac{1}{2}Q_a g^{bc}+(\frac{1}{2}Q^c-\widetilde{Q}^c) \delta^b_a. 
    \end{split}
\end{equation} For $P^{bc}{}_a$ we have 
\begin{equation} \label{eq:Pbca}
    \begin{split}
     P^{bc}{}_a &=  \frac{1}{\sqrt{-g}}  \frac{\partial \mathcal{L}_{STEGR}}{\partial Q_{bc}{}^a} \\ &= -\frac{1}{4} Q^{bc}{}_a +\frac{1}{4}Q^{cb}{}_a{} + \frac{1}{4}Q_a{}^{bc} + \frac{1}{4}\delta^c_a(Q^b-\widetilde{Q}^b) - \frac{1}{8}g^{bc}Q_a - \frac{1}{8}\delta^b_a Q^c.
    \end{split}
 \end{equation}
 
We thus would get the following expression for $\Lambda_T = \frac{\lambda_T}{\sqrt{-g}}$ from the connection equation of motion \eqref{eq:lambdaRlambdaTP}: 
\begin{equation} \label{eq:lambdaTexprSTEGR}
    \begin{split}
        \Lambda_{Ta}{}^{bc} &= P^{bc}{}_a -  (\nabla_d + \frac{Q_d}{2}) \Lambda_{Ra}{}^{cbd} \\ &= \frac{1}{2} Q^{[cb]}{}_a - \frac{1}{2} \delta^{[c}_a \widetilde{Q}^{b]} + \frac{1}{2}\delta^{[c}_a Q^{b]}.
    \end{split} 
\end{equation} Note that $\Lambda_{Ta}{}^{bc}$ is indeed antisymmetric in the last two indices $b$ and $c$, as required. This ends our proof.

\end{proof}

\begin{remark}
In general, to obtain an expression for $\Lambda_R$, one could try and solve the symmetrized connection equation of motion (see \eqref{eq:STboundary1}) which then only involves $\lambda_R$ by guessing the form of $\Lambda_R$ by the form of $P_{bc}{}_a$. For theories quadratic in $Q_{abc}$ one could expect $\Lambda_T$ to be linear in $Q$ and its contractions, and we might expect $\Lambda_R$ to only contain combinations of the metric and kronecker delta functions.
\end{remark}

\begin{remark}
We should note that up to this point we still have the terms linear in $\delta \Gamma$ in the $\bm{Y}$-ambiguity which might contribute to the entropy, see also Remark \ref{remark:summary_Yambiguity_Palatini}. As is shown in the next section, in STEGR this ambiguity will vanish due to a reduction in phase space; the physical phase space will consist of the metric and its conjugate momentum, while the connection directions $\delta \Gamma$ will turn out to be degenerate directions of the presymplectic form and thus quotiented out in the physical phase space.
\end{remark}

\begin{remark} \label{remark:boundary_terms_STEGR}
For future reference, let us explicitly write out the boundary terms in STEGR. From the Palatini action, together with \eqref{eq:variation_of_R_Q} we have that the terms contributing to the boundary term are \begin{equation} 
    \begin{split}
        \nabla_{d} (\delta \Gamma^{a}{}_{c b}2\lambda_{Ra}{}^{b d c} -\delta g^{b c} \sqrt{-g} P^{d}{}_{b c}),
    \end{split}
\end{equation} which together with the expression for $\lambda_R$ found in the above theorem we get \begin{equation} \label{eq:exact_BT_STEGR}
    \Theta^d = g^{b[c} \delta \Gamma^{d]}{}_{c b} + P^{dbc} \delta g_{b c}.
\end{equation}
\end{remark}

\begin{remark}
The entropy can also be directly obtained from the calculation of the Noether charge, which is performed in Appendix \ref{appendix:Noethercharge_STEGR}. From this expression it is also clear that the entropy in STEGR will be the same as in GR. 
\end{remark}

\vspace{0.5cm}

\subsection{Reduction of phase space in STEGR} \label{sec:reduction_phasespace}

Let us calculate the phase space of STEGR. We start by calculating the boundary term $\theta^\mu$, arising from the variation of the Lagrangian density $\delta \mathcal{L} = E \delta \phi + \nabla_\mu \theta^{\mu}(\phi,\delta \phi)$. We will heavily rely on ideas of \cite{Wald1990}. What we do next is to assume a foliation by Cauchy surfaces $\Sigma_t$, parametrised by some `time' coordinate $t$, with unit normal $n^a$. We consider the lapse function $N$ and shift $N^i$ as in the ADM decomposition of GR, so that we can describe our metric $g_{ab}$ with $N, N^i$ and an induced spatial metric $h_{ij}$ on $\Sigma_t$. This will turn out to be sufficient to have a representation of our physical phase space. We consider variations on $\Sigma_t$ to obtain our phase space, so that our unit normal $n^a$ remains unchanged (as it is defined in the complement of $\Sigma_t$). Furthermore, we will assume $\Gamma$ to be symmetric, so that our connection is already torsion-free. In this setting we have the following. \begin{lemma} \label{lemma:Gamma_degeneracy_presymplectic} In STEGR, the variations $\delta \Gamma$ are degeneracy directions of the presymplectic form $\bm{\omega}$.
\end{lemma}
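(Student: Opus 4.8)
The plan is to build the symplectic current directly from the explicit STEGR symplectic potential and to show that it integrates to zero over a Cauchy slice $\Sigma_t$ whenever one of its field-variation slots is a pure variation of the connection. By a \emph{pure connection variation} I mean $\delta_2 g_{ab}=0$ together with an arbitrary symmetric $\delta_2\Gamma^{a}{}_{bc}$ (tangent to the zero-torsion constraint and, once $\delta_2 R=0$ is imposed, to the zero-curvature constraint as well). The claim is that $\Omega(\phi,\delta_1\phi,\delta_2\phi)=\int_{\Sigma_t}\bm{\omega}(\phi,\delta_1\phi,\delta_2\phi)=0$ for every admissible $\delta_1\phi$, which is exactly the statement that the $\delta\Gamma$-directions lie in the kernel of $\Omega$.

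First I would fix the symplectic potential to be the STEGR one of Remark~\ref{remark:boundary_terms_STEGR}, $\Theta^{d}=g^{b[c}\delta\Gamma^{d]}{}_{cb}+P^{dbc}\delta g_{bc}$, obtained from Lemma~\ref{lemma:palatiniboundary} with the curvature multiplier set to $\Lambda_{Ra}{}^{cbd}=\tfrac14(-g^{bc}\delta^{d}_{a}+g^{cd}\delta^{b}_{a})$; note that here $\tilde{\boldsymbol{\Theta}}$ reduces exactly to the $P^{dbc}\delta g_{bc}$ term, so the entire $\delta\Gamma$-dependence of $\boldsymbol{\Theta}$ sits in the first term. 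Any other admissible $\lambda_R$ differs by one of the symmetry transformations of Theorem~\ref{th:entropySTG}, which shift $\boldsymbol{\Theta}$ only by a $\bm{\mu}$-type piece or by a term vanishing on the constraint surface and hence leave $\bm{\omega}$ unchanged; the $\bm{Y}$-ambiguity is precisely the freedom whose triviality on the reduced phase space we are here establishing, so I keep this representative fixed throughout.

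Next I would assemble $\omega^{d}=\delta_1\Theta^{d}(\delta_2\phi)-\delta_2\Theta^{d}(\delta_1\phi)$. The contributions containing the second variations $\delta_1\delta_2\Gamma$ and $\delta_1\delta_2 g$ cancel by equality of mixed partial derivatives, leaving a bilinear antisymmetric expression in $(\delta_1\phi,\delta_2\phi)$; imposing $\delta_2 g_{ab}=0$ (hence $\delta_2 g^{ab}=0$ and $\delta_2\sqrt{-g}=0$) collapses it to
\[
\omega^{d}\big|_{\delta_2 g=0}=(\delta_1 g^{b[c})\,\delta_2\Gamma^{d]}{}_{cb}-(\delta_2 P^{dbc})\,\delta_1 g_{bc}.
\]
For a pure connection variation $\delta_2 Q_{abc}=-2\,\delta_2\Gamma^{e}{}_{a(b}g_{c)e}$ (the $\delta g=0$ case of \eqref{eq:variation_of_R_Q}), and since $P^{bc}{}_{a}$ is linear in $Q$ with purely metric and Kronecker coefficients (see \eqref{eq:form_nonmetrconj_P}), $\delta_2 P^{dbc}$ is a metric-built linear combination of $\delta_2\Gamma$. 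The core of the argument is then to insert those coefficients, contract with $\delta_1 g_{bc}$, and check that the outcome equals $(\delta_1 g^{b[c})\delta_2\Gamma^{d]}{}_{cb}$, so that $\omega^{d}|_{\delta_2 g=0}=0$ as a density — or, in the worst case, reduces to a spatial total divergence on $\Sigma_t$ whose integral vanishes by the asymptotic falloff used to define $\Omega$. Either way $\int_{\Sigma_t}\bm{\omega}=0$ for all $\delta_1\phi$, so the $\delta\Gamma$-directions are degeneracy directions of $\bm{\omega}$ and are quotiented out when one passes to the physical phase space, which is then coordinatised by the ADM metric data $(h_{ij},N,N^{i})$ alone.

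I expect the main obstacle to be exactly this final cancellation: one has to carry the index bookkeeping carefully and verify that the specific numerical coefficients of $P^{bc}{}_{a}$ and of the $\lambda_R\delta\Gamma$ boundary term conspire to kill — or to render exact — every monomial linear in $\delta\Gamma$ and in $\delta_1 g$. A secondary point requiring attention is checking that the restriction to variations preserving the flatness constraint ($\delta_2 R=0$; the $\delta_2 T=0$ condition being automatic from the symmetry of $\delta_2\Gamma$) does not interfere with the argument, and that no residual $\delta\Gamma$-dependence is hiding in $\tilde{\boldsymbol{\Theta}}$ — both under control here because $\tilde{\boldsymbol{\Theta}}$ is exactly $P^{dbc}\delta g_{bc}$ and because on the solution space $R=T=0$.
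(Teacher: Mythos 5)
Your route is sound and genuinely different in execution from the paper's. The paper works in a Hamiltonian-flavoured picture on a foliation: it defines momenta $\pi_{\Gamma}{}_a{}^{bc}=2\lambda_{Ra}{}^{bdc}n_d$ and $\pi_g{}_{bc}=-\sqrt{-g}P^d{}_{bc}n_d$, uses the on-shell value of $\lambda_R$ to turn $\pi_\Gamma$ into a function of the metric alone (so the $\delta\Gamma\,\delta\pi_\Gamma$ terms in $\bm{\omega}$ become $\delta\Gamma\,\delta g$ terms), and then exhibits the degeneracy direction as $\delta_2 g=0$ together with a \emph{prescribed} compensating $\delta_2\pi_g$ built from $\delta_2\Gamma$; it never checks that this compensating $\delta_2\pi_g$ is the one actually induced by $\delta_2\Gamma$ through $P[Q]$. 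You instead stay in the covariant phase space, let everything be induced, and reduce the claim to a single algebraic identity: that $\delta_2(\sqrt{-g}P^{dbc})\,\delta_1 g_{bc}$ cancels $\delta_1\bigl(\sqrt{-g}\,g^{b[c}\delta^{d]}_a\bigr)\delta_2\Gamma^{a}{}_{cb}$. This is a stronger and cleaner statement (pointwise vanishing of $\omega^d$, not merely a degeneracy of the integrated form after a choice of $\delta\pi_g$), and it closes the loop the paper leaves open. The shared key observation is the same in both: once $\lambda_R$ is on shell, the coefficient of $\delta\Gamma$ in $\bm{\Theta}$ depends only on the metric.

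The one genuine weakness is that you defer the decisive cancellation ("I expect the main obstacle to be exactly this final cancellation") rather than carrying it out. It does hold, and pointwise, not merely up to a spatial divergence: it is precisely the identity verified in Appendix \ref{appendix:omega=0STEGR} (Lemma \ref{lemma:omega}),
\begin{equation*}
\frac{1}{2}g^{ef}g^{c[a}\delta^{b]}_d\,L^{d}{}_{bc}-g^{ec}g^{f[a}\delta^{b]}_d\,L^{d}{}_{bc}+P^{aef}=0,
\end{equation*}
which is linear in the pair $(L,P)$ regarded as linear functions of an arbitrary symmetric $Q_{abc}$. Since a pure connection variation obeys $\delta_2 L^{a}{}_{bc}=\delta_2\Gamma^{a}{}_{bc}$ and $\delta_2 Q_{abc}=-2\,\delta_2\Gamma^{e}{}_{a(b}g_{c)e}$, i.e.\ exactly the same linear relation that ties $Q$ to $L$, the identity applies verbatim with $(L,P)\to(\delta_2\Gamma,\delta_2 P)$, and your $\omega^{d}\big|_{\delta_2 g=0}$ vanishes identically. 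Two small cautions: (i) your displayed expression for $\omega^{d}\big|_{\delta_2 g=0}$ must include the $\delta_1\sqrt{-g}=\tfrac12\sqrt{-g}\,g^{ef}\delta_1 g_{ef}$ contribution from the density weight of the $\lambda_R\delta\Gamma$ term --- it supplies the $\tfrac12 g^{ef}(\cdots)$ piece without which the cancellation fails; (ii) the restriction to flatness-preserving $\delta_2\Gamma$ is not actually needed for the algebra, since the identity above holds for arbitrary symmetric $\delta_2\Gamma^{a}{}_{bc}$.
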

\begin{proof} For a variation of $\mathcal{L}_{STEGR} = -\frac{1}{2}\sqrt{-g}\mathbb{Q}+\lambda_{Ra}{ }^{b cd} R^{a}{}_{b cd}+\lambda_{Ta}{ }^{bc} T^{a}{}_{bc}$ with respect to the connection $\Gamma$ and spatial metric $h$ we get: \begin{equation}
    \begin{split}
        \delta_{\Gamma} \mathcal{L} = E \delta \Gamma + \nabla_{d} (\delta \Gamma^{a}{}_{c b}2\lambda_{Ra}{}^{b d c}), \\
        \delta_g \mathcal{L} = E \delta g + \nabla_{d}(-\delta g^{b c} \sqrt{-g} P^{d}{}_{b c}),
    \end{split} 
\end{equation} where $E$ denote the equations of motion (suppressing indices in the first term), and we used the formulas for the variations of $R$ and $Q$ as in \eqref{eq:variation_of_R_Q}. This leads to a presymplectic current $\omega^d = \delta_1 \theta^{d}_2 - \delta_2 \theta^{d}_1$ (here $\theta_i = \theta(\phi,\delta_i \phi)$) of the following form \begin{equation} \begin{split}
    \omega^{d} n_d = \delta_2 (\Gamma^{a}{}_{c b}) \delta_1 (2\lambda_{Ra}{}^{b d c} n_d) - \delta_1 (\Gamma^{a}{}_{c b}) \delta_2 (2\lambda_{Ra}{}^{b d c} n_d) \\ - \delta_2 (g^{b c}) \delta_1 (\sqrt{-g}P^{d}{}_{b c} n_d)+ \delta_1 (g^{b c}) \delta_2 (\sqrt{-g} P^{d}{}_{b c} n_d), \end{split}
\end{equation} where for a variation $\delta g$ only variations of the spatial metric $h$ are meant, so the unit normal is left unchanged. This leads then to a presymplectic form (which is defined by $\omega = \int_{\Sigma} \omega_d n^d$) given by \begin{equation}\begin{split}
     \omega\bigg((\delta_1 h, \delta_1 \Gamma),(\delta_2 h, \delta_2 \Gamma) \bigg) = \int_{\Sigma} \bigg( \delta_2 (\Gamma^{a}{}_{c b}) \delta_1 (\pi_{\Gamma}{}_{a}{}^{b c}) - \delta_1 (\Gamma^{a}{}_{c b}) \delta_2 (\pi_{\Gamma}{}_a{}^{b c}) \\ + \delta_2 (g^{b c}) \delta_1 (\pi_{g}{}_{b c})- \delta_1 (g^{b c}) \delta_2 (\pi_{g}{}_{b c}) \bigg)
\end{split}
\end{equation} with \begin{equation}
\begin{split}
    \pi_{\Gamma}{}_{a}{}^{b c} = 2\lambda_{Ra}{}^{b d c} n_d, \\
    \pi_{g}{}_{b c} = -\sqrt{-g} P^{d}{}_{b c} n_d.
\end{split}    
\end{equation} 

As we are interested in a theory which satisfies constraints, we will look how the phase space gets reduced from these constraints. Let us look at STEGR where the constraints are imposed, that is from now on we assume the connection $\Gamma$ to be curvature- and torsion-free. As the Lagrange multiplier $\lambda_R$ is unphysical, we let it already satisfy its equations of motion, that is $\lambda_{Ra}{}^{cbd} =\frac{1}{4} \sqrt{-g} (-g^{bc}\delta^d_a + g^{cd}\delta^b_a)$ (see \eqref{eq:LambdaRentropyform} above). Note that the boundary term is unchanged under the symmetry transformation of $\lambda_R$, as proved in Theorem \ref{th:entropySTG}. This means that \begin{equation}
    \pi_{\Gamma}{}_a{}^{cd} =  \frac{1}{2}\sqrt{-g}(-n^c \delta^d_a + g^{cd} n_a),
\end{equation} so that \begin{equation} \begin{split}
    \delta \pi_{\Gamma}{}_a{}^{cd} &= \frac{1}{2}\sqrt{-g}n_a \delta g^{cd}-\frac{1}{4}\sqrt{-g} g_{ef}(-n^c \delta^d_a + g^{cd}n_a) \delta g^{ef}.
    \end{split}
\end{equation} With this substitution the presymplectic form becomes \begin{equation}
    \begin{split}
        \omega = \int_{\Sigma} &\bigg[ \delta_2  g^{bc}\bigg(\delta_1\pi_{g}{}_{bc}-\frac{1}{2}\sqrt{-g}\delta_1 \Gamma^{a}{}_{cb}n_a +\frac{1}{4}\sqrt{-g}g_{bc} \delta_1 \Gamma^a{}_{ed}(-n^d \delta^e_a + g^{de}n_a)\bigg) \\ &- \delta_1  g^{bc}\bigg(\delta_2\pi_{g}{}_{bc}-\frac{1}{2}\sqrt{-g}\delta_2\Gamma^{a}{}_{cb}n_a +\frac{1}{4}\sqrt{-g}g_{bc} \delta_2 \Gamma^a{}_{ed}(-n^d \delta^e_a + g^{de}n_a) \bigg) \bigg].
    \end{split}
\end{equation} Now for this presymplectic form, field variations $(\delta g, \delta \pi_g, \delta \Gamma)$ form a degeneracy direction if we take \begin{equation}
    \begin{split}
        \delta_2 g^{bc} &= 0, \\
        \delta_2 \pi_{g}{}_{bc} &= \frac{1}{2}\sqrt{-g}\delta_2 \Gamma^a_{cb} n_a-\frac{1}{4}\sqrt{-g}g_{bc} \delta_2 \Gamma^a{}_{ed}(-n^d \delta^e_a + g^{de}n_a).
    \end{split}
\end{equation} (And we let the $\delta_1$ variations be arbitrary.) Thus we can obtain degeneracy directions by letting $\Gamma$ vary arbitrarily, and choosing variations of $g, \pi_g$ as prescribed above. Hence we can identify phase space with the manifold of pairs $(g,\pi_g)$, with symplectic form \begin{equation}
    \Omega\bigg((\delta_1 g,\delta_1 \pi_g),(\delta_2 g, \delta_2 \pi_g) \bigg) = \int_{\Sigma} \bigg(\delta_2 (g^{b c}) \delta_1 (\pi_{g}{}_{b c})- \delta_1 (g^{b c}) \delta_2 (\pi_{g}{}_{b c}) \bigg).
\end{equation}
\end{proof}

\begin{remark}
Another way to get to a reduced phase space is the following. In STEGR with constrained connection we have \begin{equation}
    \mathbf{L}_{STEGR} = \mathbf{L}_{EH} + \mathrm{d}\bm{\mu}.
\end{equation} Taking variations, we get \begin{equation} \begin{split}
    \delta \mathbf{L}_{STEGR} = \delta\mathbf{L}_{EH} + \mathrm{d}\delta \bm{\mu} = \mathbf{E}_{EH}+\mathrm{d}( \bm{\Theta}_{EH}+\delta \bm{\mu} ), 
\end{split} 
\end{equation} where $\mathbf{E}_{EH}$ are the Einstein-Hilbert equations of motion. The symplectic form is then given by (in coordinates, see equation (2.19) in \cite{Wald1990}) \begin{equation} 
    \omega^a = \delta_1 \theta_2^a - \delta_2 \theta_1^a,
\end{equation} for two variations of the fields $\phi(\lambda_1,\lambda_2)$, where the boundary term $\theta_i^a$ is the dual density of the boundary term $\bm{\Theta}(\phi,\delta_i \phi)$, which has been obtained from the variation $\delta_i \bm{L}$ of the lagrangian. From this formula we see that the $\delta \bm{\mu}$-term won't contribute to $\omega$: \begin{equation}
    \omega^{a} = \delta_1 \delta_2 \mu^a - \delta_2 \delta_1 \mu^a = 0, 
\end{equation} because of commutativity of partial derivatives ($\delta_i \mu = \frac{\partial \mu}{\partial \lambda_i})$. Thus we conclude that \begin{equation}
    \bm{\omega}_{STEGR} = \bm{\omega}_{EH},
\end{equation} which we will just name $\bm{\omega}$ now. The presymplectic form $\bm{\omega}_{EH}$ obtained from the Einstein-Hilbert action will not involve the connection $\Gamma$, and thus every `connection direction' in the configuration space (or in the phase space) will be a degenerate direction for the presymplectic form $\bm{\omega}$. This means that in order to get the `kinematically possible' phase space (using the language of \cite{Wald1990}) we have to quotient out all connection directions in phase space.
\end{remark}

\vspace{0.5cm}

\subsection{First law of black hole thermodynamics in STEGR}
The first law derivation demands that we have
\begin{equation} \label{eq:omega=0}
    \omega(\phi,\delta\phi,\Lie\phi)=0\ ,
\end{equation}
where $\omega$ is defined as in Definition \ref{def:omega} and $\phi$ denotes all the dynamical fields, including the connection. We could require $\Lie\phi=0$ for all $\phi$, but we want to see what happens if we demand $\Lie g_{ab}=0$ only. In GR this immediately implies $\Lie\Gamma_{LC}=0$, but for general connections this does not hold. However, as shown in the previous section, we will have $\mathcal{L}_V \Gamma = 0$ in reduced phase space, which thus then yields $\omega(\phi,\delta \phi, \mathcal{L}_V \phi) = 0$ in reduced phase space. Alternatively, from the STEGR boundary term \begin{equation}
    \Theta^a=\sqrt{-g}\left(-g^{c[a}\delta^{b]}_d\delta{\Gamma^d}_{bc}+\sqrt{-g} P^{abc}\delta g_{bc}\right)\ ,
\end{equation} we can calculate directly that $\omega(\phi, \delta \phi, \mathcal{L}_V \phi) = 0$ for the Killing vector field. For  a proof of this, see Appendix \ref{appendix:omega=0STEGR}. Taking this reduction of phase space into account and using the results from Section \ref{sec:firstlawBH}, the first law of black hole thermodynamics in STEGR will thus have the same form as in GR   \begin{equation}
     \frac{\kappa}{2\pi} \delta S = \delta E -   \Omega_H^{(\mu)} \delta \mathcal{J}_{(\mu)}.
\end{equation}

\vspace{0.5cm}

\subsection{Energy in STEGR}

Above we have defined the energy, which appears in the first law, from the Noether charge $\bm{Q}_N$ and the boundary term $\bm{\Theta}$ by \begin{equation}
    E = \int_{\infty} (\mathbf{Q}_N[t] - i_{t} \mathbf{B}), 
\end{equation} with $t^a$ an aymptotic time translation vector field and where $\bm{B}$ is defined by \begin{equation}
    \delta \int_{\infty}( i_{V} \mathbf{B}) = \int_{\infty} (i_{V} \bm{\Theta}).
\end{equation} For STEGR we have the boundary tensor (see Remark \ref{remark:boundary_terms_STEGR})
\begin{equation}
    \Theta^a=\frac{1}{2}\left(-g^{ac}\delta\Gamma^b_{bc}+g^{bc}\delta\Gamma^a_{bc}\right)+P^{abc}\delta g_{bc}
\end{equation}
leading to the Noether charge (for a calculation, see Appendix \ref{appendix:Noethercharge_STEGR})
\begin{equation}
    (\bm{Q}_N)_{fg} = \frac{1}{2}\bm{\epsilon}_{abfg}\bigg( -\mathcal{D}^{a}V^b + \delta^{[b}_d (Q^{a]}-\tilde{Q}^{a]})V^d \bigg).
\end{equation} To calculate the energy we will take $V = t^a$. Assuming the variations $\delta g$ and $\delta \Gamma$ vanish at infinity, we can take $\bm{B} = \bm{0}$, so that \begin{equation}
    E = \int_\infty \bm{Q}_N[t].
\end{equation} From the expression of $\bm{Q}_N$, we recognise in the first term exactly one-half of the Komar mass (see also equation (86) in \cite{Wald1994}). Thus for the energy we get \begin{equation} \label{eq:energySTEGR}
    \begin{split} 
        E &= \int_\infty \frac{1}{2}\bm{\epsilon}_{abfg}\bigg( -\mathcal{D}^{a}t^b + \delta^{[b}_d (Q^{a]}-\tilde{Q}^{a]})t^d \bigg),
    \end{split} 
\end{equation} which becomes the following in an asympotically flat vacuum spacetime (for a calculation see Appendix \ref{appendix:calc_energy_integral}) \begin{equation} \label{eq:energySTEGR2}
    \begin{split} 
        E = \int_{\infty} \mathrm{d}S \bigg( -\frac{1}{2}\partial_r g_{tt} + \frac{1}{2}\partial_t g_{rt} - \frac{1}{2}(Q^{r}-\tilde{Q}^{r}) \bigg),
    \end{split} 
\end{equation} which is to be evaluated on the equations of motion: that is for a torsion-free and curvature-free connection $\Gamma$ with $g, \Gamma$ satisfying the metric equation of motion. Recall that the equation of motion from varying the connection can be used to put a form of $\lambda_R$ and $\lambda_T$, without restricting $g$ or $\Gamma$. The remaining equations of motion for $g$ and $\Gamma$ in STEGR can be put in the form \begin{equation} \label{eq:STEGR_eom_g_R0}
    \begin{split} 
        \frac{2}{\sqrt{-g}} \nabla_{a}\left(\sqrt{-g} P^{a}{ }_{b c}\right)-q_{b c}-\mathbb{Q} g_{b c}=\mathfrak{T}_{b c}, \\ {\mathcal{R}^a}_{dbc}(g) = 2\nabla_{[b}{L^a}_{c]d}+ 2{L^a}_{[b|\lambda|}{L^\lambda}_{c]d},
    \end{split}
\end{equation} where the connection $\Gamma$ is taken to be symmetric, $\mathfrak{T}_{b c}=-\frac{2}{\sqrt{-g}} \frac{\delta \mathcal{S}_{M}}{\delta g^{b c}}$ is the matter energy-momentum and $q_{bc}$ stands for \begin{equation} \begin{split}
q_{b c} =-\frac{1}{4}\left(2 Q_{a d b} Q^{a d}{ }_{c}-Q_{b a d} Q_{c}{}^{a d}\right) +\frac{1}{2} Q_{a d b} Q^{d a}{ }_{c} \\ +\frac{1}{4}\left(2 Q_{a} Q^{a}{ }_{b c}-Q_{b} Q_{c}\right) -\frac{1}{2} \tilde{Q}_{a} Q^{a}{ }_{b c}. \end{split}
\end{equation} The first line in \eqref{eq:STEGR_eom_g_R0} is the equation of motion obtained from varying the metric (see equation (20) in \cite{HeisenbergL18}), and the second equation is the zero curvature constraint on $\Gamma$ (see also equation (25) in \cite{Heisenberg18}). Here $L$ is related to $Q$ by \eqref{eq:relateLQ}.

\vspace{0.5cm}

There is one case in which the expression for the energy for a vacuum asymptotically flat spacetime as in \eqref{eq:energySTEGR} becomes much simpler immediately, namely when choosing the coincident gauge $\Gamma = 0$. In this case, we get the following expression for the energy (for a calculation see Appendix \ref{appendix:calc_energy_integral}): \begin{equation}
    E = \frac{1}{2}\int_{\infty} \mathrm{d}S h^{i j}\left(\partial_{i} h_{r j}-\partial_{r} h_{i j}\right) = M_{ADM},
\end{equation} which we recognize as the ADM mass from General Relativity (see also equation (89) in \cite{Wald1990}). Here $h_{ij} = g_{ij}$ denotes the spatial metric, with $i,j$ spatial indices. 

\vspace{0.5cm}

\section{Results, discussion and further research} \label{sec:summary}
The geometrical formulation of gravity allows a multitude of interpretations and there is no unique manifestation of GR. The same theory of gravity can be either interpreted as the curvature of space-time, as Einstein did, or as a manifestation of the torsion or the non-metricity tensor. The latter allows a simpler formulation of GR purged from boundary terms. 

In this work we have applied Wald's Noether charge method for calculating the entropy of black holes to gravity theories with a general independent connection. We have focused here on the computation of the entropy in STEGR, but have introduced appropriate generalisations to theories with general connections, the so called Palatini theories. 
Wald's method for calculating black hole entropy relies solely on the presence of diffeomorphism invariance, which leads to a current and charge. Introducing diffeomorphisms by a Killing vector field, the entropy can be defined as the integral of the charge at the bifurcation surface.
As mentioned in Section \ref{sec:firstlawBH}, there seems to be an ambiguity in the entropy (which we named the $\bm{Y}$-ambiguity) coming from an ambiguity in the definition of the boundary term, namely the freedom to add some exact form. In STEGR however, the connection is constrained, imposed by Lagrange multipliers in the action and a reduction in phase space can be performed. After this reduction in phase space, this $\bm{Y}$-ambiguity is removed. When working in this reductionized phase space, we get the exact same expression for the entropy in STEGR as in GR. Furthermore, we obtained that the first law of black hole thermodynamics also holds in STEGR, and gave an explicit expression for the energy appearing in this law in STEGR as well. When working in the coincident gauge in STEGR, we evaluated the energy explicitly in an asymptotically flat vacuum spacetime and found that it equals the ADM mass, as in GR. 

\vspace{0.5cm}

Further research could involve applying this generalised Noether charge method to other Palatini theories of gravity with arbitrary connections and even generalized gravity theories beyond GR to find the entropy, such as $f(Q)$ theories or space-times endowed with curvature and torsion.  

\vspace{0.5cm}

\section*{Acknowledgements}
We would like to thank R. Wald for useful discussions. LH is supported by funding from the European Research Council (ERC) under the European Unions Horizon 2020 research and innovation programme grant agreement No 801781 and by the Swiss National Science Foundation grant 179740.

\vspace{1cm}


\printnomenclature


\begin{appendices}
\section{Boundary term in STEGR leading to same entropy as in GR} \label{appendix:boundarytermSTEGR_same_entropy_GR}

\begin{lemma} \label{lemma:deltagammaEHentropy}
A boundary term $\Theta^a$ as defined in \eqref{eq:defTheta} of the form \begin{equation} \label{eq:boundarydeltagammaentropy}
    \frac{1}{2}(-g^{ae}\delta \Gamma^{h}_{he} + g^{eh}\delta \Gamma^a_{eh})
\end{equation}  will lead to the same entropy as in GR from the Einstein-Hilbert action if the connection is curvature-free and torsion-free.
\end{lemma}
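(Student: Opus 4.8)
The plan is to compute the Noether charge generated by the boundary term \eqref{eq:boundarydeltagammaentropy}, integrate it over the bifurcation surface $\Sigma$, and match the result with the Einstein--Hilbert value $S=\tfrac{1}{4}\,\mathrm{Area}(\Sigma)$. Since \eqref{eq:boundarydeltagammaentropy} contains no $\delta g$ piece, by Lemma \ref{lemma:palatiniboundary}, Theorem \ref{th:palatinientropy} and Remark \ref{remark:importantboundary} the only part of $\bm{\Theta}$ that can contribute to the entropy is the one that is second order in $V$ after the substitution $\delta\Gamma=\Lie\Gamma$. The first step is therefore to specialise the connection Lie derivative $\Lie\Gamma^a{}_{bc}=\nabla_b\nabla_cV^a-T^a{}_{cd}\nabla_bV^d-(R^a{}_{cbd}+\nabla_bT^a{}_{cd})V^d$ to a curvature- and torsion-free $\Gamma$, where it collapses to $\Lie\Gamma^a{}_{bc}=\nabla_b\nabla_cV^a$, and to insert this into \eqref{eq:boundarydeltagammaentropy}, giving
\begin{equation}
\Theta^a(\Lie\Gamma)=\tfrac{1}{2}\big(-g^{ae}\nabla_h\nabla_eV^h+g^{eh}\nabla_e\nabla_hV^a\big),
\end{equation}
which is manifestly of the schematic form $\nabla\nabla V$ plus terms of lower order in $V$.

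Next I would read off the Noether charge. Comparing \eqref{eq:boundarydeltagammaentropy} with the generic decomposition $\Theta^a=2E_{Rd}{}^{cab}\delta\Gamma^d{}_{bc}+\tilde\Theta^a$ of Lemma \ref{lemma:palatiniboundary}, and using that $\delta\Gamma^d{}_{bc}$ is symmetric in its lower pair while $E_{Rd}{}^{cab}$ is antisymmetric in $(a,b)$, one finds that the effective $E_R$ here is precisely (up to the $\bm{Y}$-ambiguity and terms that do not feed the entropy) the $\Lambda_R$ of \eqref{eq:LambdaRentropyform}. Lemma \ref{lemma:formNoerthercharge_dynconnection} then yields $\mathbf{Q}_N=\mathbf{X}^{cd}\nabla_cV_d+(\text{terms linear in }V)$ with $(\mathbf{X}^{cd})_{a_3\ldots a_n}=\varepsilon_{aba_3\ldots a_n}E_R{}^{dcab}$, and Theorem \ref{th:palatinientropy} gives $S=2\pi\int_\Sigma\mathbf{X}^{cd}\mathbf{b}_{cd}$. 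On $\Sigma$ the Killing field $V$ vanishes, so the terms linear in $V$ drop; moreover $\nabla_cV_d$ may be replaced by $\mathcal{D}_{[c}V_{d]}=\kappa\,\mathbf{b}_{cd}$, because the $\nabla$--$\mathcal{D}$ difference is built from the disformation $L$ contracted with $V$ (zero on $\Sigma$) and with $\mathcal{D}_{(c}V_{d)}=\tfrac{1}{2}\Lie g_{cd}=0$. Inserting the explicit $E_R=\Lambda_R$ then collapses $\mathbf{X}^{cd}\mathbf{b}_{cd}$ to the metric expression reproducing Wald's Einstein--Hilbert Noether two-form $\propto\epsilon_{abcd}\mathcal{D}^cV^d$, whence $S=\tfrac{1}{4}\,\mathrm{Area}(\Sigma)$.

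A transparent cross-check runs in parallel: compare \eqref{eq:boundarydeltagammaentropy} evaluated on $\Lie\Gamma$ directly with the Einstein--Hilbert symplectic potential evaluated on $\Lie\connsym{a}{bc}$ for the same $V$. Both reduce to the same contraction of the metric with $\mathcal{D}_b\mathcal{D}_cV^a$ up to terms proportional to $L$ times $\mathcal{D}V$ or $V$, which are at most linear in $V$ and hence irrelevant at $\Sigma$; the residual $V$-linear pieces in either expression likewise do not survive at the bifurcation surface (cf.\ Proposition 4.1 and Theorem 6.1 of \cite{Wald1994}). So the two Noether charges agree on $\Sigma$ and so do the entropies, which is the content of the lemma.

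The main obstacle is the bookkeeping in the descent step: carrying out the algorithm of Lemma 1 of \cite{Wald1990b} on the $\nabla\nabla V$ expression while correctly separating the genuinely second-order-in-$V$ (``important'') pieces from the linear-in-$V$ ones that are irrelevant at $\Sigma$, and doing so consistently with the relevant symmetrisations ($\delta\Gamma$ symmetric in its lower pair for a torsion-free variation; $E_R$ antisymmetric in its last pair). The delicate point, where a sign or index slip could leak into the final answer, is the replacement $\nabla\to\mathcal{D}$: one must verify that every correction it generates is proportional to $L\cdot V$ or to $L\cdot\mathcal{D}_{(c}V_{d)}$ and therefore vanishes on the bifurcation surface, so that nothing beyond the Einstein--Hilbert contribution remains.
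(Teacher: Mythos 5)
Your proposal is correct, and in fact contains two independent routes, one of which (your ``cross-check'') is essentially the paper's own proof. The paper proceeds by inserting $\Lie\Gamma^a{}_{bc}=\nabla_b\nabla_cV^a$ into \eqref{eq:boundarydeltagammaentropy} and explicitly rewriting the second-derivative-of-$V$ part as $\tfrac12 g^{ae}g^{fh}(\mathcal{D}_f\delta g_{eh}-\mathcal{D}_e\delta g_{fh})$ with $\delta g=\Lie g$ --- i.e.\ it converts the connection variation into a metric variation and matches the Einstein--Hilbert symplectic potential term by term, discarding the $V(\ldots)$ and $\nabla V(\ldots)$ remainders by Remark \ref{remark:importantboundary}; the descent to $\mathbf{Q}_N$ is never performed explicitly there. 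Your primary route instead identifies the effective $E_{Rd}{}^{cab}$ in the decomposition $\Theta^a=2E_{Rd}{}^{cab}\delta\Gamma^d{}_{bc}+\tilde\Theta^a$ as the $\Lambda_R$ of \eqref{eq:LambdaRentropyform}, and then invokes Lemma \ref{lemma:formNoerthercharge_dynconnection} and Theorem \ref{th:palatinientropy} to read off $S=2\pi\int_\Sigma\mathbf{X}^{cd}\mathbf{b}_{cd}$ directly; this is the strategy the paper uses for the STEGR entropy theorem in the main text, but not for this particular appendix lemma. The two approaches buy different things: the paper's version is self-contained and makes the cancellation of the non-Einstein--Hilbert pieces visible at the level of $\bm{\Theta}$, while yours is shorter because it reuses the general machinery, at the cost of having to justify the replacement $\nabla_cV_d\to\mathcal{D}_{[c}V_{d]}$ on $\Sigma$ (which you do correctly: the difference is $-L^e{}_{cd}V_e$, vanishing with $V$, plus the symmetric part killed by $\Lie g=0$). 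The only caveat is your closing normalization $S=\tfrac14\,\mathrm{Area}(\Sigma)$, which depends on $16\pi G$ conventions the paper does not carry; the lemma itself only asserts agreement with the Einstein--Hilbert result, which both of your arguments establish.
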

\begin{proof}
Now let us look at the boundary term as in \eqref{eq:boundarydeltagammaentropy} when the variation of $\Gamma$ induced by a vector field $V$. As stated in Appendix A of \cite{Heisenberg2020} (with a different sign convention), such a variation of $\Gamma$, for zero torsion and zero curvature, is given by \begin{equation}
    \mathcal{L}_{V} \Gamma^a_{bc} = \nabla_b \nabla_c V^a.
\end{equation} Using this in \eqref{eq:boundarydeltagammaentropy}, we get \begin{equation} \label{eq:deltagamma2}
    \begin{split}
         -g^{ae}\delta \Gamma^{h}_{he} + g^{eh}\delta \Gamma^a_{eh} &= g^{ae} \nabla_h \nabla_e V^h + g^{eh}\nabla_e \nabla_h V^a \\ &=  -g^{ae} \nabla_h \nabla_e (g^{hf} V_f) + g^{eh}\nabla_e \nabla_h (g^{af} V_f) \\ &= -g^{ae} g^{hf} \nabla_h \nabla_e V_f + g^{eh} g^{af} \nabla_h \nabla_e  V_f + V(\ldots) + \nabla V (\ldots) \\ &= g^{ae} g^{fh} (-\nabla_h \nabla_e V_f + \nabla_f \nabla_h V_e ) + V(\ldots) + \nabla V (\ldots) \\ &= g^{ae} g^{fh} (\mathcal{D}_f \delta g_{eh}-\mathcal{D}_e \delta g_{fh}) +V(\ldots) + \nabla V (\ldots).
    \end{split} 
    \end{equation} 
Here we have separated the important contributions for when $V$ is the Killing vector field, to be evaluated at the bifurcation surface, from the contributions which will give zero when evaluated in the Noether charge at the bifurcation surface, see also Remark \ref{remark:importantboundary}. Furthermore we have used that $\mathcal{L}_V g_{ab} = 2 \mathcal{D}_{(a} V_{b)}$. As calculated on p. 12 of \cite{Wald1994}, the boundary term for GR from the Einstein-Hilbert action with Lagrangian density $\mathcal{L} = \frac{1}{2}\sqrt{-g}\mathcal{R}(g)$ is equal to $\frac{1}{2}g^{ae} g^{fh} (\mathcal{D}_f \delta g_{eh}-\mathcal{D}_e \delta g_{fh})$, thus the expression for the boundary term in \eqref{eq:boundarydeltagammaentropy} leads to the same entropy as in GR. 
\end{proof}
\begin{remark}
Actually the above proof can be extended to the case where the connection is only curvature free. If the connection is curvature free, the variation induced by a vector field $V$ is given by (see \cite{Heisenberg2020}) \begin{equation}
        \label{eq:vargammatorsion} \begin{split}
            \mathcal{L}_V \Gamma^a_{\;\;\;bc }=\nabla_{b} \nabla_{c} V^a-T_{c d}^{a} \nabla_{b} V^{d}-(\nabla_{b} T_{c d}^{a}) V^{d},
        \end{split}
    \end{equation} which is the same as when the connection is also torsion-free, up to terms $\nabla V$ and $V$, which will not contribute to the entropy at the bifurcation surface.
\end{remark}

\vspace{0.5cm}

\section{Proof of equation \ref{eq:omega=0} in STEGR} \label{appendix:omega=0STEGR}
\begin{lemma} \label{lemma:omega}
When ignoring the $\mathbf{Y}$-ambiguities and working with the STEGR boundary term (as given in Remark \ref{remark:boundary_terms_STEGR}) \begin{equation}
    \Theta^a=\sqrt{-g}\left(-g^{c[a}\delta^{b]}_d\delta{\Gamma^d}_{bc}+P^{abc}\delta g_{bc}\right)\ ,
\end{equation} we have $\omega(\phi,\delta \phi, \mathcal{L}_V \phi) = 0$, where $\phi$ denote the dynamical fields $(g,\Gamma)$ and $V$ denotes the Killing vector field. 
\end{lemma}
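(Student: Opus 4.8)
The plan is to compute the symplectic current directly from the explicit STEGR boundary form and show that the Killing condition forces it to vanish. For a two-parameter family of fields with $\delta_1\phi=\delta\phi$ an arbitrary variation and $\delta_2\phi=\mathcal L_V\phi$, observe first that the boundary form in the statement (equivalently \eqref{eq:exact_BT_STEGR}), written in density form as $\Theta^a=\sqrt{-g}\big(-g^{c[a}\delta^{b]}_d\,\delta\Gamma^d{}_{bc}+P^{abc}\delta g_{bc}\big)$, is a local covariant functional of the \emph{tensors} $g$, $Q=\nabla g$, $\delta g$ and $\delta\Gamma$ — the connection never enters bare, only through $\delta\Gamma$ or through the tensor $Q$. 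Hence the $\delta_2$-variation of $\Theta^a(\phi,\delta\phi)$ is simply its Lie derivative, and
\begin{equation}
\omega^a(\phi,\delta\phi,\mathcal L_V\phi)=\delta\big(\Theta^a(\phi,\mathcal L_V\phi)\big)-\mathcal L_V\big(\Theta^a(\phi,\delta\phi)\big).
\end{equation}

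Next I would simplify using that $V$ is Killing: $\mathcal L_V g_{ab}=0$, $\mathcal L_V\sqrt{-g}=0$, and $\mathcal L_V\delta g_{ab}=\delta(\mathcal L_V g_{ab})=0$. This removes the $P^{abc}\mathcal L_V g_{bc}$ piece from the first term and the Lie derivative of $g$, $\sqrt{-g}$, $\delta g$ from the second. The two contributions proportional to $g^{c[a}\delta^{b]}_d(\mathcal L_V\delta\Gamma)^d{}_{bc}$ — one from $\delta$ acting on $(\mathcal L_V\Gamma)^d{}_{bc}$ in the first term (using $\delta\mathcal L_V\Gamma=\mathcal L_V\delta\Gamma$), the other from the second term — then cancel, leaving
\begin{equation}
\omega^a=-\,\delta\!\big(\sqrt{-g}\,g^{c[a}\delta^{b]}_d\big)\,(\mathcal L_V\Gamma)^d{}_{bc}-\sqrt{-g}\,(\mathcal L_V P^{abc})\,\delta g_{bc},
\end{equation}
which is linear in $\delta g$ alone: all explicit $\delta\Gamma$-dependence has disappeared. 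Substituting the on-shell values $(\mathcal L_V\Gamma)^d{}_{bc}=\nabla_b\nabla_c V^d$ (valid because on shell the connection is curvature- and torsion-free, cf.\ the variation formula used in Lemma \ref{lemma:deltagammaEHentropy}) and, correspondingly, $\mathcal L_V Q_{abc}=-2\,g_{e(c}\nabla_{|a|}\nabla_{b)}V^e$, re-expresses $\omega^a$ entirely in terms of $\delta g$, second $\nabla$-derivatives of $V$, and the background $g$, $Q$.

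The remaining task — showing this residual vanishes — is the main obstacle, and I would present the short argument first: on the constraint surface $\mathbf L_{STEGR}=\mathbf L_{EH}+\mathrm d\bm\mu$, so $\bm\omega_{STEGR}=\bm\omega_{EH}$ up to an exact form (as in the remark following Lemma \ref{lemma:Gamma_degeneracy_presymplectic}), and $\bm\omega_{EH}$ depends only on $(g,\delta g)$; since $\mathcal L_V g=0$ it vanishes, so $\omega(\phi,\delta\phi,\mathcal L_V\phi)=0$, the exact-form ambiguity being exactly the $\mathbf Y$-type term the statement tells us to ignore. Equivalently, since $\mathcal L_V g=0$ the variation $\mathcal L_V\phi=(0,\mathcal L_V\Gamma)$ is a pure connection direction, which Lemma \ref{lemma:Gamma_degeneracy_presymplectic} identifies as a degeneracy direction of the reduced symplectic form. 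The alternative, fully explicit route is to grind the residual directly: decompose $\nabla=\mathcal D+L$, use the Killing identity ($\mathcal D_{(a}V_{b)}=0$ and $\mathcal D_a\mathcal D_b V_c$ proportional to $\mathcal R(g)\!\cdot\!V$) together with the on-shell curvature constraint in \eqref{eq:STEGR_eom_g_R0} expressing $\mathcal R(g)$ through $L$ (hence $Q$), and integrate once by parts on the Cauchy surface; the terms should cancel up to a total derivative whose integral vanishes under the asymptotic fall-off and because $V$ vanishes on the bifurcation surface. I expect this index bookkeeping to be the only genuinely laborious step; everything preceding it is formal Lie-derivative algebra.
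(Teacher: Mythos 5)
Your setup reproduces the paper's reduction faithfully: computing $\omega^a=\delta\Theta^a(\phi,\mathcal L_V\phi)-\mathcal L_V\Theta^a(\phi,\delta\phi)$, cancelling the cross terms via $\delta\mathcal L_V=\mathcal L_V\delta$, and using $\mathcal L_Vg=0$ to arrive at a residual of the form $-\delta(\sqrt{-g}\,g^{c[a}\delta^{b]}_d)\,\mathcal L_V\Gamma^d{}_{bc}-\sqrt{-g}\,(\mathcal L_VP^{abc})\,\delta g_{bc}$ is exactly the paper's intermediate expression. But the decisive step is missing. The paper closes the argument with a purely pointwise algebraic identity: after expanding $\delta\sqrt{-g}$ and $\delta g^{ab}$ in terms of $\delta g_{ef}$ and splitting $\Gamma=\connsym{}{}+L$, the non-metric part of the bracket, namely
\begin{equation}
\tfrac{1}{2}g^{ef}g^{c[a}L^{b]}{}_{bc}-g^{ec}g^{f[a}L^{b]}{}_{bc}+P^{aef}
\end{equation}
(symmetrized in $e,f$), vanishes \emph{identically} once the explicit STEGR expressions \eqref{eq:relateLQ} and \eqref{eq:Pbca} for $L$ and $P$ in terms of $Q$ are inserted. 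What remains inside the bracket is built solely from the Levi-Civita connection, so $\mathcal L_V$ annihilates it by the Killing condition, giving $\omega^a=0$ at every point with no integration by parts and no appeal to boundary conditions. This identity is the actual content of the lemma, and your proposal explicitly defers it ("the terms should cancel\dots I expect this index bookkeeping to be the only genuinely laborious step").

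Neither of your two substitutes establishes the claim as stated. The short argument via $\mathbf L_{STEGR}=\mathbf L_{EH}+\mathrm d\bm\mu$ only shows that the two symplectic \emph{currents} agree up to an exact form $\mathrm d(\delta_1\mathbf Y(\delta_2)-\delta_2\mathbf Y(\delta_1))$, i.e.\ it yields the integrated statement under suitable fall-off, not the pointwise vanishing of $\omega^a$ that the lemma asserts; and identifying that discrepancy with "the $\mathbf Y$-term we are told to ignore" is a misreading of the hypothesis — the lemma instructs you to work with the \emph{specific} $\bm\Theta$ of Remark \ref{remark:boundary_terms_STEGR} and prove $\omega=0$ for it, not to prove $\omega=0$ modulo $\mathbf Y$-contributions. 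Moreover, on the constraint surface the Lagrange-multiplier terms still contribute $\lambda_R\,\delta\Gamma$ to the boundary term even though $\lambda_RR$ vanishes from $\mathbf L$, so the relation between $\bm\Theta_{STEGR}$ and $\bm\Theta_{EH}+\delta\bm\mu$ is precisely the nontrivial point, not something that can be waved through. The brute-force alternative you sketch, relying on integration by parts over the Cauchy surface and the vanishing of $V$ on the bifurcation surface, again targets the integrated quantity rather than the current, and in any case is not carried out. To complete the proof you should perform the $L$--$P$ cancellation explicitly as in Appendix \ref{appendix:omega=0STEGR}.
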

\begin{proof}
By definition of $\omega$ (see Definition \ref{def:omega}), the boundary term 
\begin{equation}
    \Theta^a=\sqrt{-g}\left(-g^{c[a}\delta^{b]}_d\delta{\Gamma^d}_{bc}+P^{abc}\delta g_{bc}\right)\ , \end{equation}
leads to
\begin{equation}
    \omega^a=\Lie(\sqrt{-g}g^{c[a}\delta^{b]}_d)\delta{\Gamma^d}_{bc}-\Lie(\sqrt{-g}P^{abc})\delta g_{bc}-\delta(\sqrt{-g}g^{c[a}\delta^{b]}_d)\Lie{\Gamma^d}_{bc}+\delta(\sqrt{-g}P^{abc})\Lie g_{bc}\ .
\end{equation} 
Now we set $\Lie g_{ab}=0$, which leaves
\begin{equation}
\omega^a=-\sqrt{-g}\delta g_{ef}\left(\frac{1}{2}g^{ef} g^{c[a}\delta^{b]}_d\Lie{\Gamma^d}_{bc}-g^{e c} g^{f [a}\delta^{b]}_d\Lie{\Gamma^d}_{bc}+\Lie P^{aef}\right)\ ,    
\end{equation}
with the first term coming from $\delta\sqrt{-g}$. Since $\Lie g_{ab}=0$ we can then pull the $\Lie$ in front of the bracket. We then split the connection $\Gamma=\{\}+L$ as in \eqref{eq:connectiondecomp}, and focus on the $L$ part first. The contribution inside the bracket is
\begin{equation} \begin{split}
    \frac{1}{2}&g^{ef}g^{c[a}{L^{b]}}_{bc}-g^{e c} g^{f[a} {L^{b]}}_{bc}+P^{aef}\\&=\frac{1}{8}g^{ef}\left(-2Q^a+2\tilde Q^a\right)- \frac{1}{4}\left(- g^{a(f} Q^{e)}-Q^{aef}+2Q^{e af}\right)\\
    &+\frac{1}{4}\left(-Q^{aef}+2Q^{e af}+g^{ef}(Q^a-\tilde Q^a)-g^{a(e}Q^{f)}\right)\\
    &=0\ ,
\end{split}
\end{equation}
where we used the symmetry in $e$ and $f$ from $\delta g_{ef}$ in $\omega^a$, and used explicit expressions for $L$ and $P$ in STEGR as given in \eqref{eq:relateLQ} and \eqref{eq:Pbca}. This leaves
\begin{equation}
    \omega^a=-\sqrt{-g}\delta g_{ef}\mathcal{L}_V \left( \frac{1}{2}g^{ef}g^{c[a}\delta^{b]}_d \connsym{d}{bc}-g^{e c}g^{f[a}\delta^{b]}_d\connsym{d}{bc}\right)=0\ ,
\end{equation}
which vanishes on the account of $\Lie g_{ab}=0$.
\end{proof}

\vspace{0.5cm}

\section{Noether charge calculation in STEGR}
\label{appendix:Noethercharge_STEGR}
In this section we derive the Noether charge in STEGR using the algorithm of Lemma 1 in \cite{Wald1990b} from the boundary term for STEGR. Let us first explain the Noether charge and the algorithm from Lemma 1 in \cite{Wald1990b}. As described above, from the boundary term $\bm{\Theta}(\phi,\delta \phi)$ (obtained from varying the Lagrangian) we define the Noether current $\bm{J}$ (an $(n-1)$-form) by \begin{equation}
    \mathbf{J} = \bm{\Theta}(\phi,\mathcal{L}_V \phi) - i_{V} \mathbf{L},
\end{equation} which is closed on-shell, as proved in Lemma \ref{lemma:J_closed}, for all vector fields $V$. Thus as described in Lemma 1 in \cite{Wald1990b} there exists an $(n-2)$-form $\bm{Q}_N$ such that $\mathrm{d}\bm{Q}_N = \bm{J}$ on-shell; this $(n-2)$-form is called the Noether charge. In that same lemma the following algorithm for obtaining $\mathbf{Q}_N$ is given. The closed $p$-forms that are considered have the following form: \begin{equation}
    \alpha_{a_{1} \cdots \alpha_{p}}=\sum_{i=0}^{k} A^{(i)}{}_{a_{1} \cdots a_{p}}{}^{b_{1} \cdots b_{i}}{}_{\mu} \nabla_{\left(b_{1}\right.} \cdots \nabla_{\left.b_{i}\right)} V^{\mu},
\end{equation} where it is assumed that the coefficients $A$ satisfy \begin{equation}
    A^{(i)}{ }_{a_{1} \cdots a_{p}}{}^{b_{1} \cdots b_{i}}{ }_{\mu}=A^{(i)}{}_{[a_{1} \cdots a_p]}{ }^{(b_{1} \cdots b_{j})}{ }_{\mu}.
\end{equation} Here $\nabla$ is any (fixed) affine connection. As $\alpha$ is closed we have that \begin{equation}
    \nabla_{[c} \alpha_{a_1 \ldots a_p]} = 0.
\end{equation} One first looks at the term in $\alpha$ with the highest number of derivatives, and defines the $(p-1)$-form $\tau$ by \begin{equation} \label{eq:algorithm}
    \tau_{a_{2} \cdots a_{p}}=\frac{m}{n-p+m} A^{(m)}{ }_{c a_{2} \cdots a_{p}}{}^{c b_{2} \cdots b_{m}}{ }_{\mu} \boldsymbol{\nabla}_{b_{2}} \cdots \boldsymbol{\nabla}_{b_{m}} V^{\mu} .
\end{equation} Then $\mathrm{d}\tau$ and $\alpha$ will have the same highest term derivative (namely the $m$-th symmetrized derivative). Here $n$ is the spacetime dimension in our case. One temporarily sets the Noether charge $\bm{Q}_N = \tau$, which will be updated along the way. Next one defines $\alpha^{(1)} = \alpha - \mathrm{d}\tau$, which is closed but has a higest derivative of $V$ of order $m-1$. One applies the same procedure, obtaining a $\tau^{(1)}$, and updates the `final' Noether charge by $\mathbf{Q}_N = \tau + \tau^{(1)}$. One repeats this procedure until the obtained $\alpha^{(k)}$ does not contain any derivatives of $V$ anymore; in that case one should have $\alpha^{(k)}$ is zero, as the only identically closed $p$-forms that depend linear on $V$ and not on its derivatives is the trivial zero-form (see argument below equation (11) in \cite{Wald1990}). Note that this procedure implies that for constructing $\mathbf{Q}_N$ we should only look at terms with derivatives in $V$; the terms linear in $V$ such as $i_V L$ are not used in the construction in the algorithm (but they are important in order for $\bm{J}$ to be closed on-shell of course). 

\vspace{0.5cm} 

Let us apply this algorithm to find the Noether charge in the Palatini formalism with a torsion-free connection, whose result was used in Lemma \ref{lemma:formNoerthercharge_dynconnection}. For simplicity of notation we take our spacetime to have dimension $4$, but the calculation is analogous. The current is given by (see \eqref{eq:J_symmetricPalatini}) \begin{equation} \begin{split}
    (\textbf{J})_{efg} &=\varepsilon_{a efg}2{E_{Rd}}^{cab}\nabla_b\nabla_c V^d+\text{terms linear in $V$ and $\nabla V$} \\ &= \varepsilon_{aefg}(E_{Rd}{}^{cab} + E_{Rd}{}^{bac})\nabla_{(b}\nabla_{c)}V^d + \ldots, 
\end{split}
\end{equation} leading to a Noether charge (only keeping track of the highest derivative terms in $\nabla V$):
\begin{equation}
\begin{split}
      (\bm{Q}_N)_{fg} &= \frac{2}{3} \varepsilon_{abfg}(E_{Rd}{}^{cab} + E_{Rd}{}^{bac})\nabla_c V^d + \ldots \\ &= \frac{2}{3} \varepsilon_{abfg}(E_{R}^{dcab} + E_{R}^{dbac})\nabla_c V_d + \ldots \\ &=  \frac{2}{3} \varepsilon_{abfg}(E_{R}^{dc[ab]} + \frac{E_{R}^{dbac}-E_R^{dabc}}{2})\nabla_c V_d  + \ldots \\ &=  \frac{2}{3} \varepsilon_{abfg}(E_{R}^{dc[ab]} + \frac{-E_{R}^{dacb}-E_R^{dcba}-E_R^{dabc}}{2})\nabla_c V_d  + \ldots \\ &= \frac{2}{3} \varepsilon_{abfg}(\frac{3}{2}E_{R}^{dc[ab]})\nabla_c V_d  + \ldots
      \\ &= \varepsilon_{abfg} E_R^{dcab} \nabla_c V_d + \ldots,
\end{split} 
\end{equation} where we used the antisymmetry of $\bm{\epsilon}_{abfg}$ in $a,b$ and we used the symmetries of $E_R^{abcd}$ inherited from $R^{abcd}$, namely \begin{equation}
    R^{abcd} = - R^{abdc}
\end{equation} and the first Bianchi identity (in the absence of torsion!) \begin{equation}
    R^{abcd} + R^{acdb} + R^{adbc} = 0.
\end{equation} Note that indeed for a general spacetime dimension the above prefactor of $\bm{Q}_N$ will be the same; the $m,n,p$ as in \eqref{eq:algorithm} are then the following: we are dealing with a spacetime of dimension $n$, with a $p$-form with $p = n-1$ and $\bm{J}$ has derivatives of $V$ of highest order $m = 2$, leading to a prefactor \begin{equation}
    \frac{m}{n-p+m} = \frac{2}{n-(n-1)+2} = \frac{2}{3}.
\end{equation}

\vspace{0.5cm}

Now let us apply this algorithm to calculate the Noether charge in STEGR. As explained in Remark \ref{remark:boundary_terms_STEGR} the boundary term in STEGR has the following form (see Remark \ref{remark:boundary_terms_STEGR}): \begin{equation}
    \bm{\Theta}_{efg}(\phi,\delta \phi) = \bm{\epsilon}_{aefg} \bigg( \frac{1}{2}\left(-g^{ac}\delta\Gamma^b_{bc}+g^{bc}\delta\Gamma^a_{bc}\right)+P^{abc}\delta g_{bc} \bigg),
\end{equation} where $\bm{\epsilon}$ is the totally antisymmetric Levi-Civita tensor, so that \begin{equation}
\bm{\Theta}_{efg}(\phi,\mathcal{L}_V \phi) = \bm{\epsilon}_{aefg} \bigg( \frac{1}{2}(g^{bc}\delta^a_d-g^{ac}\delta^b_d) \nabla_b \nabla_c V^d+2P^{abc} \mathcal{D}_b V_c \bigg),
\end{equation} where we used $\mathcal{L}_V \Gamma^{a}_{bc} = \nabla_b \nabla_c V^a$ and $\mathcal{L}_V g_{ab} = 2\mathcal{D}_{(a} V_{b)}$ \footnotetext{Note that this is already with constrained connection, i.e. zero torsion and zero curvature.}. As the connection is torsion- and curvature-free, we can replace $\nabla_b \nabla_c$ by $\nabla_{(b} \nabla_{c)}$ in the above expression, leading to  \begin{equation}
\bm{\Theta}_{efg}(\phi,\mathcal{L}_V \phi) = \bm{\epsilon}_{aefg} \bigg( \frac{1}{2}(g^{bc}\delta^a_d-g^{ac}\delta^b_d) \nabla_{(b} \nabla_{c)} V^d+2P^{abc} \mathcal{D}_b V_c \bigg).
\end{equation} As explained in \cite{Wald1990b}, before applying the algorithm of Lemma 1 in \cite{Wald1990b} to the above boundary terms we should still make sure that the term multiplying $\nabla_{(b} \nabla_{c)}$ is symmetrized over $b$ and $c$:  \begin{equation}
\bm{\Theta}_{efg}(\phi,\mathcal{L}_V \phi) = \bm{\epsilon}_{aefg} \bigg( \frac{1}{2}(g^{bc}\delta^a_d-g^{a(b}\delta^{c)}_d) \nabla_{(b} \nabla_{c)} V^d+2P^{abc}\mathcal{D}_b V_c \bigg).
\end{equation} Before we proceed we will split the covariant derivative $\nabla$ into the Levi-Civita derivative and a non-metricity part. The advantage of this is that for a $2$-form $\tau$ corresponding to the highest derivative term of $\bm{\Theta}$ we will not have to do any more computations to get $\mathrm{d}\tau$. Alternatively, the calculation can proceed by just working with $\nabla$, but then in calculating $\mathrm{d}\tau$ we will pick up terms contributing to lower derivative terms. Thus we will first perform the following substitution: \begin{equation}
    \begin{split}
        \nabla_{(b} \nabla_{c)} V^d &= \nabla_{(b} \mathcal{D}_{c)}V^d + \nabla_{(b} L^d{}_{c)e}V^e \\ &= \mathcal{D}_{(b} \mathcal{D}_{c)}V^d-L^{e}{}_{bc}\mathcal{D}_e V^d + L^{d}{}_{e (b} \mathcal{D}_{c)} V^e + L^d{}_{e(b} \mathcal{D}_{c)} V^e \\ &+ (\mathcal{D}_{(b} L^d{}_{c)e}-L^{m}{}_{bc}L^d{}_{me} - L^d{}_{m(b}L^m{}_{c)e})V^e.
    \end{split}
\end{equation} In this way we get \begin{equation}
    \begin{split}
        \bm{\Theta}^{(2)}_{efg} = \bm{\epsilon}_{aefg} \frac{1}{2} \bigg(g^{bc}\delta^a_d-g^{a(b}\delta^{c)}_d \bigg) \mathcal{D}_{(b} \mathcal{D}_{c)} V^d,
    \end{split}
\end{equation} for the second-order terms (here $\bm{\Theta}^{(i)}$ denotes the term in $\bm{\Theta}$ with only derivatives of $V$ of order $i$) and for the first-order terms we have 
\begin{equation}
    \begin{split}
        \bm{\Theta}^{(1)}_{efg} &= \bm{\epsilon}_{aefg} \bigg(2P^{ab}{}_d + L^{ab}{}_d - \frac{1}{2}L^{c}{}_{cd}g^{ab}-\frac{1}{2}L^{bc}{}_c \delta^a_d \bigg) \mathcal{D}_b V^d \\ &= \bm{\epsilon}_{aefg} \bigg( \frac{1}{2} \delta^b_d Q^a-\frac{1}{2} \delta^a_d Q^b-\frac{1}{2} \delta^b_d \tilde{Q}^a+\frac{1}{2} \delta^a_d \tilde{Q}^b \bigg) \mathcal{D}_b V^d \\ &= \bm{\epsilon}_{aefg}\bigg(\delta^{[b}_d (Q^{a]}-\tilde{Q}^{a]}) \bigg) \mathcal{D}_b V^d,
    \end{split}
\end{equation} where we used the expressions of $P$ and $L$ in terms of $Q$ as given in \eqref{eq:Pbca} and \eqref{eq:relateLQ}. Following the algorithm of Lemma 1 of \cite{Wald1990b} as described above, we arrive at a Noether charge \begin{equation}  \label{eq:Noethercharge_STEGR} \begin{split}
    (\bm{Q}_N)_{fg} = \frac{1}{2}\bm{\epsilon}_{abfg}\bigg( -\mathcal{D}^{a}V^b + \delta^{[b}_d (Q^{a]}-\tilde{Q}^{a]})V^d \bigg).
    \end{split}
\end{equation}
For completeness let us write down a detailed calculation: for the highest-order term to obtain the corresponding part in the Noether charge  \begin{equation}
    \begin{split}
        \bm{\Theta}^{(2)}_{efg} &= \frac{1}{2} \bm{\epsilon}_{aefg}(g^{bc}\delta^a_d-\frac{1}{2}g^{ac}\delta^b_d-\frac{1}{2}g^{ab}\delta^c_d) \mathcal{D}_{(b} \mathcal{D}_{c)}V^d \\ \rightarrow \bm{Q}_{fg} &= \frac{1}{3} \bm{\epsilon}_{abfg}(g^{bc}\delta^a_d - \frac{1}{2}g^{ac} \delta^b_d - \frac{1}{2}g^{ab}\delta^c_d)\mathcal{D}_c V^d \\ &=  \frac{1}{3} \bm{\epsilon}_{abfg}(g^{bc}\delta^a_d - \frac{1}{2}g^{ac} \delta^b_d )\mathcal{D}_c V^d \\ &= \frac{1}{3} \bm{\epsilon}_{abfg}(\frac{3}{2} g^{c[b}\mathcal{D}_c V^{a]}) \\ &= -\frac{1}{2} \bm{\epsilon}_{abfg} \mathcal{D}^{[a} V^{b]},
    \end{split}
\end{equation} where we used the antisymmetry of $\bm{\epsilon}_{abfg}$ in $a$ and $b$ several times, while for the first-order term we have \begin{equation}
    \begin{split}
        \bm{\Theta}^{(1)}_{efg} &= \bm{\epsilon}_{aefg}\bigg(\delta^{[b}_d (Q^{a]}-\tilde{Q}^{a]}) \bigg) \mathcal{D}_b V^d \\ &\rightarrow \bm{Q}_{fg} = \frac{1}{2}\bm{\epsilon}_{abfg} \bigg(\delta^{[b}_d (Q^{a]}-\tilde{Q}^{a]}) \bigg) V^d,
    \end{split}
\end{equation} leading to \eqref{eq:Noethercharge_STEGR}. Note that from this explicit expression for the Noether charge one can also see that the entropy in STEGR will be the same as in GR. 

\vspace{0.5cm}

\section{Calculation of energy from Noether charge integral equation \ref{eq:energySTEGR}}
\label{appendix:calc_energy_integral}
Here we give a more detailed calculation of the energy as the integral of the Noether charge over a spherical surface at infinity in vacuum, as given in \eqref{eq:energySTEGR}, based on p.13 of \cite{Wald1994}. We consider an asymptotically flat spacetime such that there exists a flat metric so that  \begin{equation} \label{eq:energy_metric_leadingorder} \begin{split}
 g_{\mu \nu}=\eta_{\mu \nu}+O(1 / r), \\
    \frac{\partial g_{\mu \nu}}{\partial x^{\alpha}}=O\left(1 / r^{2}\right)
\end{split}
\end{equation} in a global inertial coordinate system of $n_{ab}$. We denote the asymptotic time translation $t^a = (\partial_t)^a$ and we integrate over the 2-sphere at infinity being the limit $r\rightarrow \infty$ of the spheres $r,t =$ constant. We have \begin{equation} 
    \begin{split} 
        E &= \int_\infty \frac{1}{2}\bm{\epsilon}_{abfg}\bigg( -\mathcal{D}^{a}t^b + \delta^{[b}_d (Q^{a]}-\tilde{Q}^{a]})t^d \bigg)  \\ &= \int_\infty \sin(\theta) (d t \wedge d r \wedge d \theta \wedge d \phi)_{abcd} \bigg( \frac{1}{2} \mathcal{D}^{[b} t^{a]} +  
    \frac{1}{2}\delta^{[b}_t (Q^{a]}-\tilde{Q}^{a]}) \bigg) \\ &=  \int_\infty \sin(\theta) (d \theta \wedge d \phi)_{cd}\bigg( \mathcal{D}^{[r} (\partial_t)^{t]} +
    \delta^{[r}_t (Q^{t]}-\tilde{Q}^{t]}) \bigg) \\ &= \int_\infty \sin(\theta) (d \theta \wedge d \phi)_{cd}\bigg( \mathcal{D}^{[r} (\partial_t)^{t]} -
    \frac{1}{2}(Q^{r}-\tilde{Q}^{r}) \bigg).
    \end{split}
\end{equation} Let us now focus on the first term, which we also recognise as one-half of Komar mass. The result is also given on p. 15 of \cite{Wald1994}, but for completeness we perform the calculation here: \begin{equation} \begin{split}
     \int_{\infty} \mathrm{d}S \mathcal{D}^{[r} (\partial_t)^{t]} &=  \int_{\infty} \mathrm{d}S\frac{1}{2} \bigg(g^{rk} \{ \substack{t\\kt} \} - g^{tk} \{ \substack{r\\kt} \} \bigg) =  \int_{\infty} \mathrm{d}S\frac{1}{2} \bigg( \{ \substack{t\\rt} \} + \{ \substack{r\\tt} \} \bigg) \\ &=  \int_{\infty} \mathrm{d}S\frac{1}{2} \bigg( \frac{g^{tt}}{2} \partial_r g_{tt} + \frac{g^{rr}}{2}(2 \partial_t g_{rt} - \partial_r g_{tt}) \bigg) \\ &= \int_{\infty} \mathrm{d}S \bigg( -\frac{1}{2}\partial_r g_{tt} + \frac{1}{2}\partial_t g_{rt} \bigg),
\end{split}
\end{equation} where $\mathrm{d} S = \sin(\theta) (d \theta \wedge d \phi)_{cd}$ is the induced volume form on the sphere at infinity. Thus we finally get
\begin{equation}
    \begin{split}
        E = \int_{\infty} \mathrm{d}S \bigg( -\frac{1}{2}\partial_r g_{tt} + \frac{1}{2}\partial_t g_{rt} - \frac{1}{2}(Q^{r}-\tilde{Q}^{r}) \bigg).
    \end{split}
\end{equation} Next, let us see what this gives us in the coincident gauge, that is with $\Gamma^{a}_{bc} = 0$. In this gauge we have (considering only the leading order in $r^n$ terms (from \eqref{eq:energy_metric_leadingorder}) \begin{equation}
    Q_r-\widetilde{Q}_r = g^{fh}(\partial_r g_{fh}-\partial_f g_{rh}) = g^{tt}(\partial_r g_{tt}-\partial_t g_{rt}) + g^{ij} (\partial_r g_{ij} - \partial_i g_{rj}),
\end{equation} where $i,j$ denote spatial indices. Thus for the energy we get \begin{equation}
    E = \frac{1}{2} \int_{\infty} \mathrm{d}S g^{ij}(\partial_i g_{rj}-\partial_r g_{ij}) = M_{ADM},
\end{equation} which is equal to the ADM mass ($M_{ADM}$ denoted here) that we know from general relativity. 

\vspace{0.5cm}

\section{Boundary terms for Palatini with connection and non-zero torsion} \label{appendix:boundary_Palatini+torsion}

In Section \ref{sec:Palatini_BT_sympl} we have obtained the boundary terms in a Palatini formalism where the connection was torsion-free. Here we extend this to obtain boundary terms in a Palatini setting with dynamical metric $g$ and an independent connection $\Gamma$ (with covariant derivative $\nabla$), which can have non-zero torsion $T$ as well. We will use the Lagrangian density $\mathcal{L}$ related to the Lagrangian $n$-form by $\bm{L} = \varepsilon \mathcal{L}$. The key difference is that now in $\mathcal{L}$ a term like $\nabla_a A^a$ is not a pure boundary term, as we have \begin{equation}
    \nabla_a \alpha^a = \partial_a \alpha^a + T^a_{ab} \alpha^b,
\end{equation} for a tensor density $\alpha^a$ of weight $-1$, as explained in Appendix \ref{appendix:BT}. The general gravity action is given by
\begin{equation} \label{eq:Palatiniaction}
 \begin{split}
    S=\int d^4x \mathcal{L}\bigg(& g_{ab}, \text{ } {R^a}_{bcd}, \nabla_{a_1}R^{a}_{bcd},\text{ } \ldots, \nabla_{(a_1}\ldots \nabla_{a_k)}R^{a}_{bcd}, \text{ }  Q_{abc},  \nabla_{a_1} Q_{abc},\text{ } \ldots, \\ & \nabla_{(a_1}\ldots \nabla_{a_l)}Q_{abc}, \text{ } T^{a}{}_{bc}, \nabla_{a_1} T^{a}{}_{bc}, \text{ } \ldots, \nabla_{(a_1} \ldots \nabla_{a_m)}T^{a}{}_{bc}, \text{ }  \Psi, \text{ } \lambda\bigg) ,
\end{split}
\end{equation}
with the Lagrangian $\mathcal{L}$ an arbitrary density of weight $-1$, depending on the metric, connection, Riemann, torsion, and non-metric tensor, as well as extra dynamical fields $\Psi$ (indices of $\Psi$ are suppressed, but kept arbitrary), for example matter fields, and Lagrange multipliers $\lambda$ (this means that $\mathcal{L}$ depends only linearly on the $\lambda$'s, and does not contain derivatives of these). We also allow the Lagrangian density to depend on symmetrized $\nabla$-derivatives of the matter fields $\Psi$, but for notational simplicity we just denote this by $\Psi$ in the density $\mathcal{L}$ above. Note that derivatives of the metric are given by the non-metricity, and that in general for any linear connection terms as $\nabla_{a_1} \ldots \nabla_{a_k}$ can expressed in terms of the symmetrized derivative $\nabla_{(a_1} \ldots \nabla_{a_k)}$ plus curvature terms, torsion terms and lower order derivatives \cite{Wald1990,carroll_2019}, which is why above we only considered symmetrized derivatives in $\mathcal{L}$.

\begin{lemma} \label{lemma:palatiniboundary+torsion}
We can write a general variation of the Lagrangian density $\mathcal{L}$, defined in \eqref{eq:Palatiniaction}, as
\begin{equation} \begin{split} 
    {\delta\mathcal{L}} &=(A^{ab}_g+(T^{d}_{dc}-\nabla_c) F_Q^{cab})\delta g_{ab}\\ &+(B_{\Gamma a}^{bc}+(T^e_{ed}-\nabla_d)2 {F_{Ra}}^{cdb}+{F_{Ra}}^{cde}{T^b}_{de}+2{F_{Ta}}^{[bc]}-2F_Q^{bc}{}_a)\delta {\Gamma^a}_{bc}\\
 &+F_\Psi\delta\Psi + C_\lambda \delta\lambda\\
&+\partial_d(2{F_{Ra}}^{cdb}\delta{\Gamma^a}_{bc}+F_Q^{dab}\delta g_{ab})+\partial_a{\tilde\Theta}^a\ ,
\end{split} \end{equation}
The first two lines give the equations of motion for the metric and connection, while the third line are the equations of motion for the extra fields $\Psi$ and the constraints. The fourth line gives the desired boundary term. Since the $F_Q^{dab}\delta g_{ab}$ part of the boundary term will only contain variations of the metric, it will ultimately not contribute to the entropy, and we thus absorb it in $\tilde\Theta$. The boundary term thus is
\begin{equation} 
    \Theta^d=2{F_{Ra}}^{cdb}\delta{\Gamma^a}_{bc}+\tilde\Theta^d\ ,
\end{equation}
where the contribution $\tilde{\boldsymbol{\Theta}}$ contains only terms like $\delta\nabla_{a_1}...\nabla_{a_i}R_{abcd}$, $\delta \nabla_{a_1} Q_{abc}$ and $\delta \Psi$ etc, e.g. it contains only variations but not derivatives of variations. The functions $F_i$ are similar to the equations of motion for the curvature, torsion, and non-metricity tensor, and extra fields $\Psi$, as if they were viewed as independent fields, up to torsion terms. For example
\begin{equation} \begin{split}
    F_R^{abcd}=&\frac{\partial\mathcal{L}}{\partial R_{abcd}}+(T^d_{da_1} -\nabla_{a_1})\frac{\partial\mathcal{L}}{\partial \nabla_{a_1}R_{abcd}}\\ &+(T^d_{da_2} -\nabla_{a_2})\bigg( (T^d_{da_1}-\nabla_{a_1})\frac{\partial\mathcal{L}}{\partial \nabla_{(a_1} \nabla_{a_2)} R_{abcd}} \bigg) + ...\ .
\end{split} \end{equation} 
\end{lemma}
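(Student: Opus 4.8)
The plan is to repeat, essentially verbatim, the argument of Lemma \ref{lemma:palatiniboundary}, but now keeping careful track of the torsion terms that were previously absent. First I would vary $\mathbf{L}=\varepsilon\mathcal{L}$ term by term, obtaining
\[
\delta\mathbf{L}=\varepsilon\Big(\tfrac{\partial\mathcal{L}}{\partial g_{ab}}\delta g_{ab}+\tfrac{\partial\mathcal{L}}{\partial R_{abcd}}\delta R_{abcd}+\cdots+\tfrac{\partial\mathcal{L}}{\partial T^a{}_{bc}}\delta T^a{}_{bc}+\cdots+\tfrac{\partial\mathcal{L}}{\partial Q_{abc}}\delta Q_{abc}+\cdots+\tfrac{\partial\mathcal{L}}{\partial\Psi}\delta\Psi+\tfrac{\partial\mathcal{L}}{\partial\lambda}\delta\lambda\Big),
\]
where $\cdots$ denotes the symmetrized higher $\nabla$-derivatives of $R$, $T$, $Q$ and $\Psi$. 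The key step, and the one place the torsion enters, is the integration-by-parts identity for a vector density $\alpha^a$ of weight $-1$: instead of $\nabla_a\alpha^a=\partial_a\alpha^a$ one now has $\nabla_a\alpha^a=\partial_a\alpha^a+T^d{}_{da}\alpha^a$, equivalently $\partial_a\alpha^a=(\nabla_a-T^d{}_{da})\alpha^a$. I would therefore process each derivative term $\tfrac{\partial\mathcal{L}}{\partial\nabla_{(a_1}\cdots\nabla_{a_i)}A}\,\delta\nabla_{(a_1}\cdots\nabla_{a_i)}A$ exactly as in Lemma \ref{lemma:palatiniboundary}, pulling out one $\nabla_{a_1}$, producing (i) a genuine $\partial$-boundary piece, (ii) $\delta\Gamma$-terms from the variation of the pulled-out $\nabla_{a_1}$, and (iii) a lower-differential-order remainder $-(\nabla_{a_1}-T^d{}_{da_1})\big(\tfrac{\partial\mathcal{L}}{\partial\nabla_{(a_1}\cdots)}A\big)\delta\nabla_{a_2}\cdots A$ which feeds back into the next-lower order. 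Iterating this defines the coefficients $F_R^{abcd}$, $F_T$, $F_Q$, $F_\Psi$ of the stated form, where each $\nabla$ in the old definition of $E_R$ gets replaced by $(T^d{}_{d\,\cdot}-\nabla_{\cdot})$; this produces precisely the displayed formula for $F_R^{abcd}$.

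After collecting terms I would obtain an intermediate identity
\[
\delta\mathbf{L}=A_g^{ab}\delta g_{ab}+B_{\Gamma a}^{bc}\delta\Gamma^a{}_{bc}+F_R^{abcd}\delta R_{abcd}+F_T^{abc}\delta T_{abc}+F_Q^{abc}\delta Q_{abc}+F_\Psi\delta\Psi+C_\lambda\delta\lambda+\diff\tilde{\boldsymbol{\Theta}}
\]
and then substitute the variations of $R$, $T$, $Q$ in terms of $\delta g$ and $\delta\Gamma$. Here the torsion-free simplification of \eqref{eq:variation_of_R_Q} is no longer available: I would use $\delta R^a{}_{bcd}=2\nabla_{[c}\delta\Gamma^a{}_{d]b}$ (still valid, since the Riemann tensor's defining formula is unchanged), $\delta T^a{}_{bc}=2\delta\Gamma^a{}_{[bc]}$, and $\delta Q_{abc}=\nabla_a\delta g_{bc}-2\delta\Gamma^d{}_{a(b}g_{c)d}$. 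Integrating the $\delta R$ term by parts once more — again via $\partial=\nabla-T^d{}_{d\,\cdot}$ — yields $\nabla_d$-differentiating $F_R$ to hit the connection equation of motion, a pure $\partial$-boundary contribution $2F_{Ra}{}^{cdb}\delta\Gamma^a{}_{bc}$, and, crucially, an extra algebraic term $F_{Ra}{}^{cde}T^b{}_{de}\,\delta\Gamma^a{}_{bc}$ coming from commuting $\nabla_{[c}\nabla_{d]}$ acting on nothing — rather, from the torsion piece of the Bianchi-type rearrangement; the $\delta T$ term contributes $2F_{Ta}{}^{[bc]}$ to the connection equation of motion; and the $\delta Q$ term contributes $-2F_Q^{bc}{}_a$ plus the $F_Q^{dab}\delta g_{ab}$ boundary piece, which I absorb into $\tilde\Theta$ exactly as before. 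Matching coefficients of $\delta g_{ab}$ and $\delta\Gamma^a{}_{bc}$ gives the two displayed equations of motion, and the $\partial_d(\cdots)+\partial_a\tilde\Theta^a$ remainder is the claimed boundary term.

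The main obstacle I expect is bookkeeping rather than conceptual: correctly tracking which torsion terms land in the equations of motion versus which are genuine boundary contributions, and in particular getting the $+F_{Ra}{}^{cde}T^b{}_{de}$ term right, since it arises from a somewhat subtle interplay between the antisymmetrization in $\delta R^a{}_{bcd}=2\nabla_{[c}\delta\Gamma^a{}_{d]b}$ and the $\partial=\nabla-T$ integration by parts. I would double-check this by specializing to $T=0$ and confirming that the formula collapses to Lemma \ref{lemma:palatiniboundary}, and by checking the weight and index symmetries of $F_R$, $F_T$, $F_Q$ against those of $R$, $T$, $Q$. One should also verify that the symmetrization $\delta\nabla_{(a_1}\cdots\nabla_{a_i)}A=\nabla_{a_1}\delta\nabla_{a_2}\cdots A+(\text{terms with }\delta\Gamma)$ still holds with the torsionful $\nabla$ — it does, because the $\delta\Gamma$ terms are collected separately regardless — and that all the "$\ldots$" (non-important, lower-derivative) pieces are consistently swept into $\tilde\Theta$ as in the torsion-free case.
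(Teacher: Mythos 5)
Your overall strategy matches the paper's proof almost exactly: vary term by term, integrate by parts using $\partial_a\alpha^a=(\nabla_a-T^d{}_{da})\alpha^a$ for weight $-1$ densities, iterate to define the $F_i$'s with $-\nabla$ replaced by $(T-\nabla)$, and then convert $\delta R$, $\delta T$, $\delta Q$ into $\delta g$ and $\delta\Gamma$. However, there is one genuine error at the step where the torsion actually matters most. You assert that $\delta R^a{}_{bcd}=2\nabla_{[c}\delta\Gamma^a{}_{d]b}$ is ``still valid, since the Riemann tensor's defining formula is unchanged.'' This is false for a torsionful connection. The defining formula $R^a{}_{bcd}=2\partial_{[c}\Gamma^a{}_{d]b}+2\Gamma^a{}_{[c|e|}\Gamma^e{}_{d]b}$ is indeed unchanged, but when you rewrite $2\partial_{[c}\delta\Gamma^a{}_{d]b}$ in terms of the covariant derivative of the tensor $\delta\Gamma^a{}_{db}$, the two connection terms contracting the \emph{lower} indices of $\delta\Gamma$ no longer cancel pairwise; their antisymmetric part survives and produces exactly
\begin{equation*}
\delta R^a{}_{bcd}=2\nabla_{[c}\delta\Gamma^a{}_{d]b}+T^e{}_{cd}\,\delta\Gamma^a{}_{eb}\ ,
\end{equation*}
which is the formula the paper uses. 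The extra algebraic piece $F_{Ra}{}^{bcd}T^e{}_{cd}\delta\Gamma^a{}_{eb}$ is, after relabelling, precisely the term ${F_{Ra}}^{cde}{T^b}_{de}\,\delta\Gamma^a{}_{bc}$ in the stated connection equation of motion. Your proposed alternative mechanism for generating it (``commuting $\nabla_{[c}\nabla_{d]}$ acting on nothing \ldots{} the torsion piece of the Bianchi-type rearrangement'') does not produce it: if you start from the uncorrected identity and integrate by parts with $\partial=\nabla-T^d{}_{d\,\cdot}$, you obtain only the $(T^e{}_{ed}-\nabla_d)2F_{Ra}{}^{cdb}$ contribution and the boundary piece, and the ${F_{Ra}}^{cde}{T^b}_{de}$ term is simply missing. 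So as written your argument would prove a version of the lemma with an incorrect connection equation of motion (the boundary term itself would still come out right, since the offending term is algebraic in $\delta\Gamma$ and never reaches the boundary).

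Your own sanity checks would not catch this: specializing to $T=0$ kills the term either way, and the index symmetries are unaffected. The fix is simply to derive the variation of the Riemann tensor honestly for a general connection before substituting it into $F_R^{abcd}\delta R_{abcd}$; everything else in your proposal (the handling of $\delta T^a{}_{bc}=2\delta\Gamma^a{}_{[bc]}$, of $\delta Q$, the absorption of the $F_Q^{dab}\delta g_{ab}$ piece into $\tilde\Theta$, and the structure of the iterated integration by parts) agrees with the paper's proof.
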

\begin{proof}
Varying $\textbf{L} = \varepsilon \mathcal{L}$ gives
\begin{align}
\nonumber    \delta \textbf{L}&=\varepsilon\left(\frac{\partial \mathcal{L}}{\partial g_{ab}}\delta g_{ab}+\frac{\partial\mathcal{L}}{\partial R_{abcd}}\delta R_{abcd}+\frac{\partial\mathcal{L}}{\partial\nabla_{a_1}R_{abcd}}\delta\nabla_{a_1}R_{abcd}+...\right.\\
\nonumber\\
\nonumber    &+\frac{\partial\mathcal{L}}{\partial T_{abc}}\delta T_{abc}+\frac{\partial\mathcal{L}}{\partial\nabla_{a_1}T_{abc}}\delta\nabla_{a_1}T_{abc}+...\\
\nonumber\\
\nonumber &+\frac{\partial\mathcal{L}}{\partial Q_{abc}}\delta Q_{abc}+\frac{\partial\mathcal{L}}{\partial\nabla_{a_1}Q_{abc}}\delta\nabla_{a_1}Q_{abc}+...\\
\nonumber\\
 &\left.+\frac{\partial\mathcal{L}}{\partial \Psi}\delta \Psi+\frac{\partial\mathcal{L}}{\partial\nabla_{a_1}\Psi}\delta\nabla_{a_1}\Psi+...+\frac{\partial\mathcal{L}}{\partial \lambda}\delta\lambda\right)\ ,
\end{align}
where $...$ stands for variations with respect to higher-order derivatives of $R,T,Q$ or $\Psi$. The variation of $\sqrt{-g}$ is included in the first term. In order to treat the terms containing derivatives of Riemann and non-metricity tensors, we compute for any tensor $A$
\begin{equation}
\begin{split}
\frac{\partial\mathcal{L}}{\partial \nabla_{(a_1}...\nabla_{a_i)}A}&\delta \nabla_{(a_1}...\nabla_{a_i)} A \text{ } = \text{ }  \frac{\partial\mathcal{L}}{\partial \nabla_{(a_1}...\nabla_{a_i)}A} \nabla_{a_1}\delta \nabla_{a_2}...\nabla_{a_i}A + \text{terms with }\delta \Gamma\\
    =\nabla_{a_1}&\left(\frac{\partial\mathcal{L}}{\partial \nabla_{(a_1}...\nabla_{a_i)}A}\delta\nabla_{a_2}...\nabla_{a_i}A\right)+ \text{terms with }\delta \Gamma \\ &-\nabla_{a_1}\left(\frac{\partial\mathcal{L}}{\partial \nabla_{(a_1}...\nabla_{a_i)}A}\right)\delta \nabla_{a_2}...\nabla_{a_i}A \\ = \partial_{a_1}&\left(\frac{\partial\mathcal{L}}{\partial \nabla_{(a_1}...\nabla_{a_i)}A}\delta\nabla_{a_2}...\nabla_{a_i}A\right) + T^{d}_{d a_1} \left(\frac{\partial\mathcal{L}}{\partial \nabla_{(a_1}...\nabla_{a_i)}A}\delta\nabla_{a_2}...\nabla_{a_i}A\right)  \\ &+ \text{terms with }\delta \Gamma
     -\nabla_{a_1}\left(\frac{\partial\mathcal{L}}{\partial \nabla_{(a_1}...\nabla_{a_i)}A}\right)\delta \nabla_{a_2}...\nabla_{a_i}A \\ =  \partial_{a_1}&\left(\frac{\partial\mathcal{L}}{\partial \nabla_{(a_1}...\nabla_{a_i)}A}\delta\nabla_{a_2}...\nabla_{a_i}A\right)  + \text{terms with }\delta \Gamma \\ &+
     (T^d_{da_1}-\nabla_{a_1})\left(\frac{\partial\mathcal{L}}{\partial \nabla_{(a_1}...\nabla_{a_i)}A}\right)\delta \nabla_{a_2}...\nabla_{a_i}A \\ = \partial_{a_1}&\left(\frac{\partial\mathcal{L}}{\partial \nabla_{(a_1}...\nabla_{a_i)}A}\delta\nabla_{a_2}...\nabla_{a_i}A\right)  + \text{terms with }\delta \Gamma \\ &+
     \partial_{a_2}\bigg((T^d_{da_1}-\nabla_{a_1})\left(\frac{\partial\mathcal{L}}{\partial \nabla_{(a_1}...\nabla_{a_i)}A}\right)\delta \nabla_{a_3}...\nabla_{a_i}A \bigg) \\ &+ (T^d_{da_2}-\nabla_{a_2}) \bigg( (T^d_{da_1}-\nabla_{a_1})\left(\frac{\partial\mathcal{L}}{\partial \nabla_{(a_1}...\nabla_{a_i)}A}\right) \bigg) \delta \nabla_{a_3}...\nabla_{a_i}A  ,
\end{split}
\end{equation}
where the terms with $\delta\Gamma$ arise from the variation of the first covariant derivative $\nabla_{a_1}$ that we pulled out. Each of the terms containing $\delta{\Gamma^a}_{bc}$ will contribute to the connection equation of motion. The first and third term in the last expression will contribute to the boundary term. Iterating this procedure until all variations of derivatives are put in the boundary terms and performing this procedure for all derivative terms in $\delta\textbf{L}$, collecting all terms of each of the variations one finds
\begin{equation}
\label{eq:Palatinivariation}
    \delta\textbf{L}=A^{ab}_g \delta g_{ab}+B_{\Gamma a}^{bc} \delta \Gamma^a_{bc}+F_R^{abcd}\delta R_{abcd}+F_T{}_a{}^{bc}\delta T^a{}_{bc}+F_Q^{abc}\delta Q_{abc}+ F_\Psi\delta\Psi + C_\lambda \delta\lambda+d\tilde{\boldsymbol{\Theta}} ,
\end{equation} where the contribution $\tilde{\boldsymbol{\Theta}}$ contains only terms like $\delta\nabla_{a_1}...\nabla_{a_i}R_{abcd}$, $\delta \nabla_{a_1} Q_{abc}$ and $\delta \Psi$ etc, e.g. it contains only variations but not derivatives of variations. The functions $F_i$ are then similar to equations of motion for the curvature, torsion, and non-metricity tensor, and extra fields $\Psi$, as if they were viewed as independent fields, up to torsion factors; for example
\begin{equation} \begin{split}
    E_R^{abcd}=&\frac{\partial\mathcal{L}}{\partial R_{abcd}}+(T^d_{da_1} -\nabla_{a_1})\frac{\partial\mathcal{L}}{\partial \nabla_{a_1}R_{abcd}}\\ &+(T^d_{da_2} -\nabla_{a_2})\bigg( (T^d_{da_1}-\nabla_{a_1})\frac{\partial\mathcal{L}}{\partial \nabla_{(a_1} \nabla_{a_2)} R_{abcd}} \bigg) + ...\ .
\end{split} \end{equation} 
The functions $C_\lambda$ are the constraints coming from the Lagrange multipliers. Note that $A_g$ also contains the terms coming from variations of factors containing the determinant of the metric $\sqrt{-g}$, which we have not made explicit in \eqref{eq:Palatiniaction}. \\
Finally, we have that the variations appearing in \eqref{eq:Palatinivariation} can be writtten as variations of the metric and connection as
\begin{align}
    \delta {R^a}_{bcd}&=2\nabla_{[c}\delta{\Gamma^a}_{d]b} + {T^e}_{cd}\delta{\Gamma^a}_{eb}\\
    \delta {T^a}_{bc}&=2\delta {\Gamma^a}_{[bc]}\\
    \delta Q_{abc}&= \nabla_a\delta g_{bc}-2\delta {\Gamma^{d}}_{a(b}g_{c)d}\ .
\end{align}  The term $\delta T$ will thus only contribute to the connection equation of motion, as well as the $\delta\Gamma$ pieces from $\delta Q$ and $\delta R$. The only contributions to the boundary term come from $\delta R$ and the metric variation in $\delta Q$. Hence
\begin{equation} \label{eq:varL5.1} \begin{split}
{\delta\mathcal{L}} &=(A^{ab}_g+(T^{d}_{dc}-\nabla_c) F_Q^{cab})\delta g_{ab}\\ &+(B_{\Gamma a}^{bc} +(T^e_{ed}-\nabla_d)2 {F_{Ra}}^{cdb}+{F_{Ra}}^{cde}{T^b}_{de}+2{F_{Ta}}^{[bc]}-2F_Q^{bc}{}_a)\delta {\Gamma^a}_{bc}\\
 &+F_\Psi\delta\Psi + C_\lambda \delta\lambda\\
&+\partial_d(2{F_{Ra}}^{cdb}\delta{\Gamma^a}_{bc}+F_Q^{dab}\delta g_{ab})+\partial_a{\tilde\Theta}^a\ ,
\end{split} \end{equation}
where we used the symmetries of the $F_i$'s, inherited from the objects through which they are defined. The first and second line give the equations of motion for the metric and connection, while the third line are the equations of motion for the extra fields $\Psi$ and the constraints. The fourth line gives the desired boundary term. Since the $F_Q$ part of the boundary term only contains a variation of the metric, it will ultimately not contribute to the entropy (see Remark \ref{remark:importantboundary}), and we thus absorb it in $\tilde\Theta$. The boundary term then is
\begin{equation}
    \Theta^a=2{F_{Rd}}^{cab}\delta{\Gamma^d}_{bc}+\tilde\Theta^a\ ,
\end{equation}
or in form language
\begin{equation}
    (\boldsymbol{\Theta})_{a_2 \ldots a_n}=\varepsilon_{a a_2 \ldots a_n} 2{F_{Rd}}^{cab}\delta{\Gamma^d}_{bc}+(\tilde{\boldsymbol{\Theta}})_{a_2 \ldots a_n}\ .
\end{equation}

\end{proof}

\begin{remark}
Following the algorithm as explained in Appendix \ref{appendix:Noethercharge_STEGR}, the Noether charge in this setting will have the form (writing only the highest-order derivatives) \begin{equation} \begin{split}
    (\bm{Q}_N)_{fg} &= \frac{2}{3} \varepsilon_{abfg}(F_{R}^{dcab} + F_{R}^{dbac})\nabla_c V_d + \ldots.
    \end{split}
    \end{equation}
\end{remark}

\vspace{0.5cm}

\section{Hodge dual and boundary terms} \label{appendix:BT}

When considering variations of actions we will be interested in boundary terms, and in this process we will make use of the Hodge duality, so let us revise the basics of Hodge dualities. Let $M$ denote an $n$-dimensional manifold with metric $g$, then the hodge duality is a bijection map between $p$-forms and $(n-p)$-forms. Having an inner product on every tangent space $T_pM$ from the metric $g$, we can define an inner product on the $k$-forms on $\Omega^k(M)$ on $M$ in the following way. For two $k$-forms $\bm{\alpha}$ and $\bm{\beta}$ of the form $\bm{\alpha} = \bm{\alpha}_1 \wedge \bm{\alpha}_2 \wedge \ldots \wedge \bm{\alpha}_k, \bm{\beta} = \bm{\beta}_1 \wedge \bm{\beta}_2 \wedge \ldots \wedge \bm{\beta}_k$, the inner product between $\bm{\alpha}$ and $\bm{\beta}$ is defined by \begin{equation}
    \langle \bm{\alpha}, \bm{\beta} \rangle = \det\big{(} \langle \bm{\alpha}_i, \bm{\beta}_j \rangle \big{)}_{i,j = 1}^k,
\end{equation}  and can be extended to an inner product on all $k$-forms by linearity. On every tangent space $T_p M$ one can define the unit volume form $\bm{\omega}$ in terms of an orthonormal basis $e_1, e_2, \ldots, e_n$ by $\bm{\omega} = e_1 \wedge e_2 \wedge \ldots \wedge e_n$. On a manifold with metric we can define such a volume form or $n$-form  (which will be called the Levi-Civita tensor) also in the following way. Let $\varepsilon$ denote the totally antisymmetric symbol, then the Levi-Civita tensor is defined by $\bm{\epsilon} = \sqrt{-g}\varepsilon$ and is an $n$-form. The map from $p$-forms on $M$ to $(n-p)$-forms on $M$ is often called the Hodge star operator and is denoted by $\star$:\footnotetext{Here we have denoted the space of $k$-forms on $M$ by $\Omega^k(M)$.} \begin{equation} \begin{split}
    &\star: \Omega^p(M)  \rightarrow \Omega^{n-p}(M): \bm{\alpha} \mapsto \star \bm{\alpha}, \\ & \text{ defined by   }  \;\; \bm{\beta} \wedge (\star \bm{\alpha}) = \langle \bm{\beta}, \bm{\alpha} \rangle \bm{\epsilon} \text{ for all } \bm{\beta} \in \Omega^p(M)
\end{split}
\end{equation} 
Using the Levi-Civita tensor we can also write this out in components, such that the Hodge star operator becomes the map \begin{equation}
    \star: \Omega^p(M)  \rightarrow \Omega^{n-p}(M): \bm{\alpha} \mapsto (\star \bm{\alpha})_{\mu_1, \ldots, \mu_{n-p}} = \frac{1}{p!}\epsilon^{\nu_1,\ldots,\nu_p}{}_{\mu_1,\ldots\mu_{n-p}} \bm{\alpha}_{\nu_1,\ldots,\nu_p}.
\end{equation}

\vspace{0.5cm}

An action $S$ can now be formulated as an integral over a Lagrangian, a function of the fields $\mathbf{L}[\phi]$ which, for every field configuration $\phi$, is an $n$-form on $M$. Using the Hodge duality we can also work with a scalar function $\star{\mathbf{L}}$ on $M$ as the Lagrangian: \begin{equation}
    S = \int_M \mathbf{L}[\phi]  = \int_M  (\star{\mathbf{L}[\phi]}) \epsilon = \int_M \mathrm{d}^4 x \sqrt{-g} (\star{\mathbf{L}[\phi]}).
\end{equation} In integrals, we will call a term a pure boundary term if it is an exact form and can thus be written as an integral over the boundary $\partial M$ using Stokes theorem, that is, if it is of the form \begin{equation}
    \int_M \mathrm{d}\mathbf{\Theta} = \int_{\partial M} \mathbf{\Theta},
\end{equation} which, using the Hodge dual $\mathbf{\Theta} = \star A$,  can be written as \begin{equation}
    \int_M \mathrm{d}\mathbf{\Theta} = \int_M \mathrm{d}^4 x \sqrt{-g} \mathcal{D}_a A^a = \int_M \mathrm{d}^4 x \partial_a (\sqrt{-g} A^a).
\end{equation} The covariant derivative for a general connection $\left(\nabla_{c} A\right)^{a_{1} \ldots a_{r}}{}_{b_{1} \ldots b_{s}}$ of a tensor density $A$ is equal to the covariant derivative of the density as if it was a tensor, plus a term $+W\Gamma_{c d }^{d} A^{a_{1} \cdots a_{r}}{}_{b_{1} \ldots b_{s}}$ with $W$ the weight of the density \cite{Heisenberg_2019}. As $\sqrt{-g}A^a = \alpha^a$ here is a vector density of weight $-1$, we have \begin{equation}
    \nabla_a \alpha^a = \partial_a \alpha^a + \Gamma^a_{ab} \alpha^b - \Gamma^b_{ab} \alpha^a = \partial_a \alpha^a + T^a_{ab} \alpha^b,
\end{equation}  which is why we can express a pure boundary term also as
\begin{equation}
     \int_M \mathrm{d}\mathbf{\Theta} = \int_M \mathrm{d}^4 x \sqrt{-g} \mathcal{D}_a A^a = \int_M \mathrm{d}^4 x \partial_a (\sqrt{-g} A^a) = \int_M \mathrm{d}^4 x \nabla_a (\sqrt{-g} A^a),
\end{equation} in terms of a torsion-free covariant derivative $\nabla$.

\vspace{0.5cm}

\end{appendices}

\newpage

\bibliography{ref}

\begin{thebibliography}{15}
\expandafter\ifx\csname natexlab\endcsname\relax\def\natexlab#1{#1}\fi
\expandafter\ifx\csname bibnamefont\endcsname\relax
  \def\bibnamefont#1{#1}\fi
\expandafter\ifx\csname bibfnamefont\endcsname\relax
  \def\bibfnamefont#1{#1}\fi
\expandafter\ifx\csname citenamefont\endcsname\relax
  \def\citenamefont#1{#1}\fi
\expandafter\ifx\csname url\endcsname\relax
  \def\url#1{\texttt{#1}}\fi
\expandafter\ifx\csname urlprefix\endcsname\relax\def\urlprefix{URL }\fi
\providecommand{\bibinfo}[2]{#2}
\providecommand{\eprint}[2][]{\url{#2}}

\bibitem[{\citenamefont{Jimenez et~al.}(2019)\citenamefont{Jimenez, Heisenberg,
  and Koivisto}}]{jimenez2019}
\bibinfo{author}{\bibfnamefont{J.~B.} \bibnamefont{Jimenez}},
  \bibinfo{author}{\bibfnamefont{L.}~\bibnamefont{Heisenberg}},
  \bibnamefont{and} \bibinfo{author}{\bibfnamefont{T.~S.}
  \bibnamefont{Koivisto}}, \emph{\bibinfo{title}{The geometrical trinity of
  gravity}} (\bibinfo{year}{2019}).

\bibitem[{\citenamefont{Wald}(1993)}]{Wald1993}
\bibinfo{author}{\bibfnamefont{R.~M.} \bibnamefont{Wald}},
  \bibinfo{journal}{Phys. Rev. D} \textbf{\bibinfo{volume}{48}},
  \bibinfo{pages}{R3427} (\bibinfo{year}{1993}),
  \urlprefix\url{https://link.aps.org/doi/10.1103/PhysRevD.48.R3427}.

\bibitem[{\citenamefont{Lee and Wald}(1990)}]{Wald1990}
\bibinfo{author}{\bibfnamefont{J.}~\bibnamefont{Lee}} \bibnamefont{and}
  \bibinfo{author}{\bibfnamefont{R.~M.} \bibnamefont{Wald}},
  \bibinfo{journal}{Journal of Mathematical Physics}
  \textbf{\bibinfo{volume}{31}}, \bibinfo{pages}{725} (\bibinfo{year}{1990}),
  \urlprefix\url{https://doi.org/10.1063/1.528801}.

\bibitem[{\citenamefont{Iyer and Wald}(1994)}]{Wald1994}
\bibinfo{author}{\bibfnamefont{V.}~\bibnamefont{Iyer}} \bibnamefont{and}
  \bibinfo{author}{\bibfnamefont{R.~M.} \bibnamefont{Wald}},
  \bibinfo{journal}{Phys. Rev. D} \textbf{\bibinfo{volume}{50}},
  \bibinfo{pages}{846} (\bibinfo{year}{1994}),
  \urlprefix\url{https://link.aps.org/doi/10.1103/PhysRevD.50.846}.

\bibitem[{\citenamefont{Iyer and Wald}(1995)}]{Wald1995}
\bibinfo{author}{\bibfnamefont{V.}~\bibnamefont{Iyer}} \bibnamefont{and}
  \bibinfo{author}{\bibfnamefont{R.~M.} \bibnamefont{Wald}},
  \bibinfo{journal}{Phys. Rev. D} \textbf{\bibinfo{volume}{52}},
  \bibinfo{pages}{4430} (\bibinfo{year}{1995}),
  \urlprefix\url{https://link.aps.org/doi/10.1103/PhysRevD.52.4430}.

\bibitem[{\citenamefont{Jimenez et~al.}(2018)\citenamefont{Jimenez, Heisenberg,
  and Koivisto}}]{Heisenberg18}
\bibinfo{author}{\bibfnamefont{J.}~\bibnamefont{Jimenez}},
  \bibinfo{author}{\bibfnamefont{L.}~\bibnamefont{Heisenberg}},
  \bibnamefont{and} \bibinfo{author}{\bibfnamefont{T.}~\bibnamefont{Koivisto}},
  \bibinfo{journal}{Journal of Cosmology and Astroparticle Physics}
  \textbf{\bibinfo{volume}{2018}}, \bibinfo{pages}{039} (\bibinfo{year}{2018}).

\bibitem[{\citenamefont{Heisenberg}(2019)}]{Heisenberg_2019}
\bibinfo{author}{\bibfnamefont{L.}~\bibnamefont{Heisenberg}},
  \bibinfo{journal}{Physics Reports} \textbf{\bibinfo{volume}{796}},
  \bibinfo{pages}{1} (\bibinfo{year}{2019}),
  \urlprefix\url{http://dx.doi.org/10.1016/j.physrep.2018.11.006}.

\bibitem[{\citenamefont{Jimenez et~al.}(2020)\citenamefont{Jimenez, Heisenberg,
  and Koivisto}}]{Heisenberg2020}
\bibinfo{author}{\bibfnamefont{J.~B.} \bibnamefont{Jimenez}},
  \bibinfo{author}{\bibfnamefont{L.}~\bibnamefont{Heisenberg}},
  \bibnamefont{and} \bibinfo{author}{\bibfnamefont{T.}~\bibnamefont{Koivisto}},
  \emph{\bibinfo{title}{The coupling of matter and spacetime geometry}}
  (\bibinfo{year}{2020}), \eprint{2004.04606}.

\bibitem[{\citenamefont{Jim\'enez et~al.}(2018)\citenamefont{Jim\'enez,
  Heisenberg, and Koivisto}}]{HeisenbergL18}
\bibinfo{author}{\bibfnamefont{J.~B.} \bibnamefont{Jim\'enez}},
  \bibinfo{author}{\bibfnamefont{L.}~\bibnamefont{Heisenberg}},
  \bibnamefont{and} \bibinfo{author}{\bibfnamefont{T.}~\bibnamefont{Koivisto}},
  \bibinfo{journal}{Phys. Rev. D} \textbf{\bibinfo{volume}{98}},
  \bibinfo{pages}{044048} (\bibinfo{year}{2018}),
  \urlprefix\url{https://link.aps.org/doi/10.1103/PhysRevD.98.044048}.

\bibitem[{\citenamefont{Pooley}(2015)}]{pooley2015}
\bibinfo{author}{\bibfnamefont{O.}~\bibnamefont{Pooley}},
  \emph{\bibinfo{title}{Background independence, diffeomorphism invariance, and
  the meaning of coordinates}} (\bibinfo{year}{2015}), \eprint{1506.03512}.

\bibitem[{\citenamefont{Beltr\'an~Jim\'enez
  et~al.}(2018)\citenamefont{Beltr\'an~Jim\'enez, Heisenberg, and
  Koivisto}}]{BeltranJimenez:2017tkd}
\bibinfo{author}{\bibfnamefont{J.}~\bibnamefont{Beltr\'an~Jim\'enez}},
  \bibinfo{author}{\bibfnamefont{L.}~\bibnamefont{Heisenberg}},
  \bibnamefont{and} \bibinfo{author}{\bibfnamefont{T.}~\bibnamefont{Koivisto}},
  \bibinfo{journal}{Phys. Rev. D} \textbf{\bibinfo{volume}{98}},
  \bibinfo{pages}{044048} (\bibinfo{year}{2018}), \eprint{1710.03116}.

\bibitem[{\citenamefont{Carroll}(2019)}]{carroll_2019}
\bibinfo{author}{\bibfnamefont{S.~M.} \bibnamefont{Carroll}},
  \emph{\bibinfo{title}{Spacetime and Geometry: An Introduction to General
  Relativity}} (\bibinfo{publisher}{Cambridge University Press},
  \bibinfo{year}{2019}).

\bibitem[{\citenamefont{Wald}(1990)}]{Wald1990b}
\bibinfo{author}{\bibfnamefont{R.~M.} \bibnamefont{Wald}},
  \bibinfo{journal}{Journal of Mathematical Physics}
  \textbf{\bibinfo{volume}{31}}, \bibinfo{pages}{2378} (\bibinfo{year}{1990}),
  \urlprefix\url{https://doi.org/10.1063/1.528839}.

\bibitem[{\citenamefont{Racz and Wald}(1992)}]{Racz_1992}
\bibinfo{author}{\bibfnamefont{I.}~\bibnamefont{Racz}} \bibnamefont{and}
  \bibinfo{author}{\bibfnamefont{R.~M.} \bibnamefont{Wald}},
  \bibinfo{journal}{Classical and Quantum Gravity}
  \textbf{\bibinfo{volume}{9}}, \bibinfo{pages}{2643} (\bibinfo{year}{1992}),
  \urlprefix\url{https://doi.org/10.1088/0264-9381/9/12/008}.

\bibitem[{\citenamefont{Jacobson et~al.}(1994)\citenamefont{Jacobson, Kang, and
  Myers}}]{Jacobson_1994}
\bibinfo{author}{\bibfnamefont{T.}~\bibnamefont{Jacobson}},
  \bibinfo{author}{\bibfnamefont{G.}~\bibnamefont{Kang}}, \bibnamefont{and}
  \bibinfo{author}{\bibfnamefont{R.~C.} \bibnamefont{Myers}},
  \bibinfo{journal}{Physical Review D} \textbf{\bibinfo{volume}{49}},
  \bibinfo{pages}{6587} (\bibinfo{year}{1994}),
  \urlprefix\url{http://dx.doi.org/10.1103/PhysRevD.49.6587}.

\end{thebibliography}

\end{document}